\newcommand{\Nn}{\mathbb N}
\newcommand{\Aa}{\mathcal A}
\DeclareMathOperator{\Ff}{Fact}
\DeclareMathOperator{\card}{card}
\DeclareMathOperator{\PER}{PER}
\DeclareMathOperator{\PAL}{PAL}
\DeclareMathOperator{\CH}{CH}
\DeclareMathOperator{\Lynd}{Lynd}
\DeclareMathOperator{\St}{St}
\DeclareMathOperator{\Stand}{Stand}
\theoremstyle{plain}
\newtheorem{theorem}{Theorem}[section]
\newtheorem{proposition}[theorem]{Proposition}
\newtheorem{lemma}[theorem]{Lemma}
\newtheorem{corollary}[theorem]{Corollary}
\newtheorem{remark}[theorem]{Remark}
\newtheorem{example}[theorem]{Example}
\title{Sturmian words and the Stern sequence}
\author{Aldo de Luca $^{1}$\and Alessandro De Luca $^{2}$}
\date{}
\begin{document}
\bibliographystyle{plain}
\maketitle

\footnotetext[1]{Dipartimento di Matematica e Applicazioni 
Universit\`a degli Studi di Napoli Federico II,
via Cintia, Monte S.~Angelo
I-80126 Napoli, Italy, e-mail: \texttt{aldo.deluca@unina.it}}

\footnotetext[2]{DIETI,
Universit\`a di Napoli Federico II, via Claudio 21, 
I-80125 Napoli, Italy, e-mail: \texttt{alessandro.deluca@unina.it}}

\begin{abstract}
Central, standard, and Christoffel words are three strongly interrelated classes of binary finite words  which represent a finite counterpart of characteristic Sturmian words. A natural arithmetization of the theory is obtained by representing central
and Christoffel words by irreducible fractions labeling respectively two  binary trees, the Raney (or Calkin-Wilf) tree  and the Stern-Brocot tree. The sequence of denominators of the fractions in Raney's tree is the famous  Stern diatomic  numerical sequence. An interpretation of the terms $s(n)$
of Stern's sequence as lengths of Christoffel words when $n$ is odd,  and as  minimal periods of central words when $n$ is even, allows one to interpret  several results on Christoffel and central words in terms of Stern's sequence and, conversely, to obtain a new insight in the combinatorics of Christoffel and central words by using properties of Stern's sequence. 
One of our main results is a non-commutative version of  the  ``alternating bit sets theorem'' by Calkin and Wilf.
We also study the length distribution of Christoffel words corresponding to nodes of equal height in the tree, obtaining some interesting bounds and inequalities.

\vspace{2 mm}

\noindent {\em Key words.} Sturmian words, Central words, standard words, Christoffel words, Stern sequence, Raney tree.

\vspace{2 mm}

\noindent{\em 2010 MSC} : 68R15, 11B37

\end{abstract}

\section{Introduction}
Sturmian words are of great interest in combinatorics of infinite words since they are the most simple words
which are not ultimately periodic.  Since the seminal paper of 1940 by  Marston Morse and Gustav A. Hedlund \cite{MH}, there is a large literature on this  subject
(see, for instance, \cite[Chap.~2]{LO2}). Sturmian words can be defined in many different ways, of combinatorial 
or geometric nature.

In the theory a key role is played by characteristic (or standard) Sturmian words which can be
generated in several different ways and, in particular,  by a palindromization map $\psi$, introduced by the first author in \cite{deluca},
which maps injectively each finite word $v$ into a palindrome (cf.~Section \ref{sec:palmap}). The map $\psi$ can be naturally extended to infinite words. 
 In such a case if $v$ is any infinite binary word in which all letters occur infinitely often, one generates all characteristic Sturmian words. An infinite word is Sturmian if it has the same set of finite factors of a characteristic Sturmian word.
 The set of all  $\psi(v)$, with $v$ any  finite word on the binary alphabet $ {\cal A}=\{a, b\}$,  coincides with the set of  palindromic prefixes of all characteristic Sturmian words \cite{DM, deluca}.

 The words $\psi(v)$, called central, may be also defined in a purely combinatorial way as the set of all words having two coprime
 periods periods $p$ and $q$  such that the length $|\psi(v)|= p+q-2$. Central words $\psi(v)$   are strongly related \cite{deluca, BDL} to proper finite standard words which may be defined as $\psi(v)xy$ with $x,y\in {\cal A}$ and to proper Christoffel words  $a\psi(v)b$.
 
 Central, standard, and Christoffel words are considered in Section \ref{sec:three}. They represent  a finite counterpart of characteristic Sturmian words of great interest since there exist several faithful representations of the preceding words by trees, binary matrices, and continued fractions \cite{BDL}. These representations give a natural arithmetization of the theory. Some new results are proved at the end of  the section.
 
 As regards trees,
 we mainly refer in Section \ref{sec:four} to  the Raney tree. The  tree is a complete binary tree rooted at the fraction $\frac{1}{1}$ and any rational number  represented in a node as the irreducible fraction $\frac{p}{q}$ has  two children representing  the numbers $\frac{p}{p+q}$ and $\frac{p+q}{q}$. Every positive rational number appears exactly once in the tree. This tree is usually named in the literature the Calkin-Wilf tree  after Neil Calkin and Herbert Wilf, who considered it in their 2000 paper \cite{CW}.  However, the tree was introduced earlier by Jean Berstel and the first author \cite{BDL} as Raney tree, since they drew some ideas from a paper by George N. Raney \cite{Ra}. 
 
 The fraction $Ra(w)$ in the node of Raney's tree represented by the binary word $w$ is equal to the ratio $\frac{p}{q}$ of the periods of the central word $\psi(w)$, where  $p$ (resp., $q$) is the minimal period of  $\psi(w)$ if $w$ terminates with the letter $a$ (resp., $b$).
 
 Another very important tree which can be considered as dual of Raney tree is the Stern-Brocot tree (see, for instance, \cite{EL, GKP}). One
 can prove (cf.~\cite{BDL}) that the fraction $Sb(w)$ in the node $w$ of the Stern-Brocot tree is equal to the slope $\frac{|a\psi(w)b|_b}{|a\psi(w)b|_a}$ of the Christoffel word $a\psi(w)b$. The duality is due to the fact that  $$Sb(w)= Ra(w^{\sim}).$$
 
 The sequence formed by the denominators of the fractions labeling the Raney tree is the famous diatomic sequence introduced in 1858
 by  Moritz  A. Stern \cite{Stern}.  There exists a large literature on this sequence, that we shall simply refer to as Stern's sequence,  since its terms  admit   interpretations in several parts of combinatorics and  satisfy many surprising and beautiful properties (see, for instance, \cite{Leh, North, CW, CW1, CS, ADLReut,IU} and references therein). 
 
  In this paper we are mainly interested in the  properties of Stern's sequence which are related  to combinatorics of Christoffel and central words. In Section \ref{sec:five},  using some properties of Raney's tree,  we prove  that there exists a basic correspondence (cf.~Theorem  \ref{lem:SS}) between the values of Stern's sequence on {\em odd} integers and the lengths of Christoffel words as well as a correspondence between the values of the sequence on {\em even} integers and the minimal periods of central words. Thus there exists a strong relation between Sturmian words and Stern's sequence which strangely, with the only exception of  \cite{ADLReut},  has not been observed in the literature. 
  
   As a consequence of the previous correspondence several results on Stern's sequence can be proved by using the theory of Sturmian words and, conversely, properties of Stern's sequence can give a new insight in the combinatorics of Christoffel and central words.
   
   In Section \ref{sec:six} we show that one can compute the terms of Stern's sequence by continuants in two different ways. The first
   uses a result concerning the length of a Christoffel word $a\psi(v)b$ and  the minimal period of the central word $\psi(v)$ which can be expressed in terms of  continuants operating on the integral representation of the directive word $v$. The second is of a more arithmetical nature and uses known results on Stern's sequence.
   
   Section \ref{sec:seven} is devoted to a very interesting and unpublished theorem  of Calkin and Wilf on Stern's sequence \cite[Theorem 5]{CW1}. The  Calkin-Wilf theorem states that $s(n)$ represents for each $n$ the number of  ``alternating bit sets"  in $n$, i.e., the number of occurrences  of  subsequences (subwords) in the binary representation of $n$ belonging to the set $b(ab)^*$. We give two new proofs of the Calkin-Wilf theorem which are based on the combinatorics of Christoffel words. We also give a formula allowing to compute
   the length of the Christoffel word $a\psi(v)b$ in terms of the number occurrences of subwords  $u\in b(ab)^*$ in $bvb$. Moreover,  the minimal period of $\psi(v)$ equals the number of occurrences $u\in b(ab)^*$ in $bv_+b$, where $v_+$ is the longest
   prefix of $v$ immediately followed by a letter different from the last letter of $v$. A further formula shows that  $|a\psi(v)b|_a$ is equal to the number of initial occurrences of subwords  $u\in b(ab)^*$ in $bvb$.
   
   The main result of the section is  a theorem (cf.~Theorem~\ref{thm:occur}) showing the quite surprising result that for any  $w\in\Aa^{*}$, 
if we  consider the reversed  occurrences of words of the set $b(ab)^{*}$ as subwords in $bwb$, then  sorting these in decreasing lexicographic order, and marking the reversed   initial occurrences with $a$ and the reversed  non-initial ones with $b$, one yields the standard word
$\psi(w)ba$. This can be regarded as a non-commutative version of the Calkin-Wilf theorem.

In Section \ref{sec:eight}  we shall prove a formula (cf.~Theorem \ref{thm:cs2}) relating for each $w\in {\cal A}^*$ the length of the Christoffel word $a\psi(w)b$ with the  occurrences in $bwb$ of a certain kind of  factors whose number is weighted by the lengths of Christoffel words associated to suitable  directive words which are factors of $w$. The result is a consequence of an  interesting theorem on Stern's sequence due to Michael Coons and Jeffrey Shallit \cite{CS}.

In Section \ref{sec:nine} we study the distribution of the lengths of Christoffel words $a\psi(v)b$ of order $k$, i.e., the directive word $v$ has a
fixed length $k$. Using a property of Stern's sequence we show that the average value of the length is  $2(3/2)^k$. Moreover,  the maximal value
 given by $F_{k+1}$, where $(F_k)_{k\geq -1}$ is the Fibonacci numerical sequence, is reached if and only if $v$ is alternating, i.e., any letter in $v$ is immediately followed in $v$ by its complementary. 
 
  One of the main results of the section (cf.~Theorem~\ref{prop:noalt}) is that if $v\in {\cal A}^k$, with $k\geq 3$  is not alternating, then  $|a\psi(v)b| \leq F_{k+1}-F_{k-4}$, where
 the upper bound is reached if and only if  $v$ is an almost  alternating word. From this some  identities on Stern's sequence are obtained.
 Moreover, the number of missing lengths  for $k\geq 3$ has the lower bound $F_{k-4}$, so that it is exponentially increasing with $k$.
 Finally, we consider for each $k$ the maximal value $M_k$ of the number of Christoffel words of order $k$ having the same length.
 We prove that $M_k$ has a lower bound which is  exponentially increasing  with $k$.

\section{Preliminaries and notation}\label{sec:two}
Let $A$ be a finite non-empty alphabet and $A^*$ be the \emph{free monoid} generated by $A$. 
The elements of $A$ are usually called \emph{ letters} and those of $A^*$ \emph {words}. 
The identity
element of $A^*$  is called   \emph{empty word } and denoted by $\varepsilon$. We shall set $A^+=A^*\setminus \{\varepsilon\}$.
A  word $w\in A^+$ can be written uniquely as a sequence $w=w_1w_2\cdots w_n$, with $w_i\in A$, $i=1,\ldots,n$. The integer $n$ is called the \emph{length} of $w$ and is denoted by $|w|$. The length of $\varepsilon$ is conventionally $0$.

Let $w\in A^*$. A word $v$ is a \emph{factor} of $w$ if there exist words $r$ and $s$ such that $w=rvs$; $v$ is a \emph{proper} factor if $v\neq w$. 
If $r= \varepsilon$ (resp., $s=\varepsilon$), then $v$ is called a \emph{prefix} (resp.,~a \emph{suffix}) of $w$. 
If $w= rvs$, then $|r|+1$ is called an {\em occurrence} of the factor $v$ in $w$.
The number of all distinct occurrences of $v$ in $w$ is denoted by $|w|_v$. 

 A word $v=v_1v_2\cdots v_m$, $v_i\in A$, $i=1,\ldots,m$,  is a \emph{subword} of $w=w_1w_2\cdots w_n$ if there exists an $m$-tuple $(j_1, j_2, \ldots, j_m)$ such that  $$1\leq j_1<j_2< \cdots <j_m\leq n \ \mbox{and} \ v_h= w_{j_h},  \mbox {for all} \ \ h=1,2, \ldots,m.$$ Any $m$-tuple $(j_1, j_2, \ldots, j_m)$ for which the previous condition is satisfied is called an {\em occurrence} of the   subword $v$ in $w$.  We shall represent such an occurrence   also as a word
 $j_1j_2\cdots j_m$ on the alphabet $\{1,2, \ldots , n\}$. An occurrence  is said to be {\em initial} (resp., {\em final}) if
 $j_1=1$ (resp., $j_m= n$).   A factor $v$ of $w$ is trivially a subword of  $w$, whereas the converse is not in general true. The number of all distinct occurrences of the subword $v$ in $w$  is usually denoted by $\binom{w}{v}$
 and called the {\em binomial coefficient} of $w$ and $v$ (see  \cite[Chap. 6]{LO})).

 Let $w=w_1\cdots w_n$, $w_i\in A$, $1\leq i\leq n$.  The
\emph{reversal} of $w$ is the word $w^{\sim}= w_n\cdots w_1$.  One
defines also $\varepsilon^{\sim}=\varepsilon$.  A word is called
\emph{palindrome} if it is equal to its reversal.  We let
$\PAL(A)$, or simply $\PAL$, denote the set of all palindromes on the
alphabet $A$.

 Let $p$ be a positive integer.  A word $w=w_1\cdots w_n$, $w_i\in A$,
$1\leq i\leq n$, has \emph{period} $p$ if the following condition is
satisfied: for all integers $i$ and $j$ such that $1\leq i,j\leq n$,
$$\mbox{if }i\equiv j \pmod{p} , \mbox{ then } w_i = w_j.$$
We let $\pi(w)$ denote the {\em minimal period}  of $w$.  In the
sequel, we set $\pi(\varepsilon)=1$.
A word $w$ is said to be {\em constant} if  $\pi(w)=1$, i.e.,  $w=z^k$ with $k\geq 0$ and $z\in A$.

An \emph{infinite word} (from left to right) $w$ is just an infinite sequence of letters:
$$w=w_1w_2\cdots w_n\cdots \text{ where }w_i\in A,\,\text{ for all } i\geq 1\enspace.$$
 A (finite) \emph{factor} of  $w$ is either the empty word or any sequence $u=w_i\cdots w_j$ with $i\leq j$, i.e., a finite block of consecutive letters in $w$. If $i=1$, then $u$ is a \emph{prefix} of $w$;  for any $n$  we let $w_{[n]}$ denote  its prefix of length $n$, i.e., $w_{[n]}=w_1\cdots w_n$. The set of all infinite words over $A$ is denoted by $A^{\omega}$. The set of all factors of a finite or infinite word $w$ is denoted by  $\Ff w$.
 
 In the following we shall mainly concern with two-letter  alphabets. We let ${\cal A}$ denote the alphabet whose elements
 are the letters $a$ and $b$ that we shall identify respectively with  the digits $0$ and $1$; moreover, we totally order  ${\cal A}$ by setting $a<b$. We let  $(^-)$ denote  the automorphism of ${\cal A}^*$ defined
 by ${\bar a} =  b$ and ${\bar b} =  a$. For each $w\in {\cal A}^*$, the word $\bar w$ is called the {\em complementary} word, or simply the {\em complement},
 of $w$.

For each $w\in {\cal A}^*$ let $\langle w \rangle_2$, or simply  $\langle w \rangle$, denote the {\em standard interpretation} of $w$ as an integer at base $2$;
 conversely, for each integer $n\geq 0$, we let $[n]_2$ denote the {\em expansion of n at base} $2$. For instance, $\langle a\rangle=0$, $\langle b\rangle =0$, $\langle baaba \rangle =18$,
 and $[21]_2 = babab$.
 
 \subsection{The palindromization map}\label{sec:palmap}

 We  introduce in
${\cal A}^*$ the operator  $^{(+)} : {\cal A}^*\rightarrow \PAL$ which
maps any word $w\in {\cal A}^*$ to the word  $w^{(+)}$
defined as the shortest palindrome having the prefix $w$ (cf.~\cite{deluca}).  The palindrome
$w^{(+)}$ is called the \emph{right palindromic closure of} $w$.  If $Q$ is the
longest palindromic suffix of $w= uQ$, then one has
$ w^{(+)}=uQu^{\sim}\,.$
Let us now define the map
\[ \psi: {\cal A}^*\rightarrow \PAL ,\]
called  \emph{palindromization map} over ${\cal A}^*$,  as follows: $\psi(\varepsilon)=\varepsilon $ and for all
$u\in {\cal A} ^*$, $x\in {\cal A} $,
\[
	\psi(ux)=(\psi(u)x)^{(+)}\,.
\]
For instance, if $u=aba$, one has $\psi(a)=a$, $\psi(ab)= (\psi(a)b)^{(+)}= aba$, and $\psi(aba)= (abaa)^{(+)}= abaaba$.

The following proposition summarizes some simple but noteworthy  properties of  the palindromization map
 (cf., for instance, \cite{DJP, deluca}):
\begin{proposition}\label{prop:basicp} Let  $\psi$ be the palindromization map over ${\cal A}^*$.  
For $u,v\in {\cal A}^*$ the following  hold:
\begin{itemize}
\item[P1.]  If $u$ is  a prefix of  $v$, then $\psi(u)$ is a palindromic prefix (and suffix) of $\psi(v)$.
\item[P2.] If $p$ is a prefix of $\psi(v)$, then $p^{(+)}$ is a prefix of $\psi(v)$.
\item[P3.] Every palindromic prefix of $\psi(v)$ is of the form $\psi(u)$ for some prefix $u$ of $v$.
\item[P4.] The   palindromization map  is  injective.
\item[P5.]  $|\psi(u^{\sim})|= |\psi(u)|$.
\item[P6.] $\psi(\bar u) = \overline {\psi(u)}$.
\end{itemize}
\end{proposition}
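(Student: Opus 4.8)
The plan is to prove all six statements by induction on word length, using repeatedly the defining recursion $\psi(ux)=(\psi(u)x)^{(+)}$ and the fact that $z^{(+)}$ is the \emph{shortest} palindrome having $z$ as a prefix. For \emph{P1} it suffices to treat the passage from $u$ to $ux$, $x\in\Aa$, the general case following by transitivity of the prefix relation. By definition $\psi(ux)=(\psi(u)x)^{(+)}$ has $\psi(u)x$, hence $\psi(u)$, as a prefix; and if the palindrome $\psi(u)=r$ is a prefix of the palindrome $\psi(ux)=rt$, then $rt=(rt)^{\sim}=t^{\sim}r$ shows that $r$ is also a suffix. The same computation gives $|\psi(ux)|\ge|\psi(u)x|>|\psi(u)|$, so along a chain of prefixes $\varepsilon=v_{[0]},v_{[1]},\dots,v_{[n]}=v$ the lengths $\ell_i:=|\psi(v_{[i]})|$ strictly increase; this is used repeatedly below.

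For \emph{P3} I would induct on $|v|$, the case $v=\varepsilon$ being trivial. If $v=wx$ and $P$ is a palindromic prefix of $\psi(v)=(\psi(w)x)^{(+)}$, then either $|P|\le|\psi(w)|$, in which case $P$ is a palindromic prefix of $\psi(w)$ and the inductive hypothesis applies, or $|P|>|\psi(w)|$, in which case $P$ has the length-$(|\psi(w)|+1)$ prefix $\psi(w)x$ of $\psi(v)$ as a prefix and, being a palindrome with $\psi(w)x$ as a prefix, has length at least $|\psi(v)|$, which forces $P=\psi(v)=\psi(wx)$. Then \emph{P4} is immediate: if $\psi(u)=\psi(v)$ with $|u|\ge|v|$, then for this common word equal to $\varepsilon$ we get $u=v=\varepsilon$, while otherwise, by P3 and the strict growth of the $\ell_i$, its longest proper palindromic prefix is $\psi(u')$ with $u'$ obtained from $u$ by deleting the last letter, and equally it is $\psi(v')$; hence $\psi(u')=\psi(v')$ and $u'=v'$ by induction, and since the letter in position $|\psi(u')|+1$ of the common word is at the same time the last letter of $u$ and of $v$, we conclude $u=v$.

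For \emph{P6}, complementation is a length-preserving automorphism of $\Aa^{*}$ leaving $\PAL$ invariant, hence it commutes with right palindromic closure, $(\bar z)^{(+)}=\overline{z^{(+)}}$, and an induction gives $\psi(\overline{ux})=(\psi(\bar u)\bar x)^{(+)}=(\overline{\psi(u)}\,\bar x)^{(+)}=(\overline{\psi(u)x})^{(+)}=\overline{\psi(ux)}$. Property \emph{P5} I would get from facts recalled in the Introduction: the Christoffel word $a\psi(v)b$ has length $|\psi(v)|+2$, which, the fraction $Sb(v)$ being irreducible, equals the sum of its numerator and denominator; since $Sb(v)=Ra(v^{\sim})$ and $Ra(v^{\sim})$ in lowest terms has numerator-plus-denominator equal to the sum of the two coprime periods of the central word $\psi(v^{\sim})$, i.e.\ $|\psi(v^{\sim})|+2$, we obtain $|\psi(v)|+2=|\psi(v^{\sim})|+2$. (Alternatively P5 can be extracted from Justin's formula for the palindromization of a product.)

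The real content is in \emph{P2}, which I expect to be the main obstacle. I would induct on $|v|$, writing $v=ux$. If $p$ is a prefix of $\psi(v)$ with $|p|\le|\psi(u)|$, then $p$ is a prefix of $\psi(u)$ and the inductive hypothesis gives $p^{(+)}$ as a prefix of $\psi(u)$, hence of $\psi(v)$; and if $|p|=|\psi(u)|+1$, then $p=\psi(u)x$ and $p^{(+)}=\psi(ux)=\psi(v)$. The remaining case is $|\psi(u)|+1<|p|<|\psi(v)|$. Write $\psi(u)x=sQ$ with $Q$ the longest palindromic suffix of $\psi(u)x$, so that $\psi(v)=sQs^{\sim}$; then $p=sQc$ for a proper nonempty prefix $c$ of $s^{\sim}$, say $c=(s'')^{\sim}$ with $s''$ the corresponding suffix of $s$. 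The palindrome $s''Q(s'')^{\sim}$ is then a suffix of $p$, whence, writing $Q_p$ for the longest palindromic suffix of $p$, $|p^{(+)}|=2|p|-|Q_p|\le 2|p|-(2|s''|+|Q|)=|\psi(v)|$; so if $s''Q(s'')^{\sim}$ is in fact the \emph{longest} palindromic suffix of $p$, then $|p^{(+)}|=|\psi(v)|$ and therefore $p^{(+)}=\psi(v)$ by uniqueness of the shortest palindrome with a given prefix. The crux is thus to show that no palindromic suffix of $p$ reaches further to the left than the displayed occurrence of $Q$: a hypothetical longer one would be of the form $s_1Q(s'')^{\sim}$ with $s_1$ a suffix of $s$ having $s''$ as a proper suffix, and its palindromicity would force $s''$ to be a border of $s_1$ and, after a Fine--Wilf-type analysis, would produce a palindromic suffix of $\psi(u)x=sQ$ strictly longer than $Q$, contradicting the choice of $Q$. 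I expect this to be the only genuinely laborious step. (Incidentally, P2 together with P3 amounts to the characterization of central words as the palindromes that are closed under taking the right palindromic closures of their prefixes.)
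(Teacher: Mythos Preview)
The paper does not actually prove Proposition~\ref{prop:basicp}; it only cites \cite{DJP,deluca}. So there is no ``paper's proof'' to compare against, and your write-up must stand on its own.

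Your arguments for P1, P3, P4 and P6 are correct and standard. Your outline for P2 is also correct, and the step you flag as ``Fine--Wilf-type'' is in fact simpler than that name suggests: if $R=tQ(s'')^{\sim}$ is a palindromic suffix of $p$ with $t$ a suffix of $s$ and $|t|>|s''|$, then $R=R^{\sim}$ reads $s''Qt^{\sim}=tQ(s'')^{\sim}$; comparing prefixes gives $t=s''r$ for some nonempty $r$, and cancelling $s''$ on the left and $(s'')^{\sim}$ on the right yields $Qr^{\sim}=rQ$, i.e.\ $rQ$ is a palindrome. But $rQ$ is a suffix of $sQ=\psi(u)x$ of length $|Q|+|r|>|Q|$, contradicting the maximality of $Q$. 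So no periodicity lemma is needed. (Also, your appeal to ``uniqueness of the shortest palindrome with a given prefix'' is justified here because $|p|>|s|+|Q|\geq|\psi(v)|/2$, so any palindrome of length $|\psi(v)|$ with prefix $p$ is determined.)

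The one genuine concern is P5. Your main argument invokes the identities $Sb(v)=Ra(v^{\sim})$ and the slope/period interpretations of $Sb$ and $Ra$; in this paper those appear as Lemma~\ref{SBT} and Proposition~\ref{prop:raney} in Section~\ref{sec:four}, \emph{after} Proposition~\ref{prop:basicp} has been stated and used. Even if the proofs in \cite{BDL} happen not to rely on P5, within the present paper's logical order this is a forward reference, and P5 is in turn used (via the Proposition) in the proofs of several results in Sections~\ref{sec:five}--\ref{sec:nine}. Your parenthetical ``alternatively via Justin's formula'' is the right instinct but needs to be made concrete. One clean self-contained route: the incidence matrix of $\mu_{x}$ satisfies $M_{a}^{T}=M_{b}$, so $M_{v^{\sim}}=(M_{\bar v})^{T}$; since the entry sum of a matrix equals that of its transpose, $p_{a}(v^{\sim})+p_{b}(v^{\sim})=p_{a}(\bar v)+p_{b}(\bar v)$, and then P6 gives $|\psi(\bar v)|=|\psi(v)|$, whence $|\psi(v^{\sim})|=|\psi(v)|$ by~\eqref{eq:cent1}. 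I would replace the tree-duality argument by this, or by an equivalent direct induction.
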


 For any  $w\in \psi({\cal A}^*)$ the unique word  $u$ such that  $\psi(u)=w$ is called the \emph{directive word} of $w$.
 
For any $x\in {\cal A}$ let  $\mu_x$ denote  the injective endomorphism of ${\cal A}^*$
$$\mu_x: {\cal A}^*\rightarrow {\cal A}^*$$ 
defined by
\begin{equation}\label{eq:endo}
\mu_x(x)=x, \ \ \mu_x(y)= xy, \, \, \mbox{for}  \, \,  y\in {\cal A}\setminus \{x\} .
\end{equation}
If $v=x_1x_2\cdots x_n$, with $x_i\in {\cal A}$, $i=1,\ldots, n$, then we set:
$$ \mu_v=\mu_{x_1}\circ \cdots \circ \mu_{x_n}; $$
moreover, if $v=\varepsilon$  then  $\mu_{\varepsilon}$= id.

The following interesting theorem, proved by Jacques  Justin \cite{J} in the case of an arbitrary alphabet, relates the palindromization map to
morphisms $\mu_v$.

\begin{theorem}\label{thm:J} For all  $ v,u\in {\cal A}^*$,
$$ \psi(vu) = \mu_v(\psi(u))\psi(v).$$
In particular, if $x\in{\cal A}$, one has
$$\psi(xu)= \mu_x(\psi(u))x \ \ \mbox{and} \ \ \psi(vx)=\mu_v(x)\psi(v).$$
\end{theorem}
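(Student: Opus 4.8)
I would prove this in two stages: first reduce the general identity to the case of a single directive letter, namely
\[
  \psi(xu)=\mu_x(\psi(u))\,x\qquad(x\in\mathcal A,\ u\in\mathcal A^{*}),
\]
which is exactly the first ``in particular'' assertion; then prove that single-letter identity by induction on $|u|$. For the reduction one argues by induction on $|v|$: the case $v=\varepsilon$ is trivial, and if $v=v'x$ with $x\in\mathcal A$, then, using $\mu_v=\mu_{v'}\circ\mu_x$,
\[
  \psi(vu)=\psi(v'xu)=\mu_{v'}\bigl(\psi(xu)\bigr)\psi(v')=\mu_{v'}\bigl(\mu_x(\psi(u))\,x\bigr)\psi(v')=\mu_v(\psi(u))\cdot\mu_{v'}(x)\,\psi(v'),
\]
where the second equality is the inductive hypothesis applied to the directive word $v'$ and the word $xu$, and the third is the single-letter identity. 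Since $\mu_{v'}(x)\psi(v')=\mu_{v'}(\psi(x))\psi(v')=\psi(v'x)=\psi(v)$ (again the inductive hypothesis, now with the word $x$, as $\psi(x)=x$), we obtain $\psi(vu)=\mu_v(\psi(u))\psi(v)$; this also gives the second ``in particular'' formula.

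For the single-letter identity I would induct on $|u|$. The case $u=\varepsilon$ is $\psi(x)=\mu_x(\varepsilon)x=x$. For $u=u'z$, $z\in\mathcal A$, the definition of $\psi$ and the inductive hypothesis give
\[
  \psi(xu)=\bigl(\psi(xu')\,z\bigr)^{(+)}=\bigl(\mu_x(\psi(u'))\,x\,z\bigr)^{(+)},
\]
while the target is $\mu_x(\psi(u))\,x=\mu_x\bigl((\psi(u')z)^{(+)}\bigr)x$. The main tool is the commutation identity
\[
  \mu_x\bigl(w^{(+)}\bigr)\,x=\bigl(\mu_x(w)\,x\bigr)^{(+)}\qquad(w\in\mathcal A^{*}),
\]
which I would prove by analysing how $\mu_x$ acts on a factorization $w=pQ$ with $Q$ the longest palindromic suffix of $w$, or alternatively by checking that $\bigl(\mu_x(w)x\bigr)^{(+)}$ is a palindrome having $\mu_x(w)x$ as a prefix and of the expected length $|\mu_x(w^{(+)})|+1$, via $|\mu_x(t)|=|t|+|t|_{\bar x}$. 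Applying it with $w=\psi(u')z$ turns the target into $\bigl(\mu_x(\psi(u')z)\,x\bigr)^{(+)}=\bigl(\mu_x(\psi(u'))\,\mu_x(z)\,x\bigr)^{(+)}$, and one splits on the last letter $z$. If $z=x$ then $\mu_x(z)=x$ and the target equals $\bigl(\mu_x(\psi(u'))\,x\,x\bigr)^{(+)}$, which is exactly the expression already obtained for $\psi(xu)$. If $z=\bar x$ then $\mu_x(z)=x\bar x$, and writing $P=\mu_x(\psi(u'))x=\psi(xu')$ it remains to establish the ``stability''
\[
  \bigl(P\bar x\bigr)^{(+)}=\bigl(P\bar x\,x\bigr)^{(+)},
\]
i.e.\ that the right palindromic closure of $P\bar x$ already has $P\bar x x$ as a prefix. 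Here I would use that $P=\psi(xu')$ is central: by P3 the palindromic prefixes of $\psi(xu'\bar x)=(P\bar x)^{(+)}$ are the $\psi(e)$ with $e$ a prefix of $xu'\bar x$, so the only one longer than $P$ is $\psi(xu'\bar x)$ itself; being a palindrome, $\psi(xu'\bar x)$ therefore ends with $P$, and locating the letter at position $|P|+2$ through this structure — peeling off, via P1--P2, the reflected tail created by the closure, which reduces the question to the analogous statement for a shorter directive word — shows that letter to be the first letter of $P$, namely $x$.

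The step I expect to be the genuine obstacle is precisely this last one (the subcase $z=\bar x$): the stability $(P\bar x)^{(+)}=(P\bar x x)^{(+)}$ is \emph{false} for a general palindrome $P$ — e.g.\ $(abbab)^{(+)}=abbabba\neq abbababba=(abbaba)^{(+)}$ — and genuinely relies on $P$ being central (the two–coprime–periods structure, which is available here through Proposition~\ref{prop:basicp}). I would therefore carry out that verification entirely in the language of iterated palindromic closure and palindromic prefixes rather than manipulating periods directly. Everything else — the two inductions, the bookkeeping with $\mu_v=\mu_{v'}\circ\mu_x$, the commutation identity, and extracting the two ``in particular'' consequences — is routine.
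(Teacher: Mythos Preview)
The paper does not prove Theorem~\ref{thm:J}; it simply attributes the result to Justin~\cite{J}. So there is no ``paper's own proof'' to compare against, and your task is really to produce a self-contained argument. Your overall plan---reduce to the single-letter identity $\psi(xu)=\mu_x(\psi(u))x$ by induction on $|v|$, then prove that identity by induction on $|u|$ via the commutation relation $(\mu_x(w)x)^{(+)}=\mu_x(w^{(+)})x$---is the standard one and is correct. The commutation relation holds for \emph{every} $w\in\Aa^{*}$; the cleanest proof is to observe that $(\mu_x(t)x)^{\sim}=\mu_x(t^{\sim})x$, whence $s\mapsto\mu_x(s)x$ gives an order-preserving bijection between the palindromic suffixes of $w$ and those of $\mu_x(w)x$, so the longest one on the right is $\mu_x(Q)x$ for $Q$ the longest palindromic suffix of $w$.

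Your identification of the stability $(P\bar x)^{(+)}=(P\bar xx)^{(+)}$ with $P=\psi(xu')$ as the only genuine obstacle is accurate, and your ``reduce to a shorter directive word'' idea does work, but the sketch you give is too vague to count as a proof. Concretely: writing $(P\bar x)^{(+)}=\alpha Q\alpha^{\sim}$ with $Q$ the longest palindromic suffix of $P\bar x$, one must show $\alpha^{L}=x$; using the palindromicity of $P$ and $(P\bar x)^{(+)}$ this is equivalent to $P_{|R'|+2}=x$, where $R'=\psi(e)$ is the longest palindromic prefix of $P$ followed by $\bar x$ (here $e$ is the longest proper prefix of $xu'$ with $(xu')_{|e|+1}=\bar x$, by P3 and the definition of $\psi$). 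Since $e\neq\varepsilon$ begins with $x$, one has $|\psi(e\bar x)|\geq |R'|+2$, so the letter in question lies inside the prefix $\psi(e\bar x)$ of $P$ and equals the letter at position $|\psi(e)|+2$ of $(\psi(e)\bar x)^{(+)}$. This is the same statement for the shorter word $e$, so an inner induction (with base $e=x$, where $\psi(x\bar x)=x\bar xx$) finishes the job. You should spell this out; the phrase ``peeling off the reflected tail'' does not by itself constitute an argument, and your counterexample $P=abba$ shows that the claim genuinely relies on $P\in\psi(x\Aa^{*})$.
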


The palindromization map
  $\psi$ can be extended  to ${\cal A}^{\omega}$ as follows: let  $w\in {\cal A}^{\omega}$ be an infinite word
\[ w = w_1w_2\cdots w_n\cdots, \ \  \ w_i\in {\cal A}, \ i\geq 1.\]
By property P1 of  Proposition \ref{prop:basicp},  for all $n$, $\psi(w_{[n]})$ is a prefix of  $\psi(w_{[n+1]})$, so one  can define  the infinite word $\psi(w)$ as:
\[ \psi(w) = \lim_{n\rightarrow \infty} \psi(w_{[n]}).\]
The extended map $\psi: {\cal A}^{\omega}\rightarrow {\cal A}^{\omega}$ is injective. The word $w$ is called the \emph{directive word} of $\psi(w)$. 

As proved in \cite{deluca} an infinite word $s\in {\cal A}^{\omega}$ is a {\em characteristic Sturmian word}  if and only if $s= \psi(w)$ with $w\in {\cal A}^{\omega}$ such that  each letter $x\in {\cal A}$ occurs infinitely often in $w$.
An infinite word $s\in {\cal A}^{\omega}$ is called {\em  Sturmian}  if there exists a characteristic  Sturmian word $t$ such that $\Ff s = \Ff t$.\footnote{ If one extends  the action of palindromization map to infinite words over arbitrary finite alphabets, one can generate a wider class of words, called {\em standard  episturmian}, introduced in \cite{DJP}. 
 Some further extensions and generalizations of the palindromization map  are in \cite{adlADL, adlADL1, KREU}.}

 \begin{example}  {\em If $w = (ab)^{\omega}$, then the characteristic  Sturmian word $f=\psi( (ab)^{\omega})$ having the directive word $w$ is the famous \emph{Fibonacci word}
 \[ f = abaababaabaab\cdots\] }
 \end{example}

  \section{Central, standard, and Christoffel words}\label{sec:three}

In the combinatorics of Sturmian words a key role is played by three classes of finite words
 called {\em central, standard}, and {\em Christoffel words}.  They are closely interrelated and  satisfy remarkable structural properties.

A word $w$ is called central if $w$ has  two
periods $p$ and $q$ such that $\gcd(p,q)=1$ and $|w|= p+q-2$. 
The set of central words, usually denoted by $\PER$,  was introduced in~\cite{DM} where its main properties
were studied; in particular, it has been proved that $\PER$ is equal
to the set of the palindromic prefixes of all characteristic Sturmian words, i.e.,
$ \PER = \psi({\cal A}^*).$
 
There exist several  different characterizations of central words (see, for instance \cite{JB} and the references therein).
We recall here the following noteworthy  structural characterization~\cite{deluca,CdL}:

\begin{proposition}
	\label{Prop:uno}
	A word $w$ is central if and only if  it  is constant
 or  satisfies the equation
	$$w=w_1abw_2=w_2baw_1$$
	with $w_{1},w_{2}\in {\cal A}^*$.  Moreover, in this latter case, $w_1$
	and $w_2$ are uniquely determined central words, $p=|w_1|+2$ and $q=|w_2|+2$ are
	coprime periods of $w$, and $\min\{p,q\}$ is the minimal period of
	$w$.
\end{proposition}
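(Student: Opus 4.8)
The plan is to prove the two implications separately, with the Fine--Wilf theorem in its sharp form --- a word of length at least $p+q-\gcd(p,q)$ with periods $p$ and $q$ has period $\gcd(p,q)$ --- as the main engine, together with the facts already recorded above that $\PER=\psi(\Aa^{*})\subseteq\PAL$ and with property P3 of Proposition~\ref{prop:basicp}. The point to exploit is that a central word has length $|w|=p+q-2=p+q-\gcd(p,q)-1$, sitting exactly one letter below the Fine--Wilf threshold. The constant case is disposed of at once: if $w=z^{k}$ with $z\in\Aa$ then $w$ has period $1$ and, vacuously, period $|w|+1$, these are coprime, and $|w|=1+(|w|+1)-2$; so in what follows $w$ is assumed non-constant.

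For the implication ``if'', suppose $w=w_{1}abw_{2}=w_{2}baw_{1}$. Comparing prefixes of equal length forces $|w_{1}|\neq|w_{2}|$; set $p=|w_{1}|+2$ and $q=|w_{2}|+2$, so that $|w|=p+q-2$ is automatic. The first factorization exhibits $w_{1}$ as a prefix of $w$ and the second as a suffix, whence $w$ has period $q$; symmetrically $w$ has period $p$. If $d=\gcd(p,q)\geq2$ then $|w|\geq p+q-d$, so Fine--Wilf gives that $w$ has period $d$; but $|w_{1}|\equiv|w_{2}|\equiv-2\pmod d$, so the positions $|w_{1}|+1$ and $|w_{2}|+1$ are congruent modulo $d$ while carrying the distinct letters $a$ and $b$, a contradiction. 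Hence $\gcd(p,q)=1$ and $w$ is central. Moreover $w_{1}$ and $w_{2}$, being prefixes of $w$ that coincide with suffixes of $w$, are palindromic prefixes of the central word $w$, hence belong to $\PER=\psi(\Aa^{*})$ by property P3; and a routine Fine--Wilf estimate (using $|w|=p+q-2$) rules out any period of $w$ below $\min\{p,q\}$, so that $\pi(w)=\min\{p,q\}$.

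For the implication ``only if'', let $w$ be central and non-constant. A short Fine--Wilf argument first shows that the unique pair of coprime periods of $w$ summing to $|w|+2$ is $\{\pi(w),\,|w|+2-\pi(w)\}$ and that $\pi(w)$ is the smaller; write $p=\pi(w)<q=|w|+2-p$. I would then take $w_{1}$ and $w_{2}$ to be the borders of $w$ of lengths $p-2$ and $q-2$ (which exist because $w$ has periods $q$ and $p$), obtaining $w=w_{1}c_{1}c_{2}w_{2}=w_{2}d_{1}d_{2}w_{1}$, where $c_{1}c_{2}$ (resp.\ $d_{1}d_{2}$) are the two letters of $w$ in positions $p-1,p$ (resp.\ $q-1,q$). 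Since a central word is a palindrome and its borders $w_{1},w_{2}$ are then palindromes too, reversing the first factorization yields $d_{1}d_{2}=c_{2}c_{1}$. The crux is to show $c_{1}\neq c_{2}$: if $c_{1}=c_{2}=z$ then both interior blocks equal $zz$, and, after checking that $w_{2}$ is itself central with coprime periods $p$ and $q-p$ (hence strictly shorter than $w$), an induction on $|w|$ would force the interior block of the factorization of $w_{2}$ also to be $zz$, contradicting the inductive conclusion that it must be $ab$ or $ba$; the degenerate subcases $w_{1}=\varepsilon$ or $w_{2}$ constant are handled separately, both collapsing $w$ to a constant word. Thus $\{c_{1}c_{2},d_{1}d_{2}\}=\{ab,ba\}$, and after possibly swapping the names of $w_{1}$ and $w_{2}$ one gets $w=w_{1}abw_{2}=w_{2}baw_{1}$; that $w_{1},w_{2}$ are central follows again from their being palindromic prefixes of $w$, and the claims about $p,q,\pi(w)$ follow from the ``if'' analysis applied to this factorization.

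Finally, uniqueness of $(w_{1},w_{2})$ would come for free: the ``if'' step shows $\{|w_{1}|+2,|w_{2}|+2\}$ is a pair of coprime periods of $w$ summing to $|w|+2$, hence equals $\{\pi(w),|w|+2-\pi(w)\}$; and since exactly one of the two candidate positions for the interior factor $ab$ actually carries $ab$ (the other carrying $ba$), the lengths $|w_{1}|$ and $|w_{2}|$, and with them $w_{1}=\Pre_{|w_{1}|}(w)$ and $w_{2}$, are determined. The step I expect to be the real obstacle is exactly this exclusion of $c_{1}=c_{2}$ in the ``only if'' direction --- proving that the two length-two defects produced by the two periods are genuinely the digrams $ab$ and $ba$, not $aa$ or $bb$ --- since it forces one to track the delicate interplay of the two periods; it is also the point at which an alternative route, inducting on the length of the directive word $v$ of $w=\psi(v)$ and using Justin's formula (Theorem~\ref{thm:J}) to peel off its first or last letter, would be worth weighing against the Fine--Wilf approach.
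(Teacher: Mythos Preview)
The paper does not prove Proposition~\ref{Prop:uno}; it is quoted as a known structural characterization with references to~\cite{deluca,CdL}, and no argument is supplied. There is therefore no ``paper's own proof'' to compare your proposal against.

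As for the proposal itself, the overall strategy is sound and would yield a complete proof once the flagged step is written out. Your ``if'' direction is clean: the Fine--Wilf contradiction for $\gcd(p,q)\geq 2$ is correct, and the appeal to $\PER=\psi(\Aa^{*})$ together with P3 legitimately shows $w_{1},w_{2}\in\PER$ (you are right that borders of a palindrome are palindromes). The minimal-period claim also follows from Fine--Wilf as you indicate: if $r<\min\{p,q\}$ were a period, then applying Fine--Wilf to the pairs $\{r,p\}$ and $\{r,q\}$ forces $r\mid p$ and $r\mid q$, hence $r=1$.

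For the ``only if'' direction, your induction for $c_{1}\neq c_{2}$ does go through, though your sketch undersells it. Assuming $c_{1}=c_{2}=z$ and $p<q$, you correctly observe that $w_{2}$ (the prefix of $w$ of length $q-2$) is central with coprime periods $p$ and $q-p$; the key point you should make explicit is that, when $q\geq p+2$, positions $p-1$ and $p$ of $w_{2}$ coincide with positions $p-1$ and $p$ of $w$, which are both $z$. But the inductive hypothesis gives $w_{2}=u_{1}abu_{2}=u_{2}bau_{1}$ with $\{|u_{1}|,|u_{2}|\}=\{p-2,\,q-p-2\}$, so one of the two factorizations places the digram $ab$ or $ba$ exactly at positions $p-1,p$ of $w_{2}$ --- contradicting $zz$. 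The residual cases $q=p+1$ (forcing $w_{2}$ constant) and $w_{2}$ constant both collapse $w$ to a power of a single letter, as you say. So the obstacle you anticipated is real but surmountable along the very line you propose; Justin's formula is not needed.
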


Another important family of finite words is the class of finite standard words. In fact,  characteristic  Sturmian words can be equivalently defined in the following
way.  Let $c_1,\ldots,c_n,\ldots$ be any sequence of
integers such that $c_1\geq 0$ and $c_i>0$ for $i>1$.  We define,
inductively, the sequence of words $(s_n)_{n\geq -1}$, where
$$s_{-1}=b,\ s_0=a, \ \mbox{ and } \ 
	s_{n}=s_{n-1}^{c_{n}}s_{n-2} \ \mbox{ for } \ n\geq 1\,.$$
Since for any $n\geq 0$, $s_n$ is a proper prefix of $s_{n+1}$, the sequence $(s_n)_{n\geq -1}$ converges to a limit $s$ which is a
characteristic Sturmian word (cf.~\cite{LO2}).  Any characteristic
Sturmian word is obtained in this way. 
The Fibonacci word is obtained
when  $c_i=1$ for  all $i\geq 1$. 

We shall denote by $\Stand$
the set of all the words $s_n$, $n\geq -1$, of any sequence
$(s_n)_{n\geq -1}$.  Any word of $\Stand$ is called \emph{finite standard  word}, or
simply {\em standard word}.
The following noteworthy  relation exists \cite{DM} between standard and central
words:
$$ \Stand = {\cal A} \cup \PER \{ab, ba\}.$$
More precisely, the following holds (see, for instance~\cite[Propositions 4.9 and 4.10]{SC}):
\begin{proposition} \label{prop:standStu}Any standard word different from a single letter can be uniquely expressed
as $\mu_v(xy)$ with $\{x,y\}=\{a,b\}$ and $v\in {\cal A}^*$. Moreover, one has
$$\mu_v(xy)= \psi(v)xy.$$
\end{proposition}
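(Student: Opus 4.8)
The plan is to first prove the ``moreover'' identity $\mu_v(xy)=\psi(v)xy$ for every $v\in\Aa^{*}$ and every $\{x,y\}=\{a,b\}$, and then to deduce both the existence and the uniqueness of the claimed representation from this identity together with the already quoted facts $\PER=\psi(\Aa^{*})$ and $\Stand=\Aa\cup\PER\{ab,ba\}$.

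For the identity I would argue by induction on $|v|$. If $v=\varepsilon$, then $\mu_{\varepsilon}(xy)=xy=\psi(\varepsilon)xy$. For the inductive step, write $v=zv'$ with $z\in\Aa$ and $|v'|=|v|-1$. Since $\mu_v=\mu_z\circ\mu_{v'}$ and $\mu_z$ is an endomorphism, the inductive hypothesis gives
\[ \mu_v(xy)=\mu_z\bigl(\mu_{v'}(xy)\bigr)=\mu_z\bigl(\psi(v')xy\bigr)=\mu_z(\psi(v'))\,\mu_z(xy). \]
A one-line check of the two cases $z=x$ and $z=y$ in the definition \eqref{eq:endo} of $\mu_z$ shows that $\mu_z(xy)=zxy$ whenever $\{x,y\}=\{a,b\}$. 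On the other hand, the special case $\psi(zv')=\mu_z(\psi(v'))z$ of Justin's formula (Theorem~\ref{thm:J}) yields $\mu_z(\psi(v'))z=\psi(zv')=\psi(v)$. Combining, $\mu_v(xy)=\mu_z(\psi(v'))\,zxy=\psi(v)xy$, which closes the induction.

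Existence is then immediate: since $\Stand=\Aa\cup\PER\{ab,ba\}$ and $\PER=\psi(\Aa^{*})$, a standard word $w$ that is not a single letter equals $\psi(v)ab$ or $\psi(v)ba$ for some $v\in\Aa^{*}$, and by the identity just proved this is $\mu_v(ab)$, resp.\ $\mu_v(ba)$. For uniqueness, suppose $\mu_v(xy)=\mu_{v'}(x'y')$ with $\{x,y\}=\{x',y'\}=\{a,b\}$; by the identity this reads $\psi(v)xy=\psi(v')x'y'$. Comparing lengths gives $|\psi(v)|=|\psi(v')|$, so $\psi(v)$ and $\psi(v')$ are both the prefix of that common length of the one word $\psi(v)xy$; hence $\psi(v)=\psi(v')$, and by injectivity of $\psi$ (property P4 of Proposition~\ref{prop:basicp}) we get $v=v'$, whence also $xy=x'y'$, i.e.\ $x=x'$ and $y=y'$.

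The argument is essentially routine once the identity $\mu_v(xy)=\psi(v)xy$ is available; the only point that requires a little care is invoking the correct instance of Justin's formula ($\psi(zv')=\mu_z(\psi(v'))z$, not the other one) in the inductive step. An alternative, self-contained route to existence would replace the appeal to $\Stand=\Aa\cup\PER\{ab,ba\}$ by a direct induction on the defining recursion $s_n=s_{n-1}^{c_n}s_{n-2}$, expressing each $s_n$ through the morphisms $\mu_{a^{c_1}b^{c_2}a^{c_3}\cdots}$; this is feasible but more laborious, so I would prefer to quote the structural result.
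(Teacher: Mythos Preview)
Your argument is correct. The paper itself does not prove this proposition; it simply cites \cite[Propositions~4.9 and~4.10]{SC}, so there is no in-paper proof to compare against directly.

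Your route is a clean, self-contained derivation from material already available in the paper: the identity $\mu_v(xy)=\psi(v)xy$ follows by induction using the instance $\psi(zv')=\mu_z(\psi(v'))z$ of Justin's formula (Theorem~\ref{thm:J}) together with the easy check $\mu_z(xy)=zxy$; existence then comes from $\Stand=\Aa\cup\PER\{ab,ba\}$ and $\PER=\psi(\Aa^{*})$, and uniqueness from the injectivity of $\psi$ (P4). Each of these ingredients is stated in the paper prior to Proposition~\ref{prop:standStu}, so there is no circularity. The only point worth flagging is that you are implicitly assuming Justin's formula is established independently of Proposition~\ref{prop:standStu}; this is indeed the case, since Theorem~\ref{thm:J} is quoted from~\cite{J}, where it is proved directly from the definition of~$\psi$.
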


\vspace{2 mm}

Let us set for any  $v\in {\cal A}^*$ and $x\in{\cal A}$, 
\begin{equation}\label{eq:muvi}
 p_x(v)= |\mu_v(x)|.
 \end{equation} 
 From Justin's formula one derives (cf.~\cite[Proposition 3.6]{adlZ1})  that $p_x(v)$ is the minimal period of $\psi(vx)$ and then a period of $\psi(v)$. Moreover, one has (cf.~\cite[Lemma 5.1]{adlZ2})
\begin{equation}\label{eq:minimalperiod}
 p_x(v)=  \pi(\psi(vx))= \pi(\psi(v)x) 
\end{equation}
 and $\gcd(p_x(v), p_y(v))=1$, so that 
\begin{equation}\label{eq:perpsi}
\pi(\psi(v))= \min \{p_x(v), p_y(v) \}.
\end{equation}
 Since $|\mu_v(xy)| = |\mu_v(x)|+  |\mu_v(y)|$, from Proposition \ref{prop:standStu}  and (\ref{eq:muvi}) one has
 \begin{equation}\label{eq:cent1}
 |\psi(v)|= p_x(v)+p_y(v)-2.
 \end{equation}
 The following lemma is readily derived from (\ref{eq:muvi}),

 \begin{lemma}\label{periods} For $w\in {\cal A}^*$ and $x,y\in {\cal A}$ one has 
 $$p_x(wx) = p_x(w), \ \ p_y(wx) = p_x(w)+ p_y(w), \ \mbox{for} \ \ y\in {\cal A}\setminus \{x\}.$$
 \end{lemma}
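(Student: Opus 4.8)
The plan is to derive Lemma~\ref{periods} directly from the definition of $p_x$ in \eqref{eq:muvi} together with the multiplicativity of the morphisms $\mu_v$, without invoking Justin's formula or the period identities. Recall that by definition $\mu_{wx} = \mu_w \circ \mu_x$, hence for any letter $z \in \mathcal{A}$ we have $\mu_{wx}(z) = \mu_w(\mu_x(z))$, and therefore $p_z(wx) = |\mu_w(\mu_x(z))|$.

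For the first identity, take $z = x$. Since $\mu_x(x) = x$ by \eqref{eq:endo}, we get $\mu_{wx}(x) = \mu_w(x)$, and taking lengths yields $p_x(wx) = |\mu_w(x)| = p_x(w)$. For the second identity, take $z = y$ with $y \in \mathcal{A} \setminus \{x\}$. By \eqref{eq:endo} we have $\mu_x(y) = xy$, so $\mu_{wx}(y) = \mu_w(xy) = \mu_w(x)\mu_w(y)$, because $\mu_w$ is a morphism (an endomorphism of $\mathcal{A}^*$). Taking lengths and using that length is additive over concatenation gives $p_y(wx) = |\mu_w(x)| + |\mu_w(y)| = p_x(w) + p_y(w)$.

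That completes both claims. There is no genuine obstacle here: the only point that requires a moment's care is making explicit that $\mu_w$ is a monoid morphism (so it distributes over the concatenation $x \cdot y$), which is part of its definition as an endomorphism of $\mathcal{A}^*$, and that the composition convention $\mu_{wx} = \mu_w \circ \mu_x$ matches the convention fixed before \eqref{eq:endo} for a word read left to right. Once these are noted, both formulas drop out immediately from \eqref{eq:endo} and \eqref{eq:muvi}.

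A brief remark worth including: one may alternatively phrase the proof in terms of $\psi$ via Proposition~\ref{prop:standStu}, since $p_z(w) = |\mu_w(z)|$ and $\mu_w(xy) = \psi(w)xy$, but the morphism argument above is the cleanest route and is exactly what is meant by ``readily derived from \eqref{eq:muvi}'' in the statement.
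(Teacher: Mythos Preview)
Your proof is correct and is exactly the argument the paper has in mind when it says the lemma is ``readily derived from \eqref{eq:muvi}'': you unfold $p_z(wx)=|\mu_{wx}(z)|=|\mu_w(\mu_x(z))|$ and apply \eqref{eq:endo}. One tiny slip: the composition convention $\mu_{wx}=\mu_w\circ\mu_x$ is stated just \emph{after} \eqref{eq:endo}, not before it.
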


Let us now consider  the class $\CH$ of   words, introduced in 1875 by Elwin B. Christoffel  \cite{CFF} (see also~\cite{BLRS,BDR}),  usually called Christoffel words.  Let $p$ and $q$ be positive relatively prime integers such that $n= p+q$. The Christoffel word $w$ of {\em slope} $\frac{p}{q}$ is defined as $w= x_1\cdots x_n$ with 
$$ x_i =  \begin{cases}
a, & \mbox{if $ip\mod n > (i-1)p \mod n$}; \\
b, & \mbox{if $ip \mod n < (i-1)p \mod n$}.
\end{cases} $$
for $i=1,\ldots, n$ where $k \mod n$ denotes the remainder of the Euclidean division of $k$ by $n$.
The term slope given to the irreducible fraction $\frac{p}{q}$  is due to the fact that, as one easily derives from the definition,  $p = |w|_b$ and $q= |w|_a$. The words $a$ and $b$ are also Christoffel words with a respective slope $\frac{0}{1}$ and
$\frac{1}{0}$. The Christoffel words of  slope  $\frac{p}{q}$ with $p$ and $q$ positive integers are called {\em proper Christoffel words}.

The following  result \cite{BDL},
shows a basic relation existing between central and Christoffel words:
$$ \CH = a \PER b \ \cup \cal A. $$
Hence, any proper Christoffel word $w$ can be uniquely represented as $a\psi(v)b$ for a suitable $v\in {\cal A}^*$.
We say that $w$ is of {\em order} $k$ if $v\in {\cal A}^k$.

Let $\prec$ denote  the lexicographic order of  ${\cal A}^*$ and let  $\Lynd$ be the set of {\em Lyndon words} (see, for instance, \cite[Chap. 5]{LO}) of  ${\cal A}^*$ and $\St$ be the set of  (finite) factors of all Sturmian words. The following theorem summarizes some results on Christoffel words proved in \cite{BL,BDL,BDR,adlZ2}.
\begin{theorem}\label{thm:Lynd} Let $w=a\psi(v)b$ with $v\in{\cal A}^*$  be a proper Christoffel word. Then the following hold:
\begin{enumerate}
\item[1)] $\CH = \St \cap \Lynd$, i.e., $\CH$ equals the set of all  factors of Sturmian words which are Lyndon words.
\item[2)] There exist and are unique two Christoffel words
$w_1$ and $w_2$ such that $w=w_1w_2$. Moreover,  $w_1\prec w_2$, and $(w_1,w_2)$ is the standard factorization of $w$ in Lyndon words. 
\item[3)] If $w$ has the slope $\eta(w)= \frac{p}{q}$, then $|w_1|=p'$,
$|w_2|=q'$, where $p'$ and $q'$ are the respective multiplicative inverse of $p$ and $q$,$\mod |w|$. Moreover,
$p'= p_a(v)$, $q'=p_b(v)$ and $p= p_a(v^{\sim})$, $q=p_b(v^{\sim})$.
\end{enumerate} 
\end{theorem}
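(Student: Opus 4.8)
The plan is to collect the four assertions of Theorem~\ref{thm:Lynd} and derive each from a combination of Justin's formula (Theorem~\ref{thm:J}), Proposition~\ref{prop:standStu}, and the basic combinatorics of periods encoded in equations \eqref{eq:muvi}--\eqref{eq:cent1} and Lemma~\ref{periods}. Since the statement explicitly says these results are ``proved in \cite{BL,BDL,BDR,adlZ2},'' I expect the proof here to be a guided recollection rather than a from-scratch argument; the main work is to make the internal logic self-contained given the machinery already set up in Section~\ref{sec:three}.

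For part~1), I would argue both inclusions. A proper Christoffel word $a\psi(v)b$ is a factor of the characteristic Sturmian word $\psi(vc^\omega)$ for a suitable letter $c$ (using property P1 and the fact that $a\psi(v)b$ or $b\psi(v)a$ occurs; here one uses Proposition~\ref{Prop:uno} to see that $\psi(v)$ extends on both sides to a factor of a Sturmian word), hence $\CH \subseteq \St$; that it is Lyndon follows because $a\psi(v)b$ is strictly smaller than all its proper suffixes — its only proper suffixes start with $a\psi(u)b$-type blocks or with $b$, and the leading $a$ together with palindromicity of $\psi(v)$ forces minimality. Conversely, a Sturmian factor that is Lyndon must begin with $a$ and end with $b$ (else a rotation gives a smaller conjugate or it is a power), and a classical exchange/minimality argument shows its central part is a palindrome with two coprime periods summing correctly, i.e.\ lies in $\PER$; this is where I would lean on the cited literature for the delicate case analysis.

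For parts~2) and~3) I would proceed together, since the factorization $w=w_1w_2$ is best produced arithmetically. Set $p=p_a(v)=|\mu_v(a)|$ and $q=p_b(v)=|\mu_v(b)|$, which are coprime by \eqref{eq:minimalperiod}--\eqref{eq:perpsi}, and note $|w| = |a\psi(v)b| = |\psi(v)| + 2 = p + q$ by \eqref{eq:cent1}. Using Proposition~\ref{prop:standStu}, $\psi(v)xy = \mu_v(xy) = \mu_v(x)\mu_v(y)$, so depending on the last letter of $v$ one gets a factorization of the standard word into two shorter standard words; transferring this through the bijection $\CH = a\PER b \cup \mathcal A$ yields $w = w_1 w_2$ with $w_1, w_2$ Christoffel words, $|w_1| = p_a(v)$, $|w_2| = p_b(v)$. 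Lyndon-ness of each factor follows from part~1) once one checks $w_1, w_2 \in \St$ (they are factors of the same Sturmian word as $w$), and $w_1 \prec w_2$ follows since $w_1$ begins with $a$ and, being a proper prefix situation, is lexicographically dominated; that $(w_1,w_2)$ is \emph{the} standard factorization in Lyndon words is then the statement that this is the factorization with $w_2$ the longest proper Lyndon suffix, which one verifies directly. Finally, the identification of $|w_1|, |w_2|$ with the multiplicative inverses $p', q'$ of $p, q$ modulo $|w|$ is a number-theoretic computation: from $p = |w|_b$, $q = |w|_a$, $p+q = |w|$, and the explicit Christoffel definition, the occurrence positions of $w_1$ as a prefix encode $q'p \equiv \text{(something)} \pmod{|w|}$, giving $|w_1| \equiv p^{-1}$; the symmetry relation $p = p_a(v^\sim)$, $q = p_b(v^\sim)$ comes from P5 together with how $\mu_{v^\sim}$ reverses the role of the two endpoint letters.

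The main obstacle I anticipate is not any single computation but rather part~1)'s converse inclusion $\St \cap \Lynd \subseteq \CH$ and the uniqueness claim in part~2): proving that a Lyndon Sturmian factor is forced to have the palindromic-central structure, and that the Christoffel factorization coincides with the Lyndon standard factorization, both require a careful conjugacy/minimality analysis of central words (the equation $w = w_1abw_2 = w_2baw_1$ of Proposition~\ref{Prop:uno} is the key tool). I would present these two points with the essential combinatorial lemma stated explicitly and cite \cite{BL,BDR} for the full verification, while giving the periods-and-lengths bookkeeping of part~3) in complete detail since it follows mechanically from Lemma~\ref{periods} and \eqref{eq:cent1}.
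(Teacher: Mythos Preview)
The paper does not prove Theorem~\ref{thm:Lynd} at all: the sentence introducing it reads ``The following theorem summarizes some results on Christoffel words proved in \cite{BL,BDL,BDR,adlZ2},'' and no argument follows---it is purely a citation of known results that the rest of the paper uses as a black box. So there is nothing to compare your proposal against; you have written a proof sketch where the paper simply refers the reader to the literature.

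As a standalone sketch your outline is broadly reasonable but uneven. The parts you handle in most detail (the period bookkeeping via \eqref{eq:muvi}--\eqref{eq:cent1} and Lemma~\ref{periods}) are indeed the routine ones, and your instinct that the factorization in~2) should be read off from $\mu_v(xy)=\mu_v(x)\mu_v(y)$ and Proposition~\ref{prop:standStu} is correct---this is essentially how Corollary~\ref{cor:Lyn+-} later recovers the factorization explicitly. However, a few of your claimed shortcuts are not quite right: the assertion ``$|w_1|=p_a(v)$, $|w_2|=p_b(v)$'' holds as stated only because the paper's convention already builds in the case split on $v^L$ (see Corollary~\ref{cor:Lyn+-}); your sketch glosses over which factor is which. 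More seriously, the multiplicative-inverse identification in~3) is not the kind of thing one recovers by ``a number-theoretic computation'' from the Christoffel staircase definition alone---in the cited sources it goes through the Stern--Brocot/Raney duality (what the paper later packages as $Sb(w)=Ra(w^\sim)$ and Proposition~\ref{prop:raney}), and your appeal to P5 for $p=p_a(v^\sim)$, $q=p_b(v^\sim)$ is insufficient since P5 only gives equality of \emph{lengths}, not the finer identification of the two periods with the letter counts. If you actually intend to supply a proof, you would need the tree duality of Section~\ref{sec:four} (or an equivalent matrix argument) for that step.
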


\begin{example} {\em Let $p=4$ and $q=7$. The Christoffel construction is represented by  the following diagram 
$$ 0\stackrel{a}{\longrightarrow} 4\stackrel{a}{\longrightarrow} 8\stackrel{b}{\longrightarrow} 1\stackrel{a}{\longrightarrow} 5\stackrel{a}{\longrightarrow} 9\stackrel{b}{\longrightarrow} 2\stackrel{a}{\longrightarrow} 6\stackrel{a}{\longrightarrow} 10\stackrel{b}{\longrightarrow} 3\stackrel{a}{\longrightarrow} 7\stackrel{b}{\longrightarrow} 0$$
so that the Christoffel word $w$ having slope $\frac{4}{7}$ is 
$$w=aabaabaabab =aub,$$
where $u=abaabaaba = \psi(aba^2)$ is the central word of length $9$ having  the two
coprime periods $p_a(v)=3$ and $p_b(v)=8$ with $v=aba^2$. The word $w$ can be uniquely factorized as $w=w_1w_2$, where $w_1$ and $w_2$ are the Lyndon words $w_1= aab$ and $w_2= aabaabab$. One has $w_1\prec  w_2$ with $|w_1|=3=p_a(v)$ and $|w_2|=8=p_b(v)$. Moreover, $w_2$ is the proper suffix of $w$ of maximal length which is a Lyndon word. Finally, 
$\psi(v^{\sim})= \psi(a^2ba)= aabaaabaa$, $p_a(v^{\sim})= 4= |w|_b$, $p_b(v^{\sim})=7=|w|_a$, and $|w|_bp_a(v)= 4\cdot 3=12 \equiv |w|_ap_b(v)= 7\cdot 8=56 \equiv 1 \mod 11$.} 
\end{example}

For any word $v\in {\cal A}^+$, we let  $v^F$ (resp. $v^L$) denote the first (resp.,  last) letter of $v$.

\begin{lemma}\label{lem:bar} For any $v\in {\cal A}^+$,
$ \pi(\psi(v^{\sim})) = |a\psi(v)b|_{{\bar v}^F}$.
\end{lemma}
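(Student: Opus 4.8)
The plan is to rewrite both sides of the claimed identity in terms of the quantities $p_x(\cdot)$ introduced in \eqref{eq:muvi}, and then match them using Lemma \ref{periods} together with part 3) of Theorem \ref{thm:Lynd}. Concretely, the left-hand side is a minimal period of a central word, so it should reduce to a single $p_x(v^{\sim})$; the right-hand side counts a letter in a Christoffel word, which Theorem \ref{thm:Lynd} also expresses as a $p_x(v^{\sim})$; the lemma is then just the assertion that these two indices $x$ agree.

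First I would record the auxiliary fact that, for every $w\in\mathcal{A}^+$, one has $\pi(\psi(w))=p_{w^L}(w)$. To see this, write $w=w'x$ with $x=w^L$; Lemma \ref{periods} gives $p_x(w)=p_x(w')$ and $p_{\bar x}(w)=p_x(w')+p_{\bar x}(w')$, and since $p_a(w')$ and $p_b(w')$ are the lengths of the nonempty words $\mu_{w'}(a)$ and $\mu_{w'}(b)$ they are at least $1$; hence $p_{\bar x}(w)>p_x(w)$, and \eqref{eq:perpsi} yields $\pi(\psi(w))=\min\{p_a(w),p_b(w)\}=p_x(w)=p_{w^L}(w)$.

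Applying this with $w=v^{\sim}$, and using $(v^{\sim})^L=v^F$, gives $\pi(\psi(v^{\sim}))=p_{v^F}(v^{\sim})$ for the left-hand side. For the right-hand side, $a\psi(v)b$ is a proper Christoffel word; writing its slope as $\frac pq$ we have $p=|a\psi(v)b|_b$ and $q=|a\psi(v)b|_a$, while part 3) of Theorem \ref{thm:Lynd} states $p=p_a(v^{\sim})$ and $q=p_b(v^{\sim})$. In other words $|a\psi(v)b|_{\bar x}=p_x(v^{\sim})$ for each $x\in\mathcal{A}$. Taking $x=v^F$ and comparing with the expression for the left-hand side gives $|a\psi(v)b|_{{\bar v}^F}=p_{v^F}(v^{\sim})=\pi(\psi(v^{\sim}))$, which is the claim.

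I do not anticipate a genuine obstacle: the statement is a bookkeeping consequence of results already available in the excerpt. The only points needing a little care are the auxiliary fact — correctly identifying which of $p_a(w)$, $p_b(w)$ is the minimal period, which is what forces us to track the last letter of $w$ and hence the first letter of $v$ — and the matching in Theorem \ref{thm:Lynd} 3), where the numerator of the slope counts the occurrences of $b$ and the denominator those of $a$, so that the complementation appearing in $|a\psi(v)b|_{{\bar v}^F}$ is precisely what makes the two indices coincide.
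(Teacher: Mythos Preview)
Your proof is correct and follows the same overall strategy as the paper's: both express $\pi(\psi(v^{\sim}))$ as $\min\{p_a(v^{\sim}),p_b(v^{\sim})\}$ via \eqref{eq:perpsi}, bring in Theorem~\ref{thm:Lynd}~3) to relate the $p_x(v^{\sim})$ to the letter counts $|a\psi(v)b|_a$, $|a\psi(v)b|_b$, and then identify which of the two attains the minimum. The one genuine difference is in that last step. The paper works on the letter-count side, asserting $|\psi(v)|_{v^F}>|\psi(v)|_{\bar v^F}$ and justifying it by an induction (only sketched) based on Proposition~\ref{Prop:uno}. You instead work on the period side, using Lemma~\ref{periods} to see directly that $p_{w^L}(w)<p_{\overline{w^L}}(w)$ for any $w\in\mathcal A^+$. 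Your route is a little cleaner---no separate induction is needed---and it yields along the way the identity $\pi(\psi(w))=p_{w^L}(w)$, which in the paper appears only later as part of Corollary~\ref{cor:Lyn+-}.
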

\begin{proof}
 By (\ref{eq:perpsi}) and item 3) of Theorem \ref{thm:Lynd} one has
$$\pi(\psi(v^{\sim})) = \min\{p_a(v^{\sim}), p_b(v^{\sim})\}= \min\{|a\psi(v)b|_a, |a\psi(v)b|_b\}.$$
The result follows since for each $v\in {\cal A}^+$ one has  $|\psi(v)|_{{v^F}} > |\psi(v)|_{{\bar v}^F}$, as one easily derives  using Proposition \ref{Prop:uno},  observing that $(\psi(v))^F= v^F$ and
 by making induction on the lengths of central words.
\end{proof}

Let $v$ be a non-empty  word. We let  $v^-$ (resp., $^-v$) denote the word obtained from $v$ by deleting the last
(resp., first) letter. If $v$ is not constant, we  let $v_+$ (resp., $_+v$) denote the longest  prefix (resp., suffix) of $v$ which is immediately
followed (resp., preceded) by the complementary of the last (resp., first) letter of $v$. For instance, if $v= abbabab$, one has
$v^- = abbaba$, $v_+= abbab$, $^-v= bbabab$, and  $_+v = babab$.
\begin{proposition}\label{prop:+-} If  $v\in {\cal A}^*$ is not constant, then
$$|a\psi(v)b|= |a\psi(v^-)b|+ |a\psi(v_+)b|=  |a\psi(^-v)b|+ |a\psi(_+v)b|.$$
Moreover,
$$ |a\psi(v_+)b|= \pi(\psi(v)) \ \ \mbox{and} \ \ |a\psi(_+v)b| = |a\psi(v)b|_{{\bar v}^F}.$$
\end{proposition}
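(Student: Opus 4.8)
The plan is to prove the two displayed identities by relating both sides to the quantities $p_x(v)$ via Lemma~\ref{periods} and the length formula \eqref{eq:cent1}. Write $v^L = x$ and $\bar v^L = y$, so that $v$ ends in $x$ and $\{x,y\}=\{a,b\}$. Since $v$ is not constant, $v^-$ is the word $v$ with its last letter removed, and $v_+$ is (by definition) the longest prefix of $v$ followed by $y$; thus $v$ arises from $v_+$ by appending a block $x^{k}$ with $k\geq 1$, i.e.\ $v_+ x = {}$ the prefix of $v$ of length $|v_+|+1$, and $v = v_+ x^k$ for the appropriate $k\geq 1$. The first step is to compute $p_a(v), p_b(v)$, $p_a(v^-), p_b(v^-)$ and $p_a(v_+), p_b(v_+)$ in terms of each other using Lemma~\ref{periods}. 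Repeatedly applying $p_x(wx)=p_x(w)$ and $p_y(wx)=p_x(w)+p_y(w)$ along the block $x^k$ gives $p_x(v) = p_x(v_+)$ and $p_y(v) = p_y(v_+) + k\,p_x(v_+)$, while peeling off just one letter gives $p_x(v^-) = p_x(v)$ and $p_y(v^-) = p_y(v) - p_x(v)$ (here $v^- = v_+ x^{k-1}$, which is still non-constant or a power of $x$, but the period identities hold regardless).

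Next I would combine these with \eqref{eq:cent1}, which reads $|\psi(w)| = p_a(w)+p_b(w)-2$, hence $|a\psi(w)b| = p_a(w)+p_b(w)$. Then
\[
|a\psi(v^-)b| + |a\psi(v_+)b| = \bigl(p_x(v)+p_y(v)-p_x(v)\bigr) + \bigl(p_x(v_+)+p_y(v_+)\bigr).
\]
Using $p_x(v_+)=p_x(v)$ and $p_y(v_+) = p_y(v) - k\,p_x(v)$ together with $p_y(v^-) = p_y(v)-p_x(v)$, a short computation must reduce this to $p_x(v)+p_y(v) = |a\psi(v)b|$; the bookkeeping with the exponent $k$ is the one place to be careful, but it is purely arithmetic. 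This establishes the first equality in the first display. For the ``moreover'' part, $|a\psi(v_+)b| = p_x(v_+) + p_y(v_+) = p_x(v) + \bigl(p_y(v) - k\,p_x(v)\bigr)$, and one checks this equals $\min\{p_a(v),p_b(v)\} = \pi(\psi(v))$ by \eqref{eq:perpsi}: since $p_x(v)=p_x(v_+)$ grows by the block while $p_y$ does not pick up the block, one shows $p_x(v) \geq p_y(v_+)$ when $v_+$ is nonempty, so the minimum of $p_a(v),p_b(v)$ is attained at the $y$-value, which is exactly $p_y(v_+) + p_x(v_+)$ up to the needed identity — again just a matter of tracking which index is smaller.

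The second equality $|a\psi(v)b| = |a\psi(^-v)b| + |a\psi({}_+v)b|$ I would deduce from the first by the reversal symmetry. By P5 of Proposition~\ref{prop:basicp}, $|\psi(v^{\sim})| = |\psi(v)|$, and more precisely $a\psi(v^{\sim})b$ is, up to reversal, $a\psi(v)b$ (since $\psi(v^{\sim})$ need not equal $\psi(v)^{\sim}$ as words, but has the same length and same letter-counts — indeed $|\psi(v^{\sim})|_a = |\psi(v)|_a$ follows from P5 and the fact that central words are palindromes determined by their directive word up to this symmetry). Applying the already-proved first identity to $v^{\sim}$ gives $|a\psi(v^{\sim})b| = |a\psi((v^{\sim})^-)b| + |a\psi((v^{\sim})_+)b|$, and $(v^{\sim})^- = ({}^-v)^{\sim}$, $(v^{\sim})_+ = ({}_+v)^{\sim}$, so by P5 applied to each term we get the second equality. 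Finally $|a\psi({}_+v)b| = |a\psi(({}_+v)^{\sim})b| = \pi(\psi(v^{\sim}))$ by the ``moreover'' clause applied to $v^{\sim}$, and Lemma~\ref{lem:bar} identifies $\pi(\psi(v^{\sim}))$ with $|a\psi(v)b|_{\bar v^F}$, giving the last stated equation.

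The main obstacle I anticipate is the careful handling of the exponent $k$ in $v = v_+ x^k$ and, relatedly, making sure the degenerate cases ($v_+ = \varepsilon$, or $v^-$ constant) are covered — in those cases one should fall back on the constant-word conventions $\pi(\varepsilon)=1$, $\psi(\varepsilon)=\varepsilon$ and check the formulas directly. Once the three pairs of period values are correctly expressed through Lemma~\ref{periods}, everything else is substitution into \eqref{eq:cent1} and \eqref{eq:perpsi}, plus the reversal trick via P5 and Lemma~\ref{lem:bar}.
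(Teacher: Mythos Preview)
Your overall strategy---compute everything through the period recursion of Lemma~\ref{periods}, combine with $|a\psi(w)b|=p_a(w)+p_b(w)$ from~\eqref{eq:cent1}, and then obtain the left-handed identities by reversal via P5 and Lemma~\ref{lem:bar}---is sound and would give a legitimate alternative to the paper's argument. But there is a concrete error at the very first step that makes your computations fail as written.

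You assert that $v=v_+x^{k}$ with $x=v^{L}$. This is wrong: by definition, the letter of $v$ immediately after $v_+$ is the \emph{complement} $y=\bar v^{L}$, not $x$. The correct shape is $v=v_+\,y\,x^{k}$ for some $k\geq 1$ (and hence $v^{-}=v_+\,y\,x^{k-1}$, not $v_+x^{k-1}$). With your incorrect structure, Lemma~\ref{periods} gives $p_x(v)=p_x(v_+)$ and $p_y(v)=p_y(v_+)+k\,p_x(v_+)$, and then $|a\psi(v^{-})b|+|a\psi(v_+)b|=(k+1)p_x(v_+)+2p_y(v_+)$, whereas $|a\psi(v)b|=(k+1)p_x(v_+)+p_y(v_+)$; these differ by $p_y(v_+)\geq 1$. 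The ``moreover'' claim fails too: under your formulas $\min\{p_x(v),p_y(v)\}=p_x(v_+)$, which is strictly less than $|a\psi(v_+)b|=p_x(v_+)+p_y(v_+)$.

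With the correct decomposition $v=v_+\,y\,x^{k}$ the same method goes through cleanly: one gets $p_x(v)=p_x(v_+)+p_y(v_+)=|a\psi(v_+)b|$ and $p_y(v)=p_y(v_+)+k\,p_x(v)$, so $p_x(v)<p_y(v)$ and $\pi(\psi(v))=p_x(v)=|a\psi(v_+)b|$; moreover $|a\psi(v^{-})b|=p_x(v^{-})+p_y(v^{-})=p_x(v)+(p_y(v)-p_x(v))=p_y(v)$, whence the sum is $p_x(v)+p_y(v)=|a\psi(v)b|$. Your reversal argument for the second equality and for $|a\psi(_+v)b|=|a\psi(v)b|_{\bar v^{F}}$ is correct and is exactly what the paper does.

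For comparison, the paper avoids the period bookkeeping entirely: it derives the stronger, word-level factorization $\psi(v)=\psi(v^{-})\,xy\,\psi(v_+)=\psi(v_+)\,yx\,\psi(v^{-})$ directly from Justin's formula (Theorem~\ref{thm:J}) and Proposition~\ref{prop:standStu}, after which the length identity and $\pi(\psi(v))=|a\psi(v_+)b|$ fall out of Proposition~\ref{Prop:uno}. That route yields more (the explicit factorization of $\psi(v)$, used later as~\eqref{eq:psi+}) with less arithmetic, but your corrected period-recursion argument is a valid alternative for the length statements alone.
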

\begin{proof} Let $x$ be the last letter of $v$. By Justin's formula one has 
$$\psi(v) = \psi(v^-x)= \mu_{v^-}(x)\psi(v^-).$$
Now if  $y= \bar x$ one has 
$v^- \in (v_+)yx^*$, and by Proposition \ref{prop:standStu}, $\mu_{v^-}(x)= \mu_{(v_+) y}(x)= \mu_{v_+}(yx)= \psi(v_+)yx$.
Thus
\begin{equation}\label{eq:psi+}
\psi(v) =\psi(v_+)yx\psi(v^-) = \psi(v^-)xy\psi(v_+)
\end{equation}
and $|a\psi(v)b|= |a\psi(v^-)b|+ |a\psi(v_+)b|$. By Proposition \ref{Prop:uno}, $ |a\psi(v^-)b| =p$ and  $|a\psi(v_+)b|=q$
are two coprime periods such that $|\psi(v)|= p+q-2$. Since $v_+$ is a proper prefix of $v^-$ one has $q<p$ and
therefore $\pi(\psi(v))= |a\psi(v_+)b|$.

By item  P5  of Proposition \ref{prop:basicp}, one has $|\psi(v)|= |\psi(v^{\sim})|$, so that from the preceding result
$|a\psi(v^{\sim})b|= |a\psi((v^{\sim})^-)b|+ |a\psi((v^{\sim})_+)b|$. As it is readily verified, $(v^{\sim})^- = (^-v)^{\sim}$
and $(v^{\sim})_+ = (_+v)^{\sim}$, so that $|a\psi(v)b|=  |a\psi(^-v)b|+ |a\psi(_+v)b|$.  Since $ |a\psi(_+v)b| =  |a\psi((v^{\sim})_+)b|= \pi(\psi(v^{\sim}))$ the result follows from Lemma \ref{lem:bar}.
\end{proof}
\begin{corollary}\label{cor:Lyn+-} For any  non-constant $v \in{\cal A}^*$, the standard factorization of $a\psi(v)b$ in Lyndon words is
$$(a\psi(v_+)b, a\psi(v^-)b)\  \mbox{ if } \  v^L=a \  \mbox{and} \  (a\psi(v^-)b, a\psi(v_+)b) \ \mbox{ if } \  v^L=b. $$
As a consequence for any $v\in {\cal A}^+$
$$\pi(\psi(v)) = p_{v^L}(v).$$
\end{corollary}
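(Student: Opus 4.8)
The plan is to read off both factorizations directly from the two rewritings of $\psi(v)$ recorded in \eqref{eq:psi+}, and then to invoke the uniqueness of the decomposition of a Christoffel word into two Christoffel words. Assume first that $v$ is not constant, and set $x=v^{L}$, $y=\bar x$. If $x=a$, then \eqref{eq:psi+} reads $\psi(v)=\psi(v_{+})\,b\,a\,\psi(v^{-})$, whence $a\psi(v)b=(a\psi(v_{+})b)(a\psi(v^{-})b)$; if $x=b$, the companion form $\psi(v)=\psi(v^{-})\,b\,a\,\psi(v_{+})$ in \eqref{eq:psi+} gives instead $a\psi(v)b=(a\psi(v^{-})b)(a\psi(v_{+})b)$. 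In either case the two factors have the shape $a\psi(u)b$ with $u\in\mathcal{A}^{*}$, hence are proper Christoffel words; so by item~2) of Theorem~\ref{thm:Lynd} the factorization just exhibited must be \emph{the} unique decomposition of $a\psi(v)b$ into two Christoffel words, which by that same item is the standard factorization in Lyndon words. In particular there is no need to check the inequality $w_{1}\prec w_{2}$ by hand: it is forced by the uniqueness.

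For the ``as a consequence'' clause I would combine this with the length information already at hand. By Proposition~\ref{prop:+-} the block $a\psi(v_{+})b$ has length $\pi(\psi(v))$ whenever $v$ is non-constant; on the other hand item~3) of Theorem~\ref{thm:Lynd} says that, writing $a\psi(v)b=w_{1}w_{2}$ for the standard Lyndon factorization, $|w_{1}|=p_{a}(v)$ and $|w_{2}|=p_{b}(v)$. Substituting the descriptions of $w_{1},w_{2}$ obtained above: when $v^{L}=a$ the block $a\psi(v_{+})b$ is $w_{1}$, so $\pi(\psi(v))=p_{a}(v)=p_{v^{L}}(v)$; when $v^{L}=b$ it is $w_{2}$, so $\pi(\psi(v))=p_{b}(v)=p_{v^{L}}(v)$.

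It remains to treat the case in which $v$ is constant, where the first assertion is vacuous ($v_{+}$ being undefined). Here $v=z^{k}$ with $z\in\mathcal{A}$ and $k\ge 1$; a one-line induction gives $\psi(z^{k})=z^{k}$, so $\pi(\psi(v))=1$, and since $\mu_{z^{k}}(z)=z$ we get $p_{v^{L}}(v)=|\mu_{z^{k}}(z)|=1$ as well. Hence $\pi(\psi(v))=p_{v^{L}}(v)$ for every $v\in\mathcal{A}^{+}$.

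I expect no serious difficulty here: the computations are routine, and the single point that needs care is the bookkeeping of which of the two Christoffel blocks sits on the left and which on the right — this is dictated entirely by the last letter of $v$ through \eqref{eq:psi+}, and once the product of two Christoffel words has been displayed, item~2) of Theorem~\ref{thm:Lynd} does the rest. (If one preferred, the period identity could instead be proved directly from Lemma~\ref{periods}: writing $v=wx$ with $x=v^{L}$ gives $p_{x}(v)=p_{x}(w)\le p_{x}(w)+p_{\bar x}(w)=p_{\bar x}(v)$, so $p_{v^{L}}(v)=\min\{p_{a}(v),p_{b}(v)\}=\pi(\psi(v))$ by \eqref{eq:perpsi}; but deriving it from the Lyndon factorization keeps the statement genuinely a corollary.)
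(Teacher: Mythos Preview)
Your argument is correct and follows the paper's own proof essentially line for line: the paper likewise reads off the two factorizations from \eqref{eq:psi+} according to whether $v^{L}=a$ or $v^{L}=b$, appeals to item~2) of Theorem~\ref{thm:Lynd} for the uniqueness of the Christoffel factorization, and then combines Proposition~\ref{prop:+-} with item~3) of Theorem~\ref{thm:Lynd} to obtain $\pi(\psi(v))=p_{v^{L}}(v)$, treating the constant case separately just as you do. Your parenthetical alternative via Lemma~\ref{periods} and \eqref{eq:perpsi} is a nice extra observation, but the main line is the same.
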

\begin{proof} From (\ref{eq:psi+}) if $v$ terminates with $a$ (i.e., $x=a$)  one has $a\psi(v)b=a\psi(v_+)ba\psi(v^-)b$. On the contrary,
if $v$ terminates with $b$ (i.e., $x=b$)  one has $a\psi(v)b=a\psi(v^-)ba\psi(v_+)b$. Since $a\psi(v_+)b$ and $a\psi(v^-)b$ are
Christoffel words, the first result follows from item  2) of  Theorem \ref{thm:Lynd}. 

Let $v\in  {\cal A}^+$. If $v$ is constant, i.e., $v= x^h$ with $x\in {\cal A}$ and $h\geq 1$, then trivially  $p_x(x^h)= 1= \pi(\psi(x^h))$. If
$v$ is not constant, then by Proposition \ref{prop:+-} one has $\pi(\psi(v))= |a\psi(v_+)b|$. From the result proved above and
from item  3) of  Theorem \ref{thm:Lynd}, it follows $\pi(\psi(v)) = p_{v^L}(v).$
\end{proof}

\begin{corollary}\label{lemma:ineq} For any $v\in {\cal A}^+$ one has that $|a\psi(va)b|$ is less (resp., greater) than $|a\psi(vb)b|$ if and only if $v^L=a$ (resp., $v^L=b$).
\end{corollary}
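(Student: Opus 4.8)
The plan is to compute $|a\psi(va)b|$ and $|a\psi(vb)b|$ directly in terms of the period functions $p_a$ and $p_b$, using the identity $|a\psi(w)b| = |\psi(w)| + 2 = p_a(w) + p_b(w)$ from \eqref{eq:cent1}. Applying this with $w = va$ and $w = vb$ and then invoking Lemma \ref{periods} to expand $p_x(vx)$ and $p_x(v\bar{x})$ should reduce the whole statement to an elementary comparison of sums of the two positive integers $p_a(v)$ and $p_b(v)$.

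Concretely, first I would write $|a\psi(va)b| = p_a(va) + p_b(va)$. By Lemma \ref{periods} (with $x = a$), $p_a(va) = p_a(v)$ and $p_b(va) = p_a(v) + p_b(v)$, so $|a\psi(va)b| = 2p_a(v) + p_b(v)$. Symmetrically, $|a\psi(vb)b| = p_a(vb) + p_b(vb) = (p_a(v) + p_b(v)) + p_b(v) = p_a(v) + 2p_b(v)$. Subtracting, $|a\psi(vb)b| - |a\psi(va)b| = p_b(v) - p_a(v)$. Hence $|a\psi(va)b| < |a\psi(vb)b|$ if and only if $p_a(v) < p_b(v)$, and $>$ in the opposite case (note $p_a(v) \neq p_b(v)$ since they are coprime and, for $v \neq \varepsilon$, not both equal to $1$ — actually they are coprime so equality would force $p_a(v)=p_b(v)=1$, which happens only for $v = \varepsilon$, excluded here).

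It remains to identify the condition $p_a(v) < p_b(v)$ with $v^L = a$. This is exactly the content of Corollary \ref{cor:Lyn+-} together with \eqref{eq:perpsi}: we have $\pi(\psi(v)) = \min\{p_a(v), p_b(v)\}$ by \eqref{eq:perpsi}, and $\pi(\psi(v)) = p_{v^L}(v)$ by Corollary \ref{cor:Lyn+-}. Therefore $p_{v^L}(v) = \min\{p_a(v), p_b(v)\}$, which says precisely that $v^L = a$ iff $p_a(v) \le p_b(v)$, i.e. (by coprimality, ruling out equality) iff $p_a(v) < p_b(v)$; and $v^L = b$ iff $p_b(v) < p_a(v)$. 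Combining with the computation above gives the claim.

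I expect no serious obstacle here — the statement is essentially a corollary of Lemma \ref{periods} and Corollary \ref{cor:Lyn+-}, and the only point requiring a moment's care is the degenerate case: one should check that $p_a(v) = p_b(v)$ cannot occur for $v \in \mathcal A^+$. Indeed by \eqref{eq:perpsi} and the coprimality statement preceding it, $\gcd(p_a(v), p_b(v)) = 1$, so $p_a(v) = p_b(v)$ would force both to equal $1$; but $p_{v^L}(v) = \pi(\psi(v)) = 1$ forces $\psi(v)$ constant, and then $\psi(v) = (v^L)^{|\psi(v)|}$ has $p_{\bar v^L}(v) = \pi(\psi(v)(\bar v^L)) = |\psi(v)| + 1 > 1$ whenever $v \neq \varepsilon$, a contradiction. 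This settles the strict inequality and completes the proof.
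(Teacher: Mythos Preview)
Your argument is correct and takes a different route from the paper's. The paper proves this as a corollary of Proposition~\ref{prop:+-}: it treats constant $v$ by hand, and for non-constant $v=ux$ observes that $(va)^{-}=(vb)^{-}=v$ while $(va)_{+}$ and $(vb)_{+}$ differ (one is $u$, the other a proper prefix of $u$), so that the decomposition $|a\psi(vz)b|=|a\psi(v)b|+|a\psi((vz)_{+})b|$ immediately yields the strict inequality. Your approach instead computes the difference \emph{exactly} via Lemma~\ref{periods} and \eqref{eq:cent1}, obtaining $|a\psi(vb)b|-|a\psi(va)b|=p_{b}(v)-p_{a}(v)$, and then invokes Corollary~\ref{cor:Lyn+-} together with \eqref{eq:perpsi} to read off the sign from the last letter of $v$. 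This is arguably cleaner: it avoids any case split on whether $v$ is constant, and it produces the precise value of the gap rather than just its sign. One cosmetic remark: your verification that $p_{a}(v)\neq p_{b}(v)$ for $v\neq\varepsilon$ can be shortened to the one-line observation that $p_{a}(v)+p_{b}(v)=|\psi(v)|+2\geq |v|+2\geq 3$, which already rules out $p_{a}(v)=p_{b}(v)=1$.
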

\begin{proof} Let us first suppose that $v$ is  constant, i.e., $v= a^n$ or $v= b^n$, with $n>0$. In this case the result is trivial since $|a\psi(a^{n+1})b|= |a\psi(b^{n+1})b|=n+3$ and $|a\psi(a^nb)b|= |a\psi(b^na)b|= 2n+3$. Let us then suppose that $v$ is not constant. One has $(va)^- = (vb)^- = v$. If $v=ua$ with $u\in {\cal A}^*$, then one has
$(vb)_+ = (uab)_+ = u$ and $(va)_+ = (uaa)_+ = u_1$, with $u_1$ a proper prefix of $u$. By Proposition \ref{prop:+-}   one has:
$$ |a\psi(va)b| = |a\psi(v)b|+ |a\psi(u_1)b|<  |a\psi(v)b|+ |a\psi(u)b| = |a\psi(vb)b|.$$
In a similar way one proves that if $v=ub$, one has $ |a\psi(va)b|> |a\psi(vb)b|$.
\end{proof}

\begin{proposition}\label{prop:periods} For any word $v=v_1\cdots v_n$, with $n>0$, $v_i\in {\cal A}$, $i=1,\ldots,n$, one has
$$|\psi(v)|= \sum_{i=1}^n \pi(\psi(v_1\cdots v_i))= \sum_{i=1}^n|a\psi(v_i\cdots v_n)b|_{{\bar v}_i}.$$
\end{proposition}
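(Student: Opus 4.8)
The plan is to establish the two equalities separately: the first one by a telescoping argument based on Justin's formula, and the second one by deducing it from the first via reversal.

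For the first equality, I would start from the special case $\psi(wx)=\mu_w(x)\psi(w)$ of Theorem~\ref{thm:J}, valid for $w\in\mathcal{A}^*$ and $x\in\mathcal{A}$. Taking lengths, and recalling the definition $p_x(w)=|\mu_w(x)|$ from~(\ref{eq:muvi}), one gets $|\psi(wx)|-|\psi(w)|=p_x(w)$. By~(\ref{eq:minimalperiod}) we have $p_x(w)=\pi(\psi(wx))$, hence
$$|\psi(wx)|-|\psi(w)|=\pi(\psi(wx)).$$
Applying this with $w=v_1\cdots v_{i-1}$ and $x=v_i$ for $i=1,\ldots,n$ and summing, the left-hand sides telescope to $|\psi(v)|-|\psi(\varepsilon)|=|\psi(v)|$, while the right-hand sides sum to $\sum_{i=1}^n\pi(\psi(v_1\cdots v_i))$; this is the first equality.

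For the second equality, the point is to rewrite each summand $|a\psi(v_i\cdots v_n)b|_{\bar v_i}$ using Lemma~\ref{lem:bar}. Since $v_i$ is the first letter of the nonempty word $v_i\cdots v_n$, that lemma gives $|a\psi(v_i\cdots v_n)b|_{\bar v_i}=\pi(\psi((v_i\cdots v_n)^{\sim}))=\pi(\psi(v_nv_{n-1}\cdots v_i))$, so the right-hand sum equals $\sum_{i=1}^n\pi(\psi(v_n\cdots v_i))$. Writing $v^{\sim}=w_1\cdots w_n$ with $w_j=v_{n+1-j}$ and reindexing by $j=n+1-i$ (so that $w_1\cdots w_j=v_n\cdots v_i$), this sum becomes $\sum_{j=1}^n\pi(\psi(w_1\cdots w_j))$, which by the first equality applied to $v^{\sim}$ equals $|\psi(v^{\sim})|$. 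Finally $|\psi(v^{\sim})|=|\psi(v)|$ by property~P5 of Proposition~\ref{prop:basicp}, and the proof is complete.

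I do not expect a genuine obstacle here: the statement is essentially a bookkeeping consequence of Justin's formula together with Lemma~\ref{lem:bar}. The only points needing a little care are checking that all words to which~(\ref{eq:minimalperiod}) and Lemma~\ref{lem:bar} are applied are nonempty (guaranteed by $1\le i\le n$, which makes both $v_1\cdots v_i$ and $v_i\cdots v_n$ nonempty), and getting the reversal reindexing right, i.e.\ that the prefixes of $v^{\sim}$ are exactly the reversals of the suffixes of $v$.
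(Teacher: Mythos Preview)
Your proof is correct. The first equality is established via essentially the same telescoping relation $|\psi(v_1\cdots v_i)|-|\psi(v_1\cdots v_{i-1})|=\pi(\psi(v_1\cdots v_i))$ that the paper uses; the difference is only in how this relation is obtained. The paper derives it from Proposition~\ref{prop:+-} (which requires $v$ non-constant, forcing a separate treatment of the constant case), whereas you get it directly from Justin's formula and~(\ref{eq:minimalperiod}), which avoids any case distinction. For the second equality the two arguments genuinely diverge: the paper runs a parallel induction using the $_+v,{}^-v$ part of Proposition~\ref{prop:+-}, while you instead invoke Lemma~\ref{lem:bar} to convert each summand into a minimal period and then apply the first equality to $v^{\sim}$ together with~P5. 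Your route is a little more economical, collapsing two inductions into one plus a reversal trick; the paper's route keeps the two halves symmetric and independent of each other.
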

\begin{proof} The result is trivial if $v$ is constant, i.e., $v=x^n$ with $x \in {\cal A}$. Indeed, $|\psi(x^n)|= |x^n|= n$ and for all $1\leq i \leq n$  one has $\pi(\psi(x^i)) =1$  and $|a\psi(x^i)b|_{\bar x} = |ax^ib|_{\bar x} =1$.  Let us then suppose that $v$ is not constant. The proof is
obtained by induction on the length of $v$. Let us prove the first equality. By Proposition \ref{prop:+-} one has 
$$ |a\psi(v)b|= |a\psi(v_+)b|+ |a\psi(v^-)b|= \pi(\psi(v)) + |a\psi(v^-)b|,$$
so that by induction
$$|\psi(v)|= \pi(\psi(v))+  \sum_{i=1}^{n-1} \pi(\psi(v_1\cdots v_i) = \sum_{i=1}^n \pi(\psi(v_1\cdots v_i)).$$
Let us now prove the second equality. By Proposition \ref{prop:+-} one has 
$$ |a\psi(v)b|= |a\psi(_+v)b|+ |a\psi(^-v)b|= |a\psi(v)b|_{{\bar v}_1}+|a\psi(^-v)b|,$$
so that by induction
$$ |\psi(v)|= |a\psi(v)b|_{{\bar v}_1} +|\psi(^-v)|=  |a\psi(v)b|_{{\bar v}_1} +  \sum_{i=2}^n|a\psi(v_i\cdots v_n)b|_{{\bar v}_i}.$$
which concludes the proof.
\end{proof}

\section{The Raney and the Stern-Brocot trees}\label{sec:four}

Let us consider the complete binary tree. Trivially, each path from the root to a particular node can be represented by a
word $w\in {\cal A}^*$. More precisely,  if  $w = b^{h_0}a^{h_1}b^{h_2}\cdots a^{h_{n-1}}b^{h_n}$  with $h_0, h_n\geq 0$ and
$h_i>0$, $1\leq i \leq n-1$, then the sequence of letters read from left to right gives the sequence of right and left moves in order to reach the node starting from the root. Since for every node there exists a unique path going from the root to the node, one
has that the nodes are faithfully represented by the words over ${\cal A}$.  Thus one can identify the nodes of the tree with
the binary words of ${\cal A}^*$. 

Let us now label each node of the tree with an irreducible fraction $\frac{p}{q}$, where $p$ and $q$ are positive and relatively prime integers, in the following way. The root has the label $\frac{1}{1}$. If a node has the label $\frac{p}{q}$, then the
left child has the label $\frac{p}{p+q}$ and the right child  has the label $\frac{p+q}{q}$. 

This tree was introduced by J. Berstel
and the first author in \cite{BDL} and called the {\em Raney tree} since it was implicitly contained in the work of Raney \cite{Ra}.
The Raney tree  was reconsidered in \cite{CW} and is usually referred in the literature  as the {\em Calkin-Wilf tree}.
\begin{figure}[ht]
\begin{center}
\footnotesize
\begin{tikzpicture} [level distance=15mm, 
level/.style={sibling distance=\textwidth/(2^#1)}
, level 4/.style={level distance=12mm}
]
\node {$\dfrac{1}{1}$}
	child {node {$\dfrac{1}{2}$}
		child {node {$\dfrac{1}{3}$}
			child {node {$\dfrac{1}{4}$}
				child {node {\vdots}}
				child {node {\vdots}}
			}
			child {node {$\dfrac{4}{3}$}
				child {node {\vdots}}
				child {node {\vdots}}
			}
		}
		child {node {$\dfrac{3}{2}$}
			child {node {$\dfrac{3}{5}$}
				child {node {\vdots}}
				child {node {\vdots}}
			}
			child {node {$\dfrac{5}{2}$}
				child {node {\vdots}}
				child {node {\vdots}}
			}
		}
	}
	child {node {$\dfrac{2}{1}$}
		child {node {$\dfrac{2}{3}$}
			child {node {$\dfrac{2}{5}$}
				child {node {\vdots}}
				child {node {\vdots}}
			}
			child {node {$\dfrac{5}{3}$}
				child {node {\vdots}}
				child {node {\vdots}}
			}
		}
		child {node {$\dfrac{3}{1}$}
			child {node {$\dfrac{3}{4}$}
				child {node {\vdots}}
				child {node {\vdots}}
			}
			child {node {$\dfrac{4}{1}$}
				child {node {\vdots}}
				child {node {\vdots}}
			}
		}
	};
\end{tikzpicture}
\end{center}
\caption{\small The Raney tree}
\end{figure}

As proved in \cite{BDL} (see also \cite{CW})  all irreducible fractions can be faithfully represented by the Raney tree.  We let $Ra(w)$ denote
the fraction labeling the node represented by the word $w$.

Another famous labeling the complete binary tree by irreducible fractions  is the {\em Stern-Brocot tree} (see, for instance, \cite{EL, GKP}). The labeling is constructed as follows. The label $\frac{p}{q}$ in a node is given by $\frac{p' + p''}{q'+q''}$, where $\frac{p'}{q'}$ is the nearest ancestor above and to the left and $\frac{p''}{q''}$ is the nearest ancestor above and to the right (in order to construct the tree one needs also two {\em extra}  nodes labeled by $\frac{1}{0}$ and $\frac{0}{1}$).  We let $Sb(w)$ denote the fraction labeling the node represented by the binary word $w$. 

An important relation between the Raney and the Stern-Brocot tree is given by the following lemma\footnote{A suitable generalization of the Raney and of Stern-Brocot tree in the case of alphabets with more than two letters and of Lemma \ref{SBT}  is in \cite{adlZ2}.} (see, for instance, \cite{BDL})

\begin{lemma}\label{SBT} For  all $w\in {\cal A}^*$, one has $Sb(w) = Ra(w^{\sim})$.
\end{lemma}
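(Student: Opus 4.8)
The plan is to prove $Sb(w) = Ra(w^{\sim})$ by induction on $|w|$, comparing the two recursive labeling rules after reversing the path word. First I would fix notation: for a fraction $\frac{p}{q}$ write the \emph{left child operation} $L\!\left(\frac{p}{q}\right)=\frac{p}{p+q}$ and the \emph{right child operation} $R\!\left(\frac{p}{q}\right)=\frac{p+q}{q}$, so that in Raney's tree $Ra(wa)=L(Ra(w))$ and $Ra(wb)=R(Ra(w))$, with $Ra(\varepsilon)=\frac{1}{1}$. The base case $w=\varepsilon$ is immediate: $w^{\sim}=\varepsilon$ and both trees have root label $\frac{1}{1}$ (the Stern-Brocot root, mediant of the two extra nodes $\frac{1}{0}$ and $\frac{0}{1}$, is $\frac{1+0}{0+1}=\frac{1}{1}$).

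The key step is to re-express the Stern-Brocot mediant rule as a child operation that acts on the \emph{left} of the path word rather than the right. Concretely, I would show that for any $w\in\Aa^{*}$ one has $Sb(aw)=L(Sb(w))$ and $Sb(bw)=R(Sb(w))$. Geometrically this is the statement that prepending a letter to the Stern-Brocot path word transforms the fraction exactly the way appending the same letter transforms a Raney fraction. To see it, note that the node reached by the path $aw$ lies in the left subtree of the root; within that left subtree, the nearest left ancestor and nearest right ancestor of a given node, read along the suffix path $w$, behave identically to the ancestors in the whole tree, except that the global "left boundary'' ancestor $\frac{0}{1}$ gets replaced by the root $\frac{1}{1}$ at the appropriate moments — and a short calculation with mediants shows that this replacement is precisely the effect of applying $L$ to the whole labeling of that subtree. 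The same argument with $b$ and $R$ handles the right subtree. I would state and prove this as the main lemma; a clean way is to prove by induction on $|w|$ that the pair $(\text{left ancestor of }Sb(aw),\,Sb(aw))$ equals the image under $L$ of $(\text{left ancestor of }Sb(w),\,Sb(w))$, tracking ancestors simultaneously so that the mediant recursion closes.

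Granting that lemma, the theorem follows by a two-line induction. Write $w = xw'$ with $x\in\Aa$, so $w^{\sim}=(w')^{\sim}x$. By the lemma, $Sb(w)=Sb(xw')$ equals $L(Sb(w'))$ if $x=a$ and $R(Sb(w'))$ if $x=b$. By the induction hypothesis $Sb(w')=Ra((w')^{\sim})$, and by the Raney recursion $Ra((w')^{\sim}a)=L(Ra((w')^{\sim}))$, $Ra((w')^{\sim}b)=R(Ra((w')^{\sim}))$. Combining, $Sb(xw')=Ra((w')^{\sim}x)=Ra(w^{\sim})$ in both cases, completing the induction.

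I expect the main obstacle to be the bookkeeping in the lemma: making precise what "nearest ancestor above and to the left/right'' means after restricting to a subtree, and verifying that the boundary node substitution ($\frac{0}{1}\leftrightarrow\frac{1}{1}$ on the left, $\frac{1}{0}\leftrightarrow\frac{1}{1}$ on the right) is exactly implemented by $L$ (resp.\ $R$). Everything else is a formal induction. An alternative, possibly cleaner, route that avoids ancestor-tracking is to use the known $2\times2$ unimodular-matrix description of both trees — associating to $a$ and $b$ the matrices $L=\left(\begin{smallmatrix}1&0\\1&1\end{smallmatrix}\right)$ and $R=\left(\begin{smallmatrix}1&1\\0&1\end{smallmatrix}\right)$ — whereupon $Ra(w)$ is read off from the product $L/R$ along $w$ and $Sb(w)$ from the product along $w$ taken in the opposite order; since reversing a word reverses the matrix product, $Sb(w)=Ra(w^{\sim})$ drops out after identifying how each product yields its fraction. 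If the paper is willing to invoke that matrix formalism (it is in \cite{BDL}), I would present that version instead, as it reduces the whole proof to "reversal of a word reverses a product of matrices.''
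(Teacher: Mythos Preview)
The paper does not actually prove this lemma; it simply states it with a citation to \cite{BDL} (``see, for instance, \cite{BDL}''), so there is no in-paper argument to compare against. Your proposal is sound: the key claim $Sb(aw)=L(Sb(w))$ and $Sb(bw)=R(Sb(w))$ is correct and, once established, the induction you outline goes through immediately. Your alternative via the $2\times 2$ matrices $L=\left(\begin{smallmatrix}1&0\\1&1\end{smallmatrix}\right)$, $R=\left(\begin{smallmatrix}1&1\\0&1\end{smallmatrix}\right)$ is in fact the cleaner route and is essentially how the result is obtained in \cite{BDL}: one checks that $Ra(w)$ is read from the last column of the product taken along $w$, while $Sb(w)$ is read from the product along $w^{\sim}$ (equivalently, from a different column/row of the same product, using that these matrices are symmetric so transposing the product reverses the word). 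If you are writing this up, I would lead with the matrix version and drop the ancestor-tracking argument, whose bookkeeping you correctly flag as the only real obstacle.
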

\noindent Moreover, the following hold:
\begin{lemma}\label{lemma:complement} For  all $w\in {\cal A}^*$, 
$$ Ra(\bar w) = \frac{1}{Ra(w)} \ \  \mbox{and}  \ \  Sb(\bar w) = \frac{1}{Sb(w)}.$$
\end{lemma}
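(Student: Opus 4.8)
The plan is to prove the Raney identity $Ra(\bar w)=1/Ra(w)$ directly by induction on $|w|$, using the fact that the complement automorphism $(^-)$ interchanges the two children of every node of the tree, and then to read off the Stern--Brocot identity from it via Lemma~\ref{SBT}.

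First I would settle the base case: $\bar\varepsilon=\varepsilon$ and $Ra(\varepsilon)=\frac11$, which is its own reciprocal. For the inductive step, assume $Ra(\bar w)=1/Ra(w)$ and write $Ra(w)=\frac pq$, so $Ra(\bar w)=\frac qp$. Recall from the recursive description of the labeling that whenever a node $u$ carries the label $\frac rs$, the node $ua$ carries $\frac{r}{r+s}$ and the node $ub$ carries $\frac{r+s}{s}$. Now distinguish the two cases according to the letter appended to $w$. For $wa$ we have $\overline{wa}=\bar w\,b$, so $Ra(wa)=\frac{p}{p+q}$ while $Ra(\overline{wa})=\frac{q+p}{p}$, and these are mutually reciprocal. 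For $wb$ we have $\overline{wb}=\bar w\,a$, so $Ra(wb)=\frac{p+q}{q}$ while $Ra(\overline{wb})=\frac{q}{q+p}$, again mutually reciprocal. This closes the induction and proves the first identity.

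For the second identity I would invoke Lemma~\ref{SBT}, which gives $Sb(u)=Ra(u^\sim)$ for every $u\in{\cal A}^*$. Since complementation acts letter by letter while reversal merely permutes positions, the two operations commute, so $(\bar w)^\sim=\overline{w^\sim}$. Hence
$$Sb(\bar w)=Ra\big((\bar w)^\sim\big)=Ra\big(\overline{w^\sim}\big)=\frac{1}{Ra(w^\sim)}=\frac{1}{Sb(w)},$$
where the middle equality is the identity just proved, applied to the word $w^\sim$, and the two outer equalities are Lemma~\ref{SBT}.

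I do not anticipate any real obstacle here: the argument is a short induction followed by one application of a known lemma. The only point that needs attention is keeping straight the correspondence between the letters $a,b$ and the two children of a node — it is precisely the relation $\bar a=b$ that turns ``go to the $\frac{r}{r+s}$-child'' into ``go to the $\frac{r+s}{r}$-child'', which is what produces the reciprocal. As an alternative (more combinatorial) route, one could use property P6 of Proposition~\ref{prop:basicp}: $\psi(\bar w)=\overline{\psi(w)}$, so $\psi(\bar w)$ has the same unordered pair of coprime periods as $\psi(w)$; but those two periods swap their roles as ``the minimal one'' and ``the other one'' because $\bar w$ ends with the complement of the last letter of $w$, and in view of the definition of $Ra$ this is exactly $Ra(\bar w)=1/Ra(w)$ — after which the Stern--Brocot case follows as above.
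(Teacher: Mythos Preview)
Your proof is correct. The paper actually states Lemma~\ref{lemma:complement} without proof, so there is no argument to compare against; your induction on $|w|$ using the recursive definition of the Raney labeling, followed by Lemma~\ref{SBT} and the commutation of reversal with complementation, is exactly the natural way to fill this gap. One small remark on the alternative sketch at the end: rather than phrasing it in terms of the minimal period swapping roles, it is cleaner to invoke Proposition~\ref{prop:raney} directly and observe that $p_a(\bar w)=|\mu_{\bar w}(a)|=|\overline{\mu_w(b)}|=p_b(w)$ (and symmetrically), which immediately gives $Ra(\bar w)=p_b(w)/p_a(w)=1/Ra(w)$.
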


The  Raney  and Stern-Brocot numbers  $Ra(w)$ and $Sb(w)$ are strictly related respectively to the ratio of periods of
the central word $\psi(w)$ and to the slope of Cristoffel word $a\psi(w)b$ as follows \cite{BDL}:

\begin{proposition}\label{prop:raney} Let $w$ be the directive word of the central word $\psi(w)$. Then
$$ Ra(w) = \frac{p_a(w)}{p_b(w)}, \ \ Sb(w)= \frac{|a\psi(w)b|_b}{|a\psi(w)b|_a}. $$
\end{proposition}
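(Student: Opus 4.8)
The plan is to prove both identities simultaneously by induction on $|w|$, exploiting the recursive structure shared by the Raney tree, the period functions $p_a, p_b$, and the letter-count functions $|a\psi(\cdot)b|_a, |a\psi(\cdot)b|_b$. First I would check the base case $w = \varepsilon$: here $\psi(\varepsilon) = \varepsilon$, so $a\psi(\varepsilon)b = ab$, giving $|ab|_a = |ab|_b = 1$ and $p_a(\varepsilon) = p_b(\varepsilon) = |\mu_\varepsilon(x)| = 1$; since $Ra(\varepsilon) = Sb(\varepsilon) = \frac{1}{1}$, both identities hold. For the inductive step, I would write $w = vx$ with $x \in \{a,b\}$ and assume the claim for $v$. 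The key is Lemma \ref{periods}, which tells us that passing from $v$ to $vx$ sends the pair $(p_x(v), p_{\bar x}(v))$ to $(p_x(v),\, p_x(v) + p_{\bar x}(v))$: the coordinate indexed by the appended letter is unchanged, while the other coordinate becomes the sum. This is exactly the child rule of the Raney tree, where a node $\frac{p}{q}$ has children $\frac{p}{p+q}$ (left, i.e.\ append $a$) and $\frac{p+q}{q}$ (right, i.e.\ append $b$).

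For the first identity, suppose inductively $Ra(v) = \frac{p_a(v)}{p_b(v)}$. If $x = a$, then by Lemma \ref{periods}, $p_a(va) = p_a(v)$ and $p_b(va) = p_a(v) + p_b(v)$, so $\frac{p_a(va)}{p_b(va)} = \frac{p_a(v)}{p_a(v)+p_b(v)}$, which is precisely the label $Ra(va)$ of the left child of the node $Ra(v)$. Symmetrically, if $x = b$ then $p_b(vb) = p_b(v)$ and $p_a(vb) = p_a(v) + p_b(v)$, matching the right child $Ra(vb) = \frac{p_a(v)+p_b(v)}{p_b(v)}$. Since $\gcd(p_a(v), p_b(v)) = 1$ (stated before equation (\ref{eq:perpsi})), the fractions are in lowest terms, so the equality of fractions is genuine and the induction closes.

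For the second identity, the cleanest route is to reduce it to the first via Lemma \ref{SBT}, which gives $Sb(w) = Ra(w^{\sim})$, together with item 3) of Theorem \ref{thm:Lynd}, which states $|a\psi(w)b|_b = p = p_a(w^{\sim})$ and $|a\psi(w)b|_a = q = p_b(w^{\sim})$. Combining these with the first identity applied to $w^{\sim}$:
$$Sb(w) = Ra(w^{\sim}) = \frac{p_a(w^{\sim})}{p_b(w^{\sim})} = \frac{|a\psi(w)b|_b}{|a\psi(w)b|_a},$$
as desired. (Alternatively, one could give a direct induction on $|w|$ using Corollary \ref{lemma:ineq} or Proposition \ref{prop:+-} to track how appending a letter changes the counts $|a\psi(w)b|_a$ and $|a\psi(w)b|_b$, but routing through the reversal is shorter.) The main obstacle, such as it is, lies in verifying that the direction conventions align — that appending $a$ corresponds to the \emph{left} child in the Raney tree while, under the reversal, it interacts correctly with Theorem \ref{thm:Lynd}'s assignment of $p = p_a(w^{\sim})$ to $|w|_b$ rather than $|w|_a$; getting these left/right and $a/b$ bookkeeping conventions consistent is the only place an error could creep in, and it should be settled by carefully tracing one small nontrivial example such as $w = ab$ against the figure.
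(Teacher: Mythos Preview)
The paper does not supply its own proof of Proposition~\ref{prop:raney}; it cites the result from \cite{BDL}. Your inductive argument for the first identity via Lemma~\ref{periods} is correct and is essentially the natural proof.

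For the second identity, however, your route through item 3) of Theorem~\ref{thm:Lynd} is circular within the logical structure of this paper. Immediately after Proposition~\ref{prop:raney}, the paper explicitly writes: ``By the preceding proposition and Lemma~\ref{SBT}, one derives the following important duality property expressed in item 3) of Theorem~\ref{thm:Lynd}'', namely $|a\psi(w)b|_b = p_a(w^{\sim})$ and $|a\psi(w)b|_a = p_b(w^{\sim})$. So the very equalities you invoke from Theorem~\ref{thm:Lynd} are, in this paper's presentation, \emph{consequences} of Proposition~\ref{prop:raney}, not inputs to it. Your alternative suggestion---a direct induction tracking how $|a\psi(w)b|_a$ and $|a\psi(w)b|_b$ change when a letter is appended---is the right fix; concretely, from Justin's formula $\psi(vx)=\mu_v(x)\psi(v)$ and Proposition~\ref{prop:standStu} one can read off how the letter counts in $a\psi(vx)b$ relate to those in $a\psi(v)b$, and verify that they obey the Stern--Brocot mediant rule. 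That, or simply invoking Lemma~\ref{SBT} together with your already-proved first identity applied to $w^{\sim}$ and an \emph{independent} verification that $|a\psi(w)b|_x = p_{\bar x}(w^{\sim})$, would close the gap without circularity.
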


By the preceding proposition and Lemma \ref{SBT},  one derives the following important duality property expressed in item 3) of Theorem \ref{thm:Lynd}

$$ |a\psi(w)b|_b= p_a(w^{\sim}), \ \ |a\psi(w)b|_a= p_b(w^{\sim}).$$
By Lemma \ref{periods} one has for any $w\in {\cal A}^*$
\begin{equation}\label{eq:ra1} 
Ra(wa)= \frac{p_a(w)}{p_a(w)+p_b(w)}, \ \  Ra(wb)= \frac{p_a(w)+p_b(w)}{p_b(w)}.
\end{equation}
Finally, we mention that both the trees can be viewed as specializations of a tree formed by ordered pairs of binary words, called
the {\em Christoffel tree} \cite{BDL, ADLReut}.
\section{The Stern sequence}\label{sec:five}

Let us enumerate the nodes of the complete binary tree as follows. The root $\varepsilon$ is numbered by $2$. If a node
$w$ has the number $\nu(w)$, then the left child $wa$ has the number $\nu(wa)= 2\nu(w)-1$ and the right child  $wb$
has the number $\nu(wb)= 2\nu(w)$. The numbering $\nu$ is a bijection of ${\cal A}^*$ into the set $\Nn_2$ of all integers
$\geq 2$ having for all $w\in {\cal A}^*$ and $n>1$
$$ \nu(w) = \langle bw \rangle +1\ \  \mbox{and}  \ \ \nu^{-1}(n) = b^{-1}[n-1]_2,$$
where $b^{-1}[n-1]_2$ is the word obtained by cancelling the first digit in the binary expansion of $n-1$. Let us set for $n>1$
$$ ra(n) = Ra(\nu^{-1}(n)),$$
and let  $s(n)$ denote  the denominator of the fraction $ra(n)$. By induction on $n$ and by (\ref{eq:ra1}) one derives:
\begin{equation}\label{eq:ra}
ra(n)= \frac{s(n-1)}{s(n)}.
\end{equation}
The sequence $s(n)$ is the famous {\em Stern sequence} \cite{Stern}  which can be inductively
defined as: $s(0)=0$, $s(1)=1$ and for  $n\geq 1$,
$$ \left \{\begin{array}{c}
s(2n) = s(n) \\
s(2n+1)= s(n)+s(n+1).
\end{array} \right .$$

The first few terms of Stern's sequence are (cf.~sequence A2487 in \cite{Sloane}):
$$ 0, 1, 1, 2, 1, 3, 2, 3, 1, 4, 3, 5, 2, 5, 3, 4, 1, 5, 4, 7, 3, 8, 5, 7, 2, 7, 5, 8, 3, 7, 4, 5, 1,\ldots$$

There exists a large literature on Stern's sequence since it satisfies a great number of beautiful and surprising properties mainly for what concerns various combinatorial interpretations that can be associated to its terms. In this paper we are mainly interested in the combinatorial properties which are related  to Christoffel and central words.

In the following  for any $w\in {\cal A}^*$,  we shall set $${\hat s} (w) = s(\nu(w)). $$

From Lemma \ref{lemma:complement} one easily derives a well known identity on Stern's sequence (see, for instance, \cite{North})

\begin{lemma} For any $k\geq 0$ and $1\leq p \leq 2^k$,
$$ s(2^k+p) = s(2^{k+1}-p).$$
\end{lemma}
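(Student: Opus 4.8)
The plan is to derive the identity $s(2^k+p) = s(2^{k+1}-p)$ directly from the relation $Ra(\bar w) = 1/Ra(w)$ of Lemma~\ref{lemma:complement}, by translating it into a statement about the numbering $\nu$ and the function $s$. First I would fix $k \geq 0$ and $1 \leq p \leq 2^k$, and let $n = 2^k + p$, so that $2^k < n \leq 2^{k+1}$, i.e.\ $n$ ranges over exactly the integers whose binary expansion has length $k+1$. Set $w = \nu^{-1}(n)$; by the formula $\nu(w) = \langle bw\rangle + 1$ we have $n - 1 = \langle bw\rangle$, so $[n-1]_2 = bw$ and $w$ has length $k$ (padding with leading $a$'s as needed, since $a$ is identified with the digit $0$). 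The key computation is to identify $\nu(\bar w)$: since $\overline{bw} = a\bar w$ and leading $a$'s do not change the standard interpretation, $\langle b\bar w\rangle = \langle b\bar w\rangle$, and a short calculation with base-$2$ expansions gives that the integer with binary expansion $b\bar w$ (of length $k+1$) is exactly $2^{k+1} - 1 - \langle bw\rangle \cdot(\text{correction})$; more precisely, complementing all $k+1$ bits of the length-$(k+1)$ word $bw$ sends the value $m$ to $2^{k+1} - 1 - m$, and the word $b\bar w$ is obtained from $\overline{bw}$ by flipping only its leading bit from $a$ to $b$, which adds $2^k$. Hence $\nu(\bar w) = \langle b\bar w\rangle + 1 = 2^k - (m - 2^k) = \ldots$, and carrying this through yields $\nu(\bar w) = 2^{k+1} + 2^k - n + 1$; re-examining, the clean target is $\nu(\bar w) = 2^{k+1} - p + $ (a constant), so I would pin down the constant by checking the endpoints $p=1$ and $p = 2^k$.

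Once the correct formula $\nu(\bar w) = 2^{k+1} - p$ (or whatever the bookkeeping produces) is established, the rest is immediate: by Lemma~\ref{lemma:complement}, $Ra(\bar w) = 1/Ra(w)$, so the denominator of $ra(\nu(\bar w))$ equals the numerator of $ra(\nu(w))$. But by~\eqref{eq:ra}, $ra(n) = s(n-1)/s(n)$, so the numerator of $ra(n)$ is $s(n-1)$ and its denominator is $s(n)$. Applying this to both $n = \nu(w)$ and $\nu(\bar w)$, the identity $Ra(\bar w) = 1/Ra(w)$ forces the denominator of $ra(\nu(\bar w))$ to equal $s(n-1)$. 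This is not quite the statement we want; instead I would use that complementation is an involution on $\mathcal{A}^*$ together with the symmetry of the two children rules, and more directly: the map $w \mapsto \bar w$ corresponds at the level of $\nu$ to the reflection $n \mapsto 2^{k+1}+2^k+1-n$ on each level $\Nn_2 \cap (2^k, 2^{k+1}]$, and $Ra(\bar w)=1/Ra(w)$ says $s$ is invariant under this reflection composed with the shift, which unwinds to exactly $s(2^k+p)=s(2^{k+1}-p)$.

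Alternatively — and this is probably the cleaner route, which I would actually write up — I would avoid $\nu$ altogether and argue purely on words. The central word $\psi(w)$ has periods $p_a(w)$ and $p_b(w)$ with $s(\nu(w))$ equal to one of them according to the last letter; by property P6, $\psi(\bar w) = \overline{\psi(w)}$, and complementation preserves lengths and periods, so $p_a(\bar w) = p_b(w)$ and $p_b(\bar w) = p_a(w)$. Combined with $\nu(\bar w)$ lying on the same level as $\nu(w)$ and the explicit level-reflection formula, the claimed identity drops out. The main obstacle is purely the index bookkeeping: correctly relating $\nu(\bar w)$ to $\nu(w)$ on a fixed level of the tree, i.e.\ verifying that complementing the directive word reflects the node index $2^k + p$ to $2^{k+1} - p$ (and checking the boundary cases $p = 1$, where $w = a^k$, $\bar w = b^k$, and $p = 2^k$, where $w = b^k$). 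Everything after that identification is a one-line consequence of Lemma~\ref{lemma:complement} and~\eqref{eq:ra}, or of property P6 together with the period interpretation of $s$.
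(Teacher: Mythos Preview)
Your approach is exactly the paper's: use Lemma~\ref{lemma:complement} together with~\eqref{eq:ra} and the level-reflection formula for $\nu(\bar w)$. Your bookkeeping is actually correct where you commit to it: from $\langle bw\rangle = n-1$ one gets $\langle a\bar w\rangle = 2^{k+1}-n$ and hence $\langle b\bar w\rangle = 2^{k+1}+2^k-n$, so $\nu(\bar w)=2^{k+1}+2^k-n+1=2^{k+1}-p+1$; there is no need to ``pin down the constant by checking endpoints''.

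The only place you go astray is the sentence ``This is not quite the statement we want''. It is. Since the fractions in the Raney tree are irreducible, $Ra(\bar w)=1/Ra(w)$ forces both cross-identities, and in particular
\[
s(\nu(\bar w)-1)=s(\nu(w)),
\]
which with $\nu(w)=2^k+p$ and $\nu(\bar w)=2^{k+1}-p+1$ is literally $s(2^{k+1}-p)=s(2^k+p)$. You were looking at the companion equality $s(\nu(\bar w))=s(\nu(w)-1)$, which is just the same identity with $p$ shifted by one. Drop the alternative route via P6 and periods; the first argument, once you stop hedging, is a clean three-line proof identical to the paper's.
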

\begin{proof} Let $k\geq 0$ and $w\in {\cal A}^k$. By (\ref{eq:ra}) one has $Ra(w)= \frac{s(\nu(w)-1)}{s(\nu(w))}$. By Lemma~\ref{lemma:complement},  $Ra(\bar w) = \frac{1}{Ra(w)}$ so that 
$$s(\nu(\bar w)-1) = s(\nu(w)).$$
Setting $\nu(w)= 2^k+p$, with $1\leq p \leq 2^k$, one has $\nu(\bar w)= 2^{k+1}-p+1$ and the result follows.
\end{proof}

Let us  observe that there exists a basic correspondence between the values of Stern's sequence on {\em odd} integers and the lengths of Christoffel words as well as a correspondence between the values of the sequence on {\em even} integers and the minimal periods of central words. More precisely the following hold:

\begin{theorem}\label{lem:SS} For any $w\in {\cal A}^*$ one has
$${\hat s}(wa)= s(\langle bwb\rangle) = |a\psi(w)b|, \ \ {\hat s}(wb)= s(\langle bwb\rangle+1) = \pi(\psi(wb)).$$
\end{theorem}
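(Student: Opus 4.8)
The plan is to prove the two chains of equalities in Theorem \ref{lem:SS} by induction on $|w|$, exploiting the recursive definition of $s$ together with the recursions for periods from Lemma \ref{periods} and the length identities from Proposition \ref{prop:+-} and Corollary \ref{cor:Lyn+-}. First I would record the arithmetic translations of the node numbering: since $\nu(w) = \langle bw\rangle + 1$, we have $\nu(wa) = 2\nu(w) - 1 = 2\langle bw\rangle + 1 = \langle bwa\rangle + 1$ and, more to the point, $\langle bwb\rangle = 2\langle bw\rangle + 1 = 2\nu(w) - 1 = \nu(wa)$, while $\langle bwb\rangle + 1 = 2\nu(w) = \nu(wb)$. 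Hence the middle terms $s(\langle bwb\rangle)$ and $s(\langle bwb\rangle + 1)$ are just $\hat s(wa)$ and $\hat s(wb)$ by definition of $\hat s$, so the first equality in each chain is immediate and the real content is the identification with $|a\psi(w)b|$ and $\pi(\psi(wb))$.

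For that, I would use the relation $ra(n) = s(n-1)/s(n)$ from \eqref{eq:ra}, which gives $\hat s(w) = s(\nu(w))$ as the denominator of $Ra(w)$, and Proposition \ref{prop:raney}, which says $Ra(w) = p_a(w)/p_b(w)$ in lowest terms. Since $\gcd(p_a(w), p_b(w)) = 1$, it follows that $\hat s(w) = p_b(w)$ for every $w$ — wait, one must be careful: $Ra(w) = p_a(w)/p_b(w)$ is already reduced, so the denominator $s(\nu(w))$ equals $p_b(w)$. Then by Lemma \ref{periods}, $p_b(wa) = p_a(w) + p_b(w)$ and $p_b(wb) = p_b(w)$; combined with \eqref{eq:cent1}, namely $|\psi(w)| = p_a(w) + p_b(w) - 2$, we get $|a\psi(w)b| = p_a(w) + p_b(w) = p_b(wa) = \hat s(wa)$, which is the first chain. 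For the second chain, $\hat s(wb) = p_b(wb)$, and by Corollary \ref{cor:Lyn+-} (the identity $\pi(\psi(v)) = p_{v^L}(v)$ applied to $v = wb$) we have $\pi(\psi(wb)) = p_b(wb)$, giving $\hat s(wb) = \pi(\psi(wb))$.

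Thus the argument is essentially a bookkeeping exercise: the substantive inputs are Proposition \ref{prop:raney} (to pin down that $\hat s(w) = p_b(w)$), Lemma \ref{periods} (for the behavior of $p_b$ under appending a letter), the length formula \eqref{eq:cent1}, and Corollary \ref{cor:Lyn+-} (for the minimal-period identity). I expect the one point needing genuine care — the step most likely to hide an error — is verifying that $Ra(w) = p_a(w)/p_b(w)$ is in lowest terms so that its denominator is literally $p_b(w)$ and not some divisor; but this is guaranteed because $\gcd(p_a(w), p_b(w)) = 1$ is recorded in the text right after \eqref{eq:minimalperiod}. One should also double-check the base case $w = \varepsilon$: there $p_a(\varepsilon) = p_b(\varepsilon) = 1$, $\hat s(a) = s(\langle bb\rangle) = s(3) = 2 = |a\varepsilon b| \cdot{}$… indeed $|a\psi(\varepsilon)b| = |ab| = 2$, and $\hat s(b) = s(4) = 1 = \pi(\psi(b)) = \pi(b) = 1$, so everything checks out and no induction is even strictly necessary once the three ingredients above are in place.
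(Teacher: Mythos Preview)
Your proposal is correct and follows essentially the same approach as the paper. The paper's proof is marginally more direct in that it reads off $\hat s(wa)=p_a(w)+p_b(w)$ and $\hat s(wb)=p_b(w)$ straight from \eqref{eq:ra1} (rather than first establishing $\hat s(w)=p_b(w)$ and then invoking Lemma~\ref{periods}), and it obtains $\pi(\psi(wb))=p_b(w)$ from \eqref{eq:minimalperiod} rather than via Corollary~\ref{cor:Lyn+-}; but these are cosmetic differences, and your observation that no induction is actually needed is exactly right.
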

\begin{proof}
 By (\ref{eq:cent1}) one has $|a\psi(w)b|= 2+ |\psi(w)| = p_a(w)+ p_b(w)$. Moreover,
$${\hat s}(wa)= s(\nu(wa))= s(2\nu(w)-1) = s(\langle bwb\rangle).$$
By (\ref{eq:ra1}) one has ${\hat s}(wa)= p_a(w)+ p_b(w)$, so that the first result follows.
By  (\ref{eq:ra1}) and (\ref{eq:minimalperiod}) one has $${\hat s}(wb)=s(\nu(wb))= s(2\nu(w))=s(\langle bwb\rangle+1)= s(\nu(w))=p_b(w) =\pi(\psi(wb)),$$
which proves the second assertion of the proposition.
\end{proof}

As a consequence of the previous correspondence several results on Stern's sequence can be proved by using the theory
of Sturmian words and, conversely, properties of Stern's sequence can give a new insight in the combinatorics of Christoffel and central words.

\begin{proposition} For each $k\geq 3$ and  $0\leq p \leq 2^{k-3}-1$ one has:
$$s(2^k+8p+1) < s(2^k+8p+3) \ \mbox{and} \ \ s(2^k+8p+5) > s(2^k+8p+7).$$
\end{proposition}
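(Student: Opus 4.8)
The plan is to translate the proposition entirely into statements about Christoffel words via Theorem~\ref{lem:SS} and Corollary~\ref{lemma:ineq}. First I would rewrite the four relevant Stern values. For $0\le p\le 2^{k-3}-1$, set $n=2^{k-3}+p$, so that $[n]_2 = bw$ for a unique $w\in\Aa^{k-3}$ with $\langle bw\rangle = n$; then $\langle bwb\rangle = 2n+1 = 2^{k-2}+2p+1$, and more generally I would write out the binary expansions of $2^k+8p+1,\ 2^k+8p+3,\ 2^k+8p+5,\ 2^k+8p+7$ as $bw$ followed by three more bits, namely $\langle bwaaa\rangle,\ \langle bwaba\rangle,\ \langle bwbaa\rangle,\ \langle bwbba\rangle$ respectively (using $a=0$, $b=1$). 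Here is the key observation: these are exactly $\langle b(waa)b\rangle,\ \langle b(wab)b\rangle,\ \langle b(wba)b\rangle,\ \langle b(wbb)b\rangle$, wait — more carefully, $2^k+8p+1 = \langle bw\,a\,a\,b\rangle$? No: the last bit is $1=b$, the preceding two are $00=aa$. So $2^k+8p+1 = \langle (bwa)ab\rangle$... the cleanest bookkeeping is to recognize each as $s(\langle b u b\rangle)$ with $u\in\{waa,\,wab,\,wba,\,wbb\}\subseteq\Aa^k$, since $\langle b u b\rangle = 2^k + 8p + \langle (\text{last three bits of }u)\rangle_{\text{in the middle}}\cdot 2 + 1$; I would verify $waa\leftrightarrow 8p+1$, $wab\leftrightarrow 8p+3$... actually $wab$ gives middle bits $010_2 = 2$ hmm, wait $u=wab$ means the three appended bits are $a,b$? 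No, $u\in\Aa^k$, $u=w x_1 x_2$ — let me just say I would carefully identify, by direct computation of $\langle bub\rangle$, that the four quantities $s(2^k+8p+r)$ for $r\in\{1,3,5,7\}$ equal $|a\psi(u)b|$ for $u = waa,\ wab,\ wba,\ wbb$ in some order determined by the two appended letters.

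Once this identification is made, by Theorem~\ref{lem:SS} the first inequality $s(2^k+8p+1)<s(2^k+8p+3)$ becomes $|a\psi(waa)b| < |a\psi(wab)b|$, and the second $s(2^k+8p+5)>s(2^k+8p+7)$ becomes $|a\psi(wba)b| > |a\psi(wbb)b|$. Now apply Corollary~\ref{lemma:ineq} with the word $v=wa$ (which is nonempty since $v\in\Aa^{k-2}$ and $k\ge 3$): since $(wa)^L = a$, we get $|a\psi(waa)b| < |a\psi(wab)b|$, which is exactly the first inequality. Similarly apply Corollary~\ref{lemma:ineq} with $v = wb$: since $(wb)^L = b$, we get $|a\psi(wba)b| > |a\psi(wbb)b|$, the second inequality. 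This uses that both $wa$ and $wb$ are legitimate nonempty words in $\Aa^+$, which holds as $k-3\ge 0$.

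The only genuine work is the indexing lemma at the start: matching the four arithmetic progressions $2^k+8p+\{1,3,5,7\}$ with the words $waa, wab, wba, wbb$ and confirming which appended pair gives which residue. I expect this to be the main (though entirely routine) obstacle — one must be careful that appending a letter $x$ to the directive word corresponds at the level of the enumeration $\nu$ to the bit manipulations in $\langle bwxb\rangle = 2\langle bwx\rangle + 1$, and that $p$ ranging over $0,\dots,2^{k-3}-1$ corresponds precisely to $w$ ranging over all of $\Aa^{k-3}$. After that, the proposition follows in two lines from Corollary~\ref{lemma:ineq}. I would also remark that summing or specializing this gives, conversely, the combinatorial content that $va$ and $vb$ compare oppositely according to $v^L$ — i.e.\ the proposition is simply Corollary~\ref{lemma:ineq} read through Stern's sequence.
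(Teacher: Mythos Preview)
Your approach is correct and essentially identical to the paper's proof: both translate the four Stern values via Theorem~\ref{lem:SS} into $|a\psi(va)b|$ and $|a\psi(vb)b|$ for $v=wa$ (resp.\ $v=wb$) with $w\in\Aa^{k-3}$, then invoke Corollary~\ref{lemma:ineq} according to $v^{L}$. Your indexing computation, once cleaned up, is exactly the paper's: with $p=\langle w\rangle$ one has $\langle bwaab\rangle=2^{k}+8p+1$, $\langle bwabb\rangle=2^{k}+8p+3$, $\langle bwbab\rangle=2^{k}+8p+5$, $\langle bwbbb\rangle=2^{k}+8p+7$ (so the words are in $\Aa^{k-1}$, not $\Aa^{k}$).
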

\begin{proof} By Theorem \ref{lem:SS}, one has $|a\psi(va)b|= s(\langle bvab\rangle)$ and $|a\psi(vb)b|= s(\langle bvbb\rangle)$; moreover, $\langle bvbb\rangle = \langle bvab\rangle +2$. If
$v=ua$, with $u\in {\cal A}^*$, one has $\langle bvab\rangle= 1+ 8\langle u \rangle + 2^{|u|+3}$. Let us set $|u|+3=k$; one has $0\leq \langle u \rangle < 2^{k-3}$. Thus the first inequality follows from Corollary \ref{lemma:ineq}. If $v=ub$ one derives the second inequality in a similar way. 
\end{proof}

Let us define a function $R:\Nn\to\Nn$ by $R(n)=\langle [n]_{2}^{\sim}\rangle$, i.e.,  $R(n)$ is  the integer obtained by reversing the binary expansion of $n$. Note that, with the exception of $R(0)=0$, $R(n)$ is always odd.

Let $n=\sum_{i=0}^{\ell}d_{i}2^{\ell-i}>0$, with $\ell=\lfloor\log_{2}n\rfloor$ and
$d_{i}\in\{0,1\}$ for all $i$, so that $d_{i}$ is the $(i+1)$th binary digit of $n$. By 
definition, one derives
$d_{\ell-i}=\left\lfloor n/2^{i}\right\rfloor
- 2 \left\lfloor n/2^{i+1}\right\rfloor$ for $0\leq i\leq\ell$, so that
\[R(n)=\sum_{i=0}^{\ell} \left(\left\lfloor\frac{n}{2^{i}}\right\rfloor
- 2 \left\lfloor\frac{n}{2^{i+1}}\right\rfloor\right) 2^{\ell-i}
= 2^{\ell}n - 3\sum_{i=1}^{\ell}\left\lfloor\frac{n}{2^{i}}\right\rfloor 2^{\ell-i}\,.\]

The following known result (see, for instance, \cite{IU, North}) can be simply proved  using Proposition~\ref{prop:basicp}
and Theorem~\ref{lem:SS}.
\begin{proposition}
For all $n\geq 0$, the identity $s(n)=s(R(n))$ holds.
\end{proposition}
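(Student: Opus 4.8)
The identity to establish is $s(n)=s(R(n))$, where $R(n)$ reverses the binary expansion of $n$. The plan is to reduce the claim, via the numbering $\nu$, to a statement about the palindromization map, and then invoke property P5 of Proposition~\ref{prop:basicp} together with Theorem~\ref{lem:SS}. First I would dispose of the trivial cases $n=0$ (where $R(0)=0$) and $n=1$ (where $R(1)=1$), and more generally observe that it suffices to prove the identity for $n$ of a convenient parity, since the recursion $s(2n)=s(n)$ lets one strip trailing zeros: if $n=2^j m$ with $m$ odd, then $s(n)=s(m)$, and one checks that $R(n)$ and $R(m)$ are related in a way compatible with the claim (reversing $[n]_2=[m]_2 0^j$ gives $[R(n)]_2 = 0^j \cdots$ read as an integer, i.e.\ $R(n)=R(m)$ once leading zeros are dropped — so actually $R(2^j m)=R(m)$). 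Hence we may and do assume $n$ is odd, say $n>1$.

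For odd $n>1$ write $[n]_2 = bwb$ for a unique $w\in\Aa^*$ (the binary expansion of an odd integer $>1$ starts and ends with $1=b$). Then $R(n) = \langle bw^{\sim}b\rangle$, which is again odd and $>1$. By Theorem~\ref{lem:SS}, applied once to $w$ and once to $w^{\sim}$,
\[
 s(n) = s(\langle bwb\rangle) = |a\psi(w)b|, \qquad s(R(n)) = s(\langle bw^{\sim}b\rangle) = |a\psi(w^{\sim})b|.
\]
So the whole statement collapses to the equality $|a\psi(w)b| = |a\psi(w^{\sim})b|$, i.e.\ $|\psi(w)| = |\psi(w^{\sim})|$, which is exactly property P5 of Proposition~\ref{prop:basicp}. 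That finishes the odd case, and combined with the reduction above, the general case.

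The only genuine subtlety — the step I would be most careful about — is the bookkeeping with trailing zeros of $n$ versus leading/trailing behaviour of $R$, making sure the reduction $s(n)=s(m)$, $R(n)=R(m)$ for $m$ the odd part of $n$ is stated correctly (leading zeros of a binary expansion are dropped when interpreting back as an integer, so reversing $[n]_2$ and then discarding the now-leading zeros is precisely $R(m)$). Everything else is a direct translation: the passage $[n]_2=bwb \Leftrightarrow n$ odd, $n>1$ is immediate, and $\langle (bwb)^{\sim}\rangle = \langle b w^{\sim} b\rangle$ since reversal commutes with the bracket notation on words of the form $bwb$. I would also note in passing that the explicit floor-function formula for $R(n)$ displayed just above the statement is not needed for this argument; it is there for readers who want the arithmetical shape of $R$, but the combinatorial route through $\psi$ is cleaner.
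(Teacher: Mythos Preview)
Your proof is correct and follows essentially the same route as the paper's: reduce to the odd case by stripping trailing zeros (the paper writes $[n]_2=bwba^k$ directly, you phrase it as passing to the odd part $m$ with $R(2^jm)=R(m)$), then apply Theorem~\ref{lem:SS} on both $w$ and $w^\sim$ and conclude via property~P5 of Proposition~\ref{prop:basicp}. The only cosmetic difference is that the paper singles out powers of~$2$ as a separate trivial case, whereas you absorb them into the reduction plus the base case $n=1$.
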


\begin{proof}
The assertion is trivial if $n=0$ or if $n$ is a power of 2, so we can write
$[n]_{2}=bwba^{k}$ for some $w\in\Aa^{*}$ and $k\geq 0$. By Theorem~\ref{lem:SS}, 
Proposition~\ref{prop:basicp}, and the definition of $s$, we have
\[
\begin{split}
s(R(n)) &= s(\langle a^{k}bw^{\sim}b\rangle)=s(\langle bw^{\sim}b\rangle)
=|a\psi(w^{\sim})b|\\
&=|a\psi(w)b|=s(\langle bwb\rangle)=s(2^{k}\langle bwb\rangle)=s(\langle bwba^{k}\rangle)
=s(n)
\end{split}\]
as desired.
\end{proof}

For each $n>1$, let us set  $L(n)=\lceil \log_2 n\rceil-1$ 
and define for  $1\leq k \leq  L(n)$,
$$\delta_k(n)= \left \lfloor \frac{n - 2^{L(n)}-1}{2^{L(n) -k}} \right \rfloor -  \left \lfloor \frac{n - 2^{L(n)}-1}{2^{L(n) -k+1}} \right \rfloor.$$

\begin{proposition} For $n>1$,
$$s(2n-1)= 2 + \sum_{k=1}^{L(n)} s(2^{k-1}+ \delta_k(n)).$$
\end{proposition}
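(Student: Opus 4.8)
The plan is to interpret the right-hand side via Proposition~\ref{prop:periods} applied to a suitable directive word, and to identify the summands $s(2^{k-1}+\delta_k(n))$ with the minimal periods $\pi(\psi(v_1\cdots v_i))$ appearing there. First I would use Theorem~\ref{lem:SS} to rewrite the left-hand side: since $2n-1 = 2\nu(w)-1 = \nu(wa)$ where $w=\nu^{-1}(n)$, we have $s(2n-1) = \hat s(wa) = |a\psi(w)b| = 2 + |\psi(w)|$. So it suffices to prove
$$|\psi(w)| = \sum_{k=1}^{L(n)} s\bigl(2^{k-1}+\delta_k(n)\bigr).$$
Here $w = \nu^{-1}(n) = b^{-1}[n]_2$ has length exactly $L(n) = \lceil\log_2 n\rceil - 1$ (since $[n]_2$ starts with $b$ and we delete that first digit), matching the range of the sum.

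Next I would compare with the first identity of Proposition~\ref{prop:periods}, which gives $|\psi(w)| = \sum_{i=1}^{L(n)} \pi(\psi(w_{[i]}))$ when $w = w_1\cdots w_{L(n)}$. So the heart of the matter is the termwise identity
$$\pi(\psi(w_{[k]})) = s\bigl(2^{k-1}+\delta_k(n)\bigr) \qquad (1\le k\le L(n)).$$
By Theorem~\ref{lem:SS} (second formula) and Corollary~\ref{cor:Lyn+-}, $\pi(\psi(w_{[k]}))$ equals $\hat s(w_{[k-1]}w_k) $ read appropriately — more precisely $\pi(\psi(w_{[k]})) = p_{w_k}(w_{[k]}) $, and by Theorem~\ref{lem:SS} one has $\hat s(u b) = \pi(\psi(ub))$ while $\hat s(ua) = |a\psi(u)b|$; combining with Proposition~\ref{prop:+-} one gets $\pi(\psi(w_{[k]})) = \hat s((w_{[k]})_+ \, b)$ or a similar expression that is a value of $s$ at a node of height $k-1$. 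The cleanest route is: $\pi(\psi(w_{[k]}))$ is a value $s(m)$ where $m$ is the number attached to some node of height $k-1$, hence $m = 2^{k-1} + r$ with $0 \le r < 2^{k-1}$; it remains to show $r = \delta_k(n)$.

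The main obstacle, then, is the bookkeeping showing $r = \delta_k(n)$ — i.e., that the integer offset of the relevant height-$(k-1)$ node is exactly $\lfloor (n-2^{L(n)}-1)/2^{L(n)-k}\rfloor - \lfloor (n-2^{L(n)}-1)/2^{L(n)-k+1}\rfloor$. I would attack this by writing $n$ in binary as $[n]_2 = b\,w_1 w_2\cdots w_{L(n)}$, so that $n - 2^{L(n)} - 1 = \langle w_1\cdots w_{L(n)}\rangle \cdot$ (adjusting for the trailing structure); then $\lfloor (n - 2^{L(n)}-1)/2^{L(n)-k}\rfloor$ extracts the first $k$ bits $w_1\cdots w_k$ as an integer, and $\lfloor\cdot/2^{L(n)-k+1}\rfloor$ extracts $w_1\cdots w_{k-1}$, so $\delta_k(n) = \langle w_1\cdots w_k\rangle - 2\langle w_1\cdots w_{k-1}\rangle$, which is just the bit $w_k$ when interpreted correctly — no, more precisely it is $\langle 0 w_2 \cdots w_k\rangle$-type quantity, namely the offset of the node $w_{[k]}$ within level $k$. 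One then checks that $\nu(w_{[j]}) = 2^{j} + (\text{that offset})$ via the recursion $\nu(ua)=2\nu(u)-1$, $\nu(ub)=2\nu(u)$, and matches levels. I expect this reduces to a short induction on $k$ once the binary-digit interpretation of $\delta_k$ is made explicit; the only real care needed is the off-by-one coming from the extra $-1$ in the numerator and from whether $\pi(\psi(w_{[k]}))$ corresponds to the node $w_{[k]}$ itself or to $w_{[k]}{}_+$, which is handled uniformly by Theorem~\ref{lem:SS} and the relation $s(2m)=s(m)$ collapsing trailing $a$'s.
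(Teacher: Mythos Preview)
Your overall plan is exactly the paper's: rewrite $s(2n-1)=2+|\psi(w)|$ via Theorem~\ref{lem:SS} with $w=\nu^{-1}(n)$, expand $|\psi(w)|=\sum_{k=1}^{L(n)}\pi(\psi(w_{[k]}))$ via Proposition~\ref{prop:periods}, and then identify each $\pi(\psi(w_{[k]}))$ with $s(2^{k-1}+\delta_k(n))$. Two points in your sketch need tightening, though, and the paper handles both cleanly.

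First, the step where you try to express $\pi(\psi(w_{[k]}))$ as a Stern value is vague and drifts toward $(w_{[k]})_+$ and case analysis. The direct route is the one the paper takes: from Proposition~\ref{prop:raney} and~\eqref{eq:ra} one gets $p_a(u)=s(\nu(u)-1)$ and $p_b(u)=s(\nu(u))$, so by Corollary~\ref{cor:Lyn+-},
\[
\pi(\psi(w_{[k]}))=p_{w_k}(w_{[k]})=s\bigl(\nu(w_{[k]})+\langle w_k\rangle-1\bigr)
=s\bigl(2^{k}+\langle w_1\cdots w_k\rangle+\langle w_k\rangle\bigr).
\]
No reference to $(w_{[k]})_+$ or trailing letters is needed.

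Second, your computation of $\delta_k$ has a slip: there is no factor $2$ in its definition, so with $\langle w\rangle=n-2^{L(n)}-1$ one gets
\[
\delta_k(n)=\langle w_1\cdots w_k\rangle-\langle w_1\cdots w_{k-1}\rangle
=\langle w_1\cdots w_{k-1}\rangle+\langle w_k\rangle,
\]
which is \emph{not} a single bit. This is what makes the identity work: since $\langle w_1\cdots w_k\rangle+\langle w_k\rangle=2\bigl(\langle w_1\cdots w_{k-1}\rangle+\langle w_k\rangle\bigr)=2\delta_k(n)$, the displayed Stern value above equals $s\bigl(2(2^{k-1}+\delta_k(n))\bigr)=s(2^{k-1}+\delta_k(n))$ by $s(2x)=s(x)$. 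With these two fixes your argument becomes the paper's proof verbatim.
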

\begin{proof} For any $n>1$ there exists a word $w$ such that $2n-1= \langle bwb \rangle$ and $|w|=m = L(n)$. If $n=2$, then since
$L(2)=0$ the result is trivially verified. Let us then suppose $n>2$.
By Theorem \ref{lem:SS} and Proposition \ref{prop:periods}  one has 
$$s(\langle bwb \rangle)= |a\psi(w)b| = 2+|\psi(w)|= 2+\sum_{k=1}^m \pi(\psi(w_1\cdots w_k)).$$
By Corollary \ref{cor:Lyn+-}, Proposition \ref{prop:raney}, and (\ref{eq:ra}) one derives
$$\pi(\psi(w_1\cdots w_k))= p_{w_k}(w_1\cdots w_k) = s(\nu(w_1\cdots w_k) + \langle w_k \rangle -1) = $$
$$s(\langle bw_1\cdots w_k\rangle + \langle w_k \rangle) = s(2^k +\langle w_1\cdots w_k\rangle +  \langle w_k \rangle).$$
Since $$\langle w_1\cdots w_k\rangle = \left \lfloor \frac{\langle w \rangle}{2^{m-k}} \right \rfloor  \ \mbox{and} \ \ \langle w_k \rangle = \left \lfloor \frac{\langle w \rangle}{2^{m-k}} \right \rfloor-2 \left \lfloor \frac{\langle w \rangle}{2^{m-k+1}} \right \rfloor,$$
one obtains, as $s(2x)=s(x)$, for $x\geq 0$,
$$\pi(\psi(w_1\cdots w_k))= s\left (2^{k-1}+ \left \lfloor \frac{\langle w \rangle}{2^{m-k}} \right \rfloor- \left \lfloor \frac{\langle w \rangle}{2^{m-k+1}} \right \rfloor \right).$$
Since  $\langle w \rangle = n-2^m-1$ and $m= L(n)$, the result follows.
\end{proof}

\section{Stern's sequence and continuants}\label{sec:six}

Any word $v\in {\cal A}^*$ can be uniquely represented as:
$$v=b^{a_0}a^{a_1}b^{a_2}\cdots a^{a_{n-1}}b^{a_n},$$
where $n$ is an even integer, $a_i>0$, $i=1,\dots,n-1$, and $a_0\geq 0$,
$a_n\geq 0$. 
 We call the list $(a_0, a_1,\dots,a_n)$ the
\emph{integral representation} of the word $v$.
 If  $a_n=0$ the list
$(a_0, a_1,\dots,a_{n-1})$ is called the {\em reduced}  integral representation
of $v$.

  We can
identify the word $v$ with its integral representation and  write
$v\equiv(a_0, a_1,\dots, a_n)$. One has
$$ |v| = \sum_{i=0}^{n} a_i.$$
For instance, the words $v_1= b^2aba^2$ and $v_2= a^3bab^2$ have the integral representations
$v_1\equiv (2,1,1,2,0)$ and $v_2\equiv (0,3,1,1,2)$. The empty word $\varepsilon$ has the integral representation $\varepsilon \equiv (0)$.

The following proposition (cf.~\cite{BDL}), called also {\em mirror formula}, permits to represent the Stern-Brocot and the Raney number
of a word $v\in {\cal A}^*$ in terms of {\em continued fractions} (see, for instance, \cite{ EL, GKP})  on the elements of the integral representation of $v$.

\begin{proposition} Let $v\in {\cal A}^*$ have the integral representation  $(a_0, a_1,\dots, a_n)$. If $n=0$, $Sb(v)= Ra(v)= [a_0;1]=[a_0+1]$. 
If $n>0$, then
$$ Sb(v) = [a_0; a_1, \ldots, a_{n-1}, a_n+1] , \ \ Ra(v) = [a_n; a_{n-1}, \ldots , a_1, a_0+1]. $$
\end{proposition}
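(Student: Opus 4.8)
Let $v\in\mathcal A^*$ have integral representation $(a_0,a_1,\dots,a_n)$. If $n=0$, then $Sb(v)=Ra(v)=[a_0;1]=[a_0+1]$. If $n>0$, then
$$Sb(v)=[a_0;a_1,\dots,a_{n-1},a_n+1],\qquad Ra(v)=[a_n;a_{n-1},\dots,a_1,a_0+1].$$

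The plan is to prove the formula for $Ra(v)$ by induction on $|v|$, and then derive the formula for $Sb(v)$ from the identity $Sb(v)=Ra(v^{\sim})$ (Lemma~\ref{SBT}), after noting that reversing $v$ simply reverses its integral representation: if $v\equiv(a_0,a_1,\dots,a_n)$ then $v^{\sim}\equiv(a_n,a_{n-1},\dots,a_0)$. So the whole content is in the $Ra$ formula, and a clean way to organize the induction is to strip a letter off the \emph{end} of $v$ and apply the recursion~(\ref{eq:ra1}) for $Ra(wa)$ and $Ra(wb)$.

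First I would settle the base cases: $v=\varepsilon\equiv(0)$ gives $Ra(\varepsilon)=\tfrac11=[1]=[0;1]=[0+1]$, matching the $n=0$ clause; and more generally $v=b^{a_0}\equiv(a_0)$ gives, by iterating $Ra(wb)=\frac{p_a(w)+p_b(w)}{p_b(w)}$ from the root, $Ra(b^{a_0})=\frac{a_0+1}{1}=[a_0+1]=[a_0;1]$. For the inductive step, write $v=wx$ with $x\in\mathcal A$ the last letter. I would split into cases according to whether appending $x$ lengthens the last block of $v$ or opens a new block; concretely, if $v\equiv(a_0,\dots,a_n)$ with $a_n>0$ (so $x=b$) then $w\equiv(a_0,\dots,a_n-1)$, and by the inductive hypothesis $Ra(w)=[a_n-1;a_{n-1},\dots,a_1,a_0+1]$; applying $Ra(wb)=1+Ra(w)$ (which is exactly what~(\ref{eq:ra1}) says, since $\frac{p+q}{q}=1+\frac pq$) gives $Ra(v)=1+[a_n-1;a_{n-1},\dots,a_0+1]=[a_n;a_{n-1},\dots,a_1,a_0+1]$, as desired. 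If instead the last block of $v$ ends in $a$, i.e.\ $v\equiv(a_0,\dots,a_{n-2},a_{n-1},0)$ with $a_{n-1}>0$, a symmetric argument with $Ra(wa)=\frac{p}{p+q}=\frac{1}{1+q/p}=\frac{1}{1+1/Ra(w)}$ and the continued-fraction identity $\frac{1}{1+1/[c_0;c_1,\dots]}=[0;1,c_0,c_1,\dots]$ (used when $c_0\ge1$, so no degenerate $0$ appears) produces the extra leading $0$ and an initial $1$, again matching the claimed expansion of the reversed representation. The only subtlety is bookkeeping of when $a_0=0$ versus $a_0>0$ and making sure the continued fractions are written in a canonical form (last entry $\ge1$, or the $[\dots,a_0+1]$ trick when $a_0$ could be $0$); I would handle this by always keeping the representation in the normalized form used just above, where the final partial quotient is $a_0+1\ge1$.

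The main obstacle, such as it is, is purely formal: matching the two natural "moves" on the integral representation (extend last block / open new block) with the two continued-fraction manipulations $[c_0;c_1,\dots]\mapsto 1+[c_0;c_1,\dots]=[c_0+1;c_1,\dots]$ and $[c_0;c_1,\dots]\mapsto[0;1,c_0,c_1,\dots]$, and checking that these compose correctly from the root. No step is deep; the induction on $|v|$ using~(\ref{eq:ra1}) is the engine, Lemma~\ref{SBT} transfers the result to $Sb$ for free, and the reversal observation on integral representations is immediate from the definition $v=b^{a_0}a^{a_1}\cdots b^{a_n}$. One could alternatively run the induction by stripping a letter off the \emph{front} of $v$ and using the matrix/M\"obius description of how $Ra$ transforms under $w\mapsto xw$, but the end-stripping version keeps us directly on the recursion~(\ref{eq:ra1}) already established, so that is the route I would take.
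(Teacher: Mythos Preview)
The paper does not actually prove this proposition; it is quoted from \cite{BDL} (Berstel--de Luca) and labeled the \emph{mirror formula}, so there is no in-paper argument to compare against. Your proposal is a correct and natural way to supply the missing proof: induct on $|v|$ via the recursion~(\ref{eq:ra1}) for $Ra$, strip the last letter, and then transport the result to $Sb$ using Lemma~\ref{SBT} together with the observation that $v^{\sim}\equiv(a_n,a_{n-1},\dots,a_0)$.

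One small point of bookkeeping you should tighten. In the case where $v$ ends in $a$, i.e.\ $v\equiv(a_0,\dots,a_{n-1},0)$ with $a_{n-1}>0$, you invoke the identity $\frac{1}{1+1/[c_0;c_1,\dots]}=[0;1,c_0,c_1,\dots]$ ``when $c_0\ge 1$''. But when $a_{n-1}>1$, the inductive hypothesis gives $Ra(w)=[0;a_{n-1}-1,a_{n-2},\dots,a_0+1]$, which begins with a $0$, not with $c_0\ge 1$; so your stated identity does not literally apply. The computation still goes through, since $1/Ra(w)=[a_{n-1}-1;a_{n-2},\dots,a_0+1]$, adding $1$ gives $[a_{n-1};\dots]$, and taking the reciprocal yields $[0;a_{n-1},a_{n-2},\dots,a_0+1]$, which is exactly the claimed $[a_n;a_{n-1},\dots,a_0+1]$ with $a_n=0$. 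You clearly see this (your remark about normalized forms points at it), but in a written-out version you should split the $a$-case explicitly into $a_{n-1}=1$ (where your identity applies verbatim and the representation shortens by two entries) and $a_{n-1}>1$ (where the leading $0$ of $Ra(w)$ absorbs the reciprocal directly). With that split made explicit, the argument is complete.
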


 Let  $a_0,a_1, \ldots, a_n, \ldots$ be any sequence of 
numbers.  We consider the  {\em continuant}  $K[a_0,\ldots, a_n]$  defined by recurrence as: $K[ \ \ ]= 1$,
$K[a_0]= a_0$,
 and for $n\geq 1$,
\begin{equation}\label{eq:cf02}
 K[a_0,a_1,\ldots, a_n] = a_nK[a_0,a_1,\ldots, a_{n-1}]+ K[a_0,a_1,\ldots, a_{n-2}].
 \end{equation}
As it is readily verified, for any $n\geq 0$, $K[a_0,a_1,\ldots, a_n]$ is a multivariate polynomial
in the variables $a_0, a_1, \ldots, a_n$ which is obtained by starting with the product $a_0a_1\cdots a_n$ and then striking out adjacent pairs  $a_ka_{k+1}$ in all possible ways. For instance, $K[a_0,a_1,a_2,a_3,a_4]= a_0a_1a_2a_3a_4+ a_2a_3a_4+a_0a_3a_4+a_0a_1a_4+a_0a_1a_2+a_0+a_2+a_4.$

We recall (cf.~\cite{GKP,EL})  that for every $n\geq 0$,
\begin{equation}\label{eq:reverse} 
K[a_0,\ldots, a_n]= K[a_n, \ldots, a_0],
\end{equation}
 i.e., a continuant does not change its value by reversing the order of its elements; moreover, one has
$K[1^n] = F_{n-1}$, where  $1^n$ denotes the sequence of length $n$, $(1,1, \ldots, 1)$ and 
$(F_n)_{n\geq -1}$ is the Fibonacci numerical sequence defined by $F_{-1}= F_0= 1$, and for $n\geq 0$, $F_{n+1}= F_n+F_{n-1}$. 
From (\ref{eq:cf02}) one derives  
\begin{equation}\label{eq:piuno}
K[a_0,\ldots, a_n, 1] = K[a_0,\ldots, a_{n-1}, a_n+1].
\end{equation}

Any continued fraction can be expressed in terms of continuants as follows:

\begin{equation}\label{eq:cfcont}
[a_0;a_1,\ldots, a_n]= \frac{K[a_0,a_1,\ldots, a_{n}]}{K[a_1,\ldots, a_{n}]}.
\end{equation}
The following holds \cite{FiDE}:

\begin{theorem}\label{thm:cf000} Let $w=a\psi(v)b$ be a proper Christoffel word  and $(a_0, a_1,\ldots, a_n)$, $n\geq 0$, be
the  reduced integral representation of  $v$. If $n=0$, then $ |a\psi(v)b|= K[a_0+1, 1]= K[a_0+2]$ and $\pi(\psi(v))= K[1]=1$. If $n>0$, then 
$$ |a\psi(v)b|= K[a_0+1, a_1,\ldots, a_{n-1}, a_n+1]$$
and
$$ \pi(\psi(v)) =  K[a_0+1, a_1,\ldots, a_{n-2}, a_{n-1}].$$
\end{theorem}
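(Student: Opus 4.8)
The plan is to combine the mirror formula (Proposition relating $Sb$, $Ra$ to continued fractions of the integral representation) with the continuant expression \eqref{eq:cfcont} for continued fractions, and then to identify the relevant numerators with the quantities $|a\psi(v)b|$ and $\pi(\psi(v))$ via Proposition \ref{prop:raney} and Proposition \ref{prop:+-}. First I would dispose of the case $n=0$: here $v=b^{a_0}$ is constant, so $\psi(v)=b^{a_0}$, $|a\psi(v)b| = a_0+2 = K[a_0+2] = K[a_0+1,1]$ by \eqref{eq:piuno}, and $\pi(\psi(v)) = 1 = K[1]$, matching the claim.

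For $n>0$, the key observation is that the reduced integral representation $(a_0,a_1,\dots,a_n)$ of $v$ corresponds to the full integral representation $(a_0,a_1,\dots,a_n,0)$, i.e.\ $v$ terminates with the letter $a$ (since $a_n>0$ is the exponent of the final block $a^{a_n}$). By Proposition \ref{prop:raney}, $Ra(v) = \frac{p_a(v)}{p_b(v)}$, and by the mirror formula $Ra(v) = [a_n; a_{n-1},\dots,a_1,a_0+1]$. Applying \eqref{eq:cfcont} to this continued fraction gives
$$Ra(v) = \frac{K[a_n,a_{n-1},\dots,a_1,a_0+1]}{K[a_{n-1},\dots,a_1,a_0+1]},$$
and since both $p_a(v)$ and $p_b(v)$ are determined (with $\gcd = 1$) this forces $p_a(v) = K[a_n,\dots,a_1,a_0+1]$ and $p_b(v) = K[a_{n-1},\dots,a_1,a_0+1]$, provided the two continuants are coprime — which holds because consecutive continuants $K[a_0,\dots,a_{n}]$ and $K[a_1,\dots,a_n]$ are always coprime (a standard fact, or derivable from the recurrence \eqref{eq:cf02}). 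Then by \eqref{eq:cent1}, $|a\psi(v)b| = p_a(v) + p_b(v)$, and using the recurrence \eqref{eq:cf02} "from the left" together with the reversal symmetry \eqref{eq:reverse}:
$$p_a(v)+p_b(v) = K[a_n,\dots,a_1,a_0+1] + K[a_{n-1},\dots,a_1,a_0+1] = K[a_n+1,a_{n-1},\dots,a_1,a_0+1],$$
which by \eqref{eq:reverse} equals $K[a_0+1,a_1,\dots,a_{n-1},a_n+1]$, the desired formula for the length.

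For the minimal period, I would invoke Corollary \ref{cor:Lyn+-}, which gives $\pi(\psi(v)) = p_{v^L}(v) = p_a(v)$ since $v$ ends in $a$ — wait, this needs care: $p_a(v)$ as computed above is $K[a_n,\dots,a_1,a_0+1]$, reversing to $K[a_0+1,a_1,\dots,a_n]$, not quite the claimed $K[a_0+1,a_1,\dots,a_{n-2},a_{n-1}]$. The resolution is that the minimal period should instead be read off via Proposition \ref{prop:+-}: $\pi(\psi(v)) = |a\psi(v_+)b|$, and $v_+$ is obtained from $v$ by replacing the final block $a^{a_n}$ with $a^{a_n-1}$ and then deleting down to the point before the last letter change — more precisely $v_+ = b^{a_0}a^{a_1}\cdots b^{a_{n-2}}a^{a_{n-1}}$, whose reduced integral representation, after the substitution $(a_0,\dots,a_{n-1})$ with final letter now $b$, gives (applying the length formula already proved, but with the roles shifted since $v_+$ ends in $b$) exactly $K[a_0+1,a_1,\dots,a_{n-2},a_{n-1}]$. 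The main obstacle I anticipate is precisely this bookkeeping: correctly tracking which of $p_a, p_b$ is the minimal period, how truncation to $v_+$ acts on the integral representation, and ensuring the off-by-one in the last continuant argument comes out right. A clean way to sidestep some of it is to prove both formulas simultaneously by induction on $n$ (or on $|v|$), using Lemma \ref{periods} to peel off the last letter and the continuant recurrence \eqref{eq:cf02} in lock-step; the base case $n=0$ is handled above, and the inductive step reduces changing $a_n$ by one (staying within the same parity of final letter) or passing from $a$-block to $b$-block, each of which corresponds to one application of \eqref{eq:cf02} or \eqref{eq:piuno}.
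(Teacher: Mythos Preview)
The paper does not prove this theorem; it is quoted from \cite{FiDE}. So there is no ``paper's proof'' to compare against, and I can only comment on the soundness of your argument.

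Your overall plan --- mirror formula for $Ra(v)$, rewrite as a ratio of continuants via \eqref{eq:cfcont}, identify numerator and denominator with $p_a(v),p_b(v)$, then use \eqref{eq:cent1} and Corollary~\ref{cor:Lyn+-} --- is exactly right and leads to a short proof. The execution, however, has a parity slip that causes the confusion you noticed midway.

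The reduced integral representation $(a_0,\ldots,a_n)$ can have $n$ either even or odd: if $n$ is even, it coincides with the full integral representation and $v$ ends in $b$; if $n$ is odd, the full representation is $(a_0,\ldots,a_n,0)$ and $v$ ends in $a$. You assumed the latter for all $v$, and then applied the mirror formula directly to the \emph{reduced} tuple, obtaining $Ra(v)=[a_n;a_{n-1},\ldots,a_0+1]$. But the mirror formula is stated for the \emph{full} (even-index) representation, so when $n$ is odd one actually gets
\[
Ra(v)=[0;a_n,a_{n-1},\ldots,a_0+1]=\frac{K[a_{n-1},\ldots,a_0+1]}{K[a_n,\ldots,a_0+1]},
\]
i.e.\ the reciprocal of what you wrote. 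This swaps $p_a$ and $p_b$: for $v$ ending in $a$ one has $p_a(v)=K[a_0+1,\ldots,a_{n-1}]$ and $p_b(v)=K[a_0+1,\ldots,a_n]$, so $\pi(\psi(v))=p_a(v)=K[a_0+1,\ldots,a_{n-1}]$ directly --- no detour through $v_+$ is needed. The case $n$ even ($v$ ending in $b$) is symmetric: $p_a(v)=K[a_0+1,\ldots,a_n]$, $p_b(v)=K[a_0+1,\ldots,a_{n-1}]$, and $\pi(\psi(v))=p_b(v)$. In both cases $\{p_a,p_b\}=\{K[a_0+1,\ldots,a_{n-1}],\,K[a_0+1,\ldots,a_n]\}$, whence by \eqref{eq:cf02}
\[
|a\psi(v)b|=p_a+p_b=K[a_0+1,\ldots,a_{n-1}]+K[a_0+1,\ldots,a_n]=K[a_0+1,\ldots,a_{n-1},a_n+1].
\]
Once the parity is tracked correctly, your first route already gives the full statement; the induction you propose at the end is unnecessary.
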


The following proposition shows  that one can compute the values of Stern's sequence by continuants (cf.~\cite{IU}):

\begin{proposition} If $w\in {\cal A}^*$ has the integral representation $w$ $\equiv $ $(a_0, a_1, \ldots, a_n)$, then
$$ {\hat s}(w) = s(\nu(w)) = K[a_0+1, a_1, \ldots, a_{n-1}]. $$
\end{proposition}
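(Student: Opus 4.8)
The plan is to derive the identity from the second part of Theorem~\ref{lem:SS} applied to the word $wb$, combined with the continuant formula for minimal periods in Theorem~\ref{thm:cf000}. The crucial remark is that $\hat s$ is unchanged when a letter $b$ is appended: since $\nu(wb)=2\nu(w)$ and $s(2m)=s(m)$, one has $\hat s(wb)=s(2\nu(w))=s(\nu(w))=\hat s(w)$, while Theorem~\ref{lem:SS} gives $\hat s(wb)=\pi(\psi(wb))$. Hence
$$\hat s(w)=\pi(\psi(wb)).$$

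Next I would evaluate $\pi(\psi(wb))$ via continuants. If $w\equiv(a_0,a_1,\dots,a_n)$, then appending $b$ merely enlarges the final block $b^{a_n}$, so $wb\equiv(a_0,a_1,\dots,a_{n-1},a_n+1)$; since this list has a positive last entry, Theorem~\ref{thm:cf000} applies to $wb$. For $n>0$ it gives $\pi(\psi(wb))=K[a_0+1,a_1,\dots,a_{n-1}]$ --- the last entry $a_n+1$ of the representation not being used in the formula for the minimal period --- and this is exactly the asserted value of $\hat s(w)$. For $n=0$, i.e.\ $w\in b^*$, one has $wb=b^{a_0+1}$ and Theorem~\ref{thm:cf000} gives $\pi(\psi(wb))=K[1]=1$, in agreement with $\hat s(w)=\hat s(\varepsilon)=s(2)=1$ (here $K[a_0+1,a_1,\dots,a_{n-1}]$ must be read, in this degenerate case, as the empty continuant $K[\,]=1$).

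I do not expect any genuine obstacle: the argument is just a short composition of results already at hand. The only points requiring care are notational --- checking that appending $b$ transforms the integral representation exactly as claimed (so that it stays admissible, with positive last entry), and matching the index ranges in Theorem~\ref{thm:cf000} with those of the present statement, including the case $n=0$. A less self-contained alternative would start from the mirror formula $ra(\nu(w))=Ra(w)=[a_n;a_{n-1},\dots,a_1,a_0+1]$ and the identity $ra(n)=s(n-1)/s(n)$ of (\ref{eq:ra}): using (\ref{eq:cfcont}) and the reversal property (\ref{eq:reverse}), the denominator of $Ra(w)$ equals $K[a_{n-1},\dots,a_1,a_0+1]=K[a_0+1,a_1,\dots,a_{n-1}]$; this route, however, also needs the fact that continuant fractions are already in lowest terms.
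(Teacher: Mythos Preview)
Your proposal is correct and follows essentially the same route as the paper: reduce to $\hat s(wb)=\pi(\psi(wb))$ via $s(2m)=s(m)$ and Theorem~\ref{lem:SS}, observe that $wb\equiv(a_0,\dots,a_{n-1},a_n+1)$ is a reduced integral representation, and read off the continuant from Theorem~\ref{thm:cf000}. You are in fact slightly more careful than the paper in spelling out the degenerate case $n=0$.
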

\begin{proof} One has $s(\nu(w))=s(2\nu(w))={\hat s}(wb)$. By Proposition \ref{lem:SS}, one has ${\hat s}(wb)= \pi(\psi(wb))$. The word $wb$
has the integral representation $$wb\equiv (a_0, a_1, \ldots, a_n+1)$$ which is reduced. By Theorem \ref{thm:cf000} the result follows.
\end{proof}
\begin{example}{\em  Let $w= ab^2a$. One has $\nu(w)= 23$ and the integral representation of $w$ is $(0,1,2,1,0)$. One has
$s(23) = K[1,1,2,1] = 7$.}
\end{example}

For any $n>0$, let $e(n)$ the exponent of the highest power of $2$ dividing $n$. The sequence $e= (e(n))_{n>0}$ (cf.~the sequence A007814 in \cite{Sloane}) is 
$$ e= 010201030102010\cdots$$
It is noteworthy that the sequence $e$, called $\omega$-{\em Rauzy}  or $\omega$-{\em bonacci}  word in \cite{CMRJ}, can be expressed using the palindromization map $\psi$ acting on the infinite word $w= 0123456\cdots$ on the alphabet $\Nn$, as  $e= \psi(0123\cdots)$.  It is known \cite{IU}
that  for $n>0$,
                              $$ \left \lfloor \frac{s(n-1)}{s(n)} \right \rfloor = e(n).$$
By using a result  attributed to Moshe Newman (see, for instance,  sequence A2487 in \cite{Sloane})  the following  holds: for all $n>0$,
\begin{equation}\label{eq:esse}
    \frac{s(n)}{s(n+1)} = \frac{1}{2 e(n)+1 - \frac{s(n-1)}{s(n)} } \ .
    \end{equation}
  
      Let us now define for all $n>0$,  
    $$\zeta(n) =  (-1)^{n+1}(2e(n)+1) .$$
    The sequence $\zeta = (\zeta_n)_{n>0}$,  with  $\zeta_n = \zeta(n)$,  is
    $$\zeta = 1 (-3) 1 (-5) 1 (-3) 1 (-7) \cdots .$$

    \begin{proposition} For all $n>0$,
    $$\frac{s(n)}{s(n+1)} = (-1)^{n+1} [0; \zeta_n, \ldots, \zeta_1] .$$
    \end{proposition}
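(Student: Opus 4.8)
The plan is to prove the identity by induction on $n>0$, using the Newman-type recurrence \eqref{eq:esse} together with the standard fact \eqref{eq:cfcont} expressing a finite continued fraction as a ratio of continuants, and the reversal symmetry \eqref{eq:reverse}. Since the right-hand side involves the \emph{reversed} list $(\zeta_n,\ldots,\zeta_1)$, I would actually find it cleaner to work instead with the list in natural order and invoke \eqref{eq:reverse} at the end; that is, I would first establish the equivalent claim
$$\frac{s(n)}{s(n+1)} = (-1)^{n+1}\,[0;\zeta_1,\zeta_2,\ldots,\zeta_n],$$
which by \eqref{eq:cfcont} and \eqref{eq:reverse} is the same statement, since $[0;\zeta_1,\ldots,\zeta_n] = K[0,\zeta_1,\ldots,\zeta_n]/K[\zeta_1,\ldots,\zeta_n] = K[\zeta_1,\ldots,\zeta_n,0]/K[\zeta_1,\ldots,\zeta_n]$, and reversing gives $K[0,\zeta_n,\ldots,\zeta_1]/K[\zeta_n,\ldots,\zeta_1] = [0;\zeta_n,\ldots,\zeta_1]$.

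For the base case $n=1$ we have $e(1)=0$, so $\zeta_1 = 1$ and $[0;\zeta_1] = [0;1] = 1 = s(1)/s(2)$, matching $(-1)^{2}\cdot 1$. For the inductive step, I would use the truncation identity for continued fractions: writing $q_n = [0;\zeta_1,\ldots,\zeta_n]$, one has $q_n = 1/(\zeta_1 + 1/(\zeta_2 + \cdots))$, i.e. the tail $[\zeta_2;\zeta_3,\ldots,\zeta_n]$ controls $q_n$ via $q_n = 1/(\zeta_1 + q_n')$ where $q_n'$ is the corresponding shorter fraction. The delicate point is that the recurrence \eqref{eq:esse} relates $s(n)/s(n+1)$ to $s(n-1)/s(n)$ and to $e(n)$, whereas passing from $[0;\zeta_1,\ldots,\zeta_{n-1}]$ to $[0;\zeta_1,\ldots,\zeta_n]$ \emph{appends} a partial quotient at the far end. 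So the two recursions grow the object from opposite ends, and a naive induction does not mesh. I would resolve this by instead proving, again by induction, the finite-continuant form directly: set $P_n = K[\zeta_1,\ldots,\zeta_n]$ and $Q_n = K[\zeta_1,\ldots,\zeta_n,0]$ (note $Q_n = K[\zeta_1,\ldots,\zeta_{n-1}]\cdot 0 + \cdots$; more usefully $K[\zeta_1,\ldots,\zeta_n,0] = K[\zeta_1,\ldots,\zeta_{n-1}]$ by \eqref{eq:cf02} with last entry $0$). Then the claim becomes $s(n)/s(n+1) = (-1)^{n+1} K[\zeta_1,\ldots,\zeta_{n-1}]/K[\zeta_1,\ldots,\zeta_n]$. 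Taking reciprocals and using \eqref{eq:cf02} to peel off $\zeta_n = (-1)^{n+1}(2e(n)+1)$ from the \emph{end} of $K[\zeta_1,\ldots,\zeta_n]$ gives
$$(-1)^{n+1}\frac{s(n+1)}{s(n)} = \frac{\zeta_n K[\zeta_1,\ldots,\zeta_{n-1}] + K[\zeta_1,\ldots,\zeta_{n-2}]}{K[\zeta_1,\ldots,\zeta_{n-1}]} = \zeta_n + \frac{K[\zeta_1,\ldots,\zeta_{n-2}]}{K[\zeta_1,\ldots,\zeta_{n-1}]},$$
and by the induction hypothesis the last ratio equals $(-1)^{n}\,s(n-1)/s(n) \cdot(-1)$ — more precisely the hypothesis at index $n-1$ reads $s(n-1)/s(n) = (-1)^{n} K[\zeta_1,\ldots,\zeta_{n-2}]/K[\zeta_1,\ldots,\zeta_{n-1}]$, so $K[\zeta_1,\ldots,\zeta_{n-2}]/K[\zeta_1,\ldots,\zeta_{n-1}] = (-1)^{n} s(n-1)/s(n)$. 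Substituting $\zeta_n = (-1)^{n+1}(2e(n)+1)$ and simplifying signs, the displayed equation becomes exactly $s(n+1)/s(n) = (2e(n)+1) - s(n-1)/s(n)$, which is \eqref{eq:esse} rearranged. This closes the induction.

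The main obstacle, as indicated, is the mismatch between the "append at the end" growth of the continued fraction and the "two-step backward" shape of \eqref{eq:esse}; the fix is to phrase everything in terms of continuants $K[\zeta_1,\ldots,\zeta_n]$ and exploit that \eqref{eq:cf02} lets one strip the \emph{last} partial quotient, after which the recurrence \eqref{eq:esse} falls out directly. A minor bookkeeping point is the alternating sign $(-1)^{n+1}$: one must check that the signs of the $\zeta_i$ (positive for odd $i$, negative for even $i$) interact correctly with the continuant recurrence so that $K[\zeta_1,\ldots,\zeta_n]$ has a predictable sign; tracking $|K[\zeta_1,\ldots,\zeta_n]| = s(n+1)$ and the sign separately, or absorbing the signs into the statement as done above, handles this cleanly. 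Finally one restores the reversed order via \eqref{eq:reverse} and \eqref{eq:cfcont} to obtain the proposition as stated.
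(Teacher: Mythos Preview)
Your argument is correct, but it is more elaborate than necessary, and the ``main obstacle'' you identify is not actually there. The paper's proof is a direct three-line induction: starting from~\eqref{eq:esse} one rewrites $2e(n)+1=(-1)^{n+1}\zeta_n$ to get
\[
\frac{s(n)}{s(n+1)}=\frac{(-1)^{n+1}}{\zeta_n-(-1)^{n+1}\dfrac{s(n-1)}{s(n)}},
\]
and then the inductive hypothesis $s(n-1)/s(n)=(-1)^{n}[0;\zeta_{n-1},\ldots,\zeta_1]$ plugs in directly, since $[0;\zeta_n,\zeta_{n-1},\ldots,\zeta_1]=1/(\zeta_n+[0;\zeta_{n-1},\ldots,\zeta_1])$. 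The point you overlooked is that the continued fraction in the statement is written with the list \emph{reversed}, so the newest partial quotient $\zeta_n$ sits in the \emph{first} slot, and the elementary front-end recursion for continued fractions matches the Newman recurrence perfectly---there is no end-mismatch to repair.

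Your route---passing to natural order via \eqref{eq:reverse}, rewriting everything as a ratio of continuants, and then stripping $\zeta_n$ from the tail with \eqref{eq:cf02}---works, and indeed it is exactly what the paper does in the \emph{next} theorem to extract the closed form $s(n)=(-1)^{\lfloor (n-1)/2\rfloor}K[\zeta_1,\ldots,\zeta_{n-1}]$. So your detour is not wasted effort: it essentially merges the proposition and the subsequent theorem into one argument. But for the proposition alone, the paper's direct use of the front-end continued-fraction identity is shorter and avoids continuants entirely.
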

    \begin{proof} The proof is by induction on the integer $n$.  For $n=1$ one has  $\frac{s(1)}{s(2)} = [0; \zeta_1]= [0; 1] = \frac{1}{1}$.
    Suppose the formula true up to $n-1$ and prove it for $n$. By (\ref{eq:esse}) one has
       $$ \frac{s(n)}{s(n+1)} = \frac{(-1)^{n+1}}{\zeta_n - (-1)^{n+1}\frac{s(n-1)}{s(n)} }$$
       By induction
         $$ \frac{s(n)}{s(n+1)} = \frac{(-1)^{n+1}}{\zeta_n + [0; \zeta_{n-1}, \ldots, \zeta_1]} = (-1)^{n+1} [0; \zeta_n, \ldots, \zeta_1],$$
         which concludes the proof.
    \end{proof}
    \begin{theorem} For $n>1$ one has
    $$ s(n) = (-1)^{\lfloor \frac{n-1}{2}\rfloor} K[\zeta_1, \ldots, \zeta_{n-1}]. $$
       \end{theorem}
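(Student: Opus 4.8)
The plan is to deduce the formula from the previous proposition — which expresses $s(n)/s(n+1)$ as a signed continued fraction in the $\zeta_i$'s — by converting that continued fraction into a ratio of continuants. First I would apply (\ref{eq:cfcont}) to obtain
\[
[0;\zeta_n,\ldots,\zeta_1]=\frac{K[0,\zeta_n,\ldots,\zeta_1]}{K[\zeta_n,\ldots,\zeta_1]}\,,
\]
and then simplify the two continuants using the recurrence (\ref{eq:cf02}) and the reversal identity (\ref{eq:reverse}). From (\ref{eq:cf02}) and (\ref{eq:reverse}) one gets $K[0,a_1,\ldots,a_m]=K[a_2,\ldots,a_m]$, so $K[0,\zeta_n,\ldots,\zeta_1]=K[\zeta_{n-1},\ldots,\zeta_1]=K[\zeta_1,\ldots,\zeta_{n-1}]$, while $K[\zeta_n,\ldots,\zeta_1]=K[\zeta_1,\ldots,\zeta_n]$. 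Since the continued fraction is well defined — its value being the finite rational $(-1)^{n+1}s(n)/s(n+1)$ — the denominator $K[\zeta_1,\ldots,\zeta_n]$ is nonzero, so the previous proposition rewrites as the integer identity
\[
s(n)\,K[\zeta_1,\ldots,\zeta_n]=(-1)^{n+1}\,s(n+1)\,K[\zeta_1,\ldots,\zeta_{n-1}]\qquad(n\geq 1)\,,
\]
which I will call $(\star)$.

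Next I would prove, by induction on $n\geq 1$, the slightly more general identity $s(n)=(-1)^{\lfloor(n-1)/2\rfloor}K[\zeta_1,\ldots,\zeta_{n-1}]$, where for $n=1$ the continuant is the empty one $K[\,]=1$, so the statement reads $s(1)=1$. Assuming it for $n$, I would substitute $K[\zeta_1,\ldots,\zeta_{n-1}]=(-1)^{\lfloor(n-1)/2\rfloor}s(n)$ into $(\star)$ and cancel the nonzero factor $s(n)$; this yields $K[\zeta_1,\ldots,\zeta_n]=(-1)^{\,n+1+\lfloor(n-1)/2\rfloor}s(n+1)$, and it remains only to verify the parity congruence $n+1+\lfloor(n-1)/2\rfloor\equiv\lfloor n/2\rfloor\pmod 2$, i.e.\ $\lfloor n/2\rfloor-\lfloor(n-1)/2\rfloor\equiv n+1\pmod 2$, by separating the cases $n$ even and $n$ odd. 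Specialising to $n>1$ then gives the theorem.

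The bulk of the work is the sign bookkeeping: tracking the exponents $\lfloor(n-1)/2\rfloor$, the sign $(-1)^{n+1}$ inherited from the previous proposition, and (in the variant below) the sign of $\zeta_n=(-1)^{n+1}(2e(n)+1)$, and checking that they fit together — everything reduces to the congruence above, which is where I expect to have to be careful. A fully self-contained variant, which also sidesteps the remark on well-definedness, is to induct directly with (\ref{eq:cf02}): writing $K[\zeta_1,\ldots,\zeta_{n-1}]=\zeta_{n-1}K[\zeta_1,\ldots,\zeta_{n-2}]+K[\zeta_1,\ldots,\zeta_{n-3}]$ and using the induction hypothesis together with $(-1)^n\zeta_{n-1}=2e(n-1)+1$, the target identity becomes equivalent to the Stern identity
\[
s(n)+s(n-2)=(2e(n-1)+1)\,s(n-1)\qquad(n\geq 2)\,,
\]
which follows at once from the defining recurrence of $s$: for even $n$ it is immediate, and for $n=2m+1$ it reduces to the same identity at the smaller index $m+1$. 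Either way, the only genuinely delicate point is making the signs match.
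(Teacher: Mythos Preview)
Your argument is correct and begins exactly as the paper's does: apply \eqref{eq:cfcont} to the previous proposition, reduce $K[0,\zeta_n,\ldots,\zeta_1]$ to $K[\zeta_1,\ldots,\zeta_{n-1}]$ via \eqref{eq:cf02} and \eqref{eq:reverse}, and arrive at the identity
\[
\frac{s(n)}{s(n+1)}=(-1)^{n+1}\frac{K[\zeta_1,\ldots,\zeta_{n-1}]}{K[\zeta_1,\ldots,\zeta_n]}\,.
\]
From this point the two proofs diverge. The paper invokes the classical fact that consecutive continuants $K[\zeta_1,\ldots,\zeta_{n-1}]$ and $K[\zeta_1,\ldots,\zeta_n]$ are coprime; since $s(n)$ and $s(n+1)$ are coprime as well, the fraction is already reduced and one obtains $s(n)=|K[\zeta_1,\ldots,\zeta_{n-1}]|$ immediately, after which the sign pattern is read off from the first few cases. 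You instead cross-multiply to get your identity $(\star)$ and run an induction, which avoids appealing to coprimality but requires the parity check $n+1+\lfloor(n-1)/2\rfloor\equiv\lfloor n/2\rfloor\pmod 2$ (which is correct). Your alternative route via the recurrence \eqref{eq:cf02} and the Stern identity $s(n)+s(n-2)=(2e(n-1)+1)\,s(n-1)$ is a genuinely independent argument that bypasses the previous proposition entirely; it is more self-contained, at the price of needing that auxiliary identity on $s$.
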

    \begin{proof} 
    By (\ref{eq:cfcont}) one has
    $$ [0; \zeta_n, \ldots, \zeta_1] = \frac{K[0, \zeta_n, \ldots, \zeta_1]}{K[\zeta_n,\ldots, \zeta_1]}.$$
    By (\ref{eq:cf02}) and (\ref{eq:reverse}), $K[0, \zeta_n, \ldots, \zeta_1]= K[\zeta_1, \ldots, \zeta_{n-1}]$, so that by the preceding proposition
    \begin{equation}\label{eq:cappa}
      \frac{s(n)}{s(n+1)} = (-1)^{n+1}\frac{K[\zeta_1, \ldots, \zeta_{n-1}]}{K[\zeta_1, \ldots, \zeta_{n}]}.
      \end{equation}
     Since $K[\zeta_1, \ldots, \zeta_{n-1}]$ and $K[\zeta_1, \ldots, \zeta_{n}]$ are relatively prime,  one has for all $n>1$, $s(n) = |K[\zeta_1, \ldots, \zeta_{n-1}]|$. Moreover, from (\ref{eq:cappa}),  $K[\zeta_1] >0$, $K[\zeta_1, \zeta_2]<0$, $K[\zeta_1,\zeta_2,\zeta_3]<0$,
     $K[\zeta_1,\zeta_2,\zeta_3, \zeta_4]>0$, etc. Hence, it follows  $K[\zeta_1, \ldots, \zeta_{n-1}] >0$ if and only if $\lfloor \frac{n-1}{2}\rfloor$ is even, which proves our result.
     \end{proof}
     \begin{example} {\em One has $s(4)= -K[\zeta_1, \zeta_2, \zeta_3]= -K[1, -3, 1] = -(1(-3)1+1+1)= 1,  s(5)= K[1, -3, 1, -5] = 1(-3)1(-5)+ 1(-3)+ 1(-5)+ 1(-5) +1 = 3.$}
     \end{example}

\section{The Calkin-Wilf theorem}\label{sec:seven}

Let us recall the following important theorem on Stern's sequence  due to Calkin and Wilf \cite[Theorem 5]{CW1}.  We shall give a new proof based on the combinatorics of Christoffel words. A second proof is a consequence of Theorem \ref{thm:occur} (see  Remark \ref{remark:CW}).

\begin{theorem}\label{thm:CW}  For each $n\geq 0$, the term  $s(n)$ of Stern's sequence is equal to the number of occurrences of the subwords
$u \in b(ab)^*$ in the binary expansion of the integer $n$.
\end{theorem}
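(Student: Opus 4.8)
The plan is to prove the statement by induction on $n$, exploiting the recursive structure of both Stern's sequence and the set of subword occurrences of $b(ab)^*$ in a binary word. For $n=0$ the binary expansion is $\varepsilon$ (or $0$), which has no occurrence of any word in $b(ab)^*$ since such words begin with $b$; and $s(0)=0$. For $n=1$, the expansion is $b$, which contains exactly one occurrence of $b\in b(ab)^*$, matching $s(1)=1$. For the inductive step, write $\sigma(w)$ for the number of occurrences of subwords from $b(ab)^*$ in a binary word $w$; I want to show $\sigma([n]_2)=s(n)$ for all $n$.

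The key is to establish the two recurrences $\sigma([2n]_2)=\sigma([n]_2)$ and $\sigma([2n+1]_2)=\sigma([n]_2)+\sigma([n+1]_2)$ for $n\geq 1$, mirroring the defining recurrences of $s$. The first is immediate: $[2n]_2=[n]_2\,a$, and appending an $a$ at the end creates no new occurrence of a word in $b(ab)^*$, because every such word ends with $b$; conversely no occurrence is lost, so $\sigma([n]_2\,a)=\sigma([n]_2)$. The second recurrence is the substantive one. Here $[2n+1]_2=[n]_2\,b$. An occurrence of a subword $u\in b(ab)^*$ in $[n]_2\,b$ either does not use the final letter — these are exactly the occurrences in $[n]_2$, contributing $\sigma([n]_2)$ — or it does use the final $b$ as its last letter. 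In the latter case, $u=u'b$ with either $u'=\varepsilon$ (so $u=b$, giving one occurrence) or $u'\in b(ab)^*a$, i.e.\ $u'$ arises from a word in $b(ab)^*$ by... — the cleaner description is that $u = b(ab)^k$ for some $k\geq 0$, and if $k\geq 1$ then deleting the last letter gives $b(ab)^{k-1}a$, whose occurrences in $[n]_2$ ending in $a$ biject with occurrences of $u$ in $[n]_2 b$ ending at the final position. One then checks that the occurrences of $b(ab)^*$ in $[n]_2$ ending with the letter $a$, together with the single occurrence of $b$ at the final position, are counted by $\sigma([n]_2 b)-\sigma([n]_2)$, and that this quantity equals $\sigma([n+1]_2)$.

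The bridge to $[n+1]_2$ works by analyzing how $[n]_2$ and $[n+1]_2$ differ: writing $[n]_2 = w b a^k$ (if $n$ is not of the form $2^m-1$, where the trailing block of $b$'s is handled separately, and with $k\geq 0$), one has $[n+1]_2 = w a b^k$ or, in the relevant boundary cases, a shorter word. One compares $\sigma$ of the two directly. Actually, rather than splitting into cases by hand, a slicker route is to use the combinatorial interpretation already available in the paper: by Theorem~\ref{lem:SS}, $s(\langle bwb\rangle)=|a\psi(w)b|$, so it suffices to prove that $|a\psi(w)b|$ equals the number of occurrences of subwords in $b(ab)^*$ inside $bwb$, and then handle general $n$ by reducing $[n]_2=bwba^k$ to $bwb$ via the first recurrence above. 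The identity $|a\psi(w)b| = \sigma(bwb)$ can then be proved by induction on $|w|$ using Proposition~\ref{prop:+-}: if $w$ is not constant, $|a\psi(w)b| = |a\psi(w^-)b| + |a\psi(w_+)b|$, and one shows the occurrences of $b(ab)^*$ in $bwb$ split correspondingly — those not involving the last two letters of $bwb$ in a certain way give $\sigma(bw^-b)$, and the remaining ones give $\sigma(bw_+b)$ — matching the recurrence, with constant $w$ as the base case.

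\textbf{Main obstacle.} The delicate point is the bookkeeping in the second recurrence: precisely matching occurrences of $b(ab)^*$ that "end at the last letter" of $[n]_2 b$ with \emph{all} occurrences in $[n+1]_2$, including getting the boundary cases right (powers of $2$, trailing runs of equal letters) and verifying no occurrence is double-counted or missed. I expect the cleanest presentation routes this matching through the central/Christoffel-word machinery (Theorem~\ref{lem:SS} and Proposition~\ref{prop:+-}) rather than a bare-hands bijection on binary strings, since the additive recurrence $|a\psi(w)b| = |a\psi(w^-)b| + |a\psi(w_+)b|$ already packages the nontrivial combinatorics; the work is then to see that $w_+$ is exactly what controls the "new" occurrences created by the final $b$.
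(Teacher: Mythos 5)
Your final route — reducing via Theorem~\ref{lem:SS} to the identity that $|a\psi(w)b|$ counts the occurrences of subwords from $b(ab)^*$ in $bwb$, proving that identity by induction on $|w|$ through the two-term splitting of Proposition~\ref{prop:+-}, and handling even $n$ by stripping trailing $a$'s — is exactly the paper's proof. The only (immaterial) difference is that you peel off the \emph{last} letter of $w$ and split occurrences into final versus non-final ones matched to $bw^{-}b$ and $bw_{+}b$, whereas the paper peels off the \emph{first} letter and splits into initial versus non-initial ones matched to $b(^{-}w)b$ and $b(_{+}w)b$; since every word in $b(ab)^*$ is a palindrome these are mirror images of the same argument, and the occurrence bijections you leave implicit are precisely the ones the paper carries out.
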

\begin{proof} We shall first consider the case when the integer $n$ is {\em odd}. The result is trivial if $n=1$. Let us then suppose $n>1$. Letting  $bwb$ be the binary  expansion of $n$,
by Theorem \ref{lem:SS} one has $s(n) = |a\psi(w)b|$.  The proof is by induction on the length of the directive word $w$. If $w= z^p$ with $z\in {\cal A}$ and $p\geq 0$, then the number of occurrences of subwords  $u\in b(ab)^*$ in $bz^pb$ is $p+2= |a\psi(z^p)b|$, so in this case the result is trivially achieved.
Let us then suppose  that $w$ is not constant.  We can write  $w= x^hy (_+w)$ with $x,y\in {\cal A}$, $x\neq y$, and $h\geq 1$.  We have to consider two cases.

\vspace{2 mm}

\noindent Case 1. The letter $x$ is equal to $a$. Thus $bwb= ba^hb(_+w)b$. The number of  {\em non-initial}  occurrences of subwords  $u\in b(ab)^*$ in $bwb$ is equal to the number of  {\em all}  occurrences of  the subwords $u$ in $b(_+w)b$. By induction this number is equal to $|a\psi(_+w)b|$.

The number of  {\em all}  occurrences of  subwords $u$ in $ba^{h-1}b(_+w)b = b (^-w)b$ is by induction equal to $|a\psi(^-w)b|$. This number is equal to the number of the {\em initial}  occurrences of  the subwords $u$ in the word $bwb$. Indeed, recall that an occurrence of a
subword $u\in b(ab)^*$ in $v= bwb$  is a word $j_1j_2\cdots j_m$  on the alphabet $\{1,2, \ldots, k+2\}$, with $m= |u|$ and $k=|w|$,
such that $u_h=v_{j_h}$, $h= 1, \ldots, m$.
Any initial occurrence of $u$ in $bwb$  in which the symbol $2$ does not appear (i.e., $j_1=1, j_2>2$) is an initial occurrence 
of $u$ in $b(^-w)b$. Conversely,  any initial occurrence of $u$  in $b(^-w)b$ is an  initial occurrence of $u$ in $bwb$  in which the symbol $2$ does not appear. Moreover, there exists a one-to-one correspondence between the initial occurrences of $u$
in $bwb$ beginning with $12$ and the non-initial occurrences of $u$ in $b(^-w)b$.

Hence, the total number of  occurrences of subwords $u\in b(ab)^*$ in $bwb$ is given by  $|a\psi(_+w)b|+|a\psi(^-w)b|$. By Proposition \ref{prop:+-}, this
number is equal to $|a\psi(w)b|$. 

\vspace{2 mm}

\noindent Case 2.  The letter $x$ is equal to $b$.  Thus $bwb= b^{h+1}a(_+w)b$. The number of  {\em initial}  occurrences of subwords  $u\in b(ab)^*$ in $bwb$ is equal to the number of  {\em all}  occurrences of  the subwords $u$ in $b(_+w)b$. By induction this number is equal to $|a\psi(_+w)b|$.

Any {\em non-initial} occurrence of a subword $u$ in $bwb$ is a word $j_1j_2\cdots j_m$ over  the alphabet $\{1,2,\ldots, k+2\}$
with $j_1\geq 2$, so that any such occurrence is an occurrence of $u$ in $b(^-w)b$. Conversely, any occurrence of $u$ in
$b(^-w)b$ is a non-initial occurrence of $u$ in $bwb$. Hence, by induction, the number of all non-initial occurrences of subwords
$u$ in $bwb$ is given by $|a\psi(^-w)b|$ and the result is achieved also in this case  by Proposition \ref{prop:+-}.

\vspace{2 mm}

\noindent
Let us now consider the case of $s(n)$  when  $n$ is an {\em even} integer. The result is trivial if $n=0$. Let us then suppose $n>0$. We can write $n = 2^{e(n)}p$ where $e(n)$ is the highest
integer such that  $2^{e(n)}$ divides $n$. Thus $e(n)>0$, $p$ is odd,  and $s(n)= s(p)$. Since $p$ is odd, by the preceding result one has that $s(p)$ is equal to the number of occurrences of subwords $u\in b(ab)^*$ in the binary expansion $bwb$
of the integer $p$. Now $ [bwba^{e(n)}]_2 = 2^{e(n)}[bwb]_2=  2^{e(n)}p = n$. Since the number of occurrences of  the subwords
$u\in b(ab)^*$ in  $bwba^{e(n)}$ is equal to the number of occurrences of the subwords $u$ in $bwb$ the result follows.
\end{proof}

A  consequence of Theorem \ref{thm:CW} on  Christoffel  and central words is 

\begin{proposition}\label{prop:CW11} For each  $v\in {\cal A}^*$
$$ |a\psi(v)b| =  \sum_{u\in b(ab)^*}\binom{bvb}{u}.$$
 If  $v$ is not constant, then
$$  \pi(\psi(v)) = \sum_{u\in b(ab)^*}\binom
{bv_+b}{u}.$$
 
\end{proposition}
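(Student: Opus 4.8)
The plan is to derive Proposition \ref{prop:CW11} as an essentially immediate consequence of Theorem \ref{thm:CW}, after reconciling the two different ways of counting subword occurrences. Theorem \ref{thm:CW} asserts that $s(n)$ equals the number of occurrences of subwords $u\in b(ab)^*$ in the binary expansion of $n$; here "the number of occurrences of the subwords $u\in b(ab)^*$" means $\sum_{u\in b(ab)^*}\binom{[n]_2}{u}$, that is, we sum over all words $u$ in the infinite set $b(ab)^*=\{b,bab,babab,\dots\}$ the binomial coefficient counting the occurrences of $u$ as a subword. So the first step is simply to observe that for $v\in\Aa^*$, taking $n=\langle bvb\rangle$ gives $[n]_2=bvb$ (the leading letter $b$ guarantees this is the genuine base-$2$ expansion with no lost leading digit), whence Theorem \ref{thm:CW} yields $s(\langle bvb\rangle)=\sum_{u\in b(ab)^*}\binom{bvb}{u}$.

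Second, I would invoke Theorem \ref{lem:SS}, which states precisely that ${\hat s}(va)=s(\langle bvb\rangle)=|a\psi(v)b|$. Combining this with the previous display gives the first asserted identity
\[
|a\psi(v)b|=\sum_{u\in b(ab)^*}\binom{bvb}{u}.
\]
For the second identity, assume $v$ is not constant. By Proposition \ref{prop:+-} one has $\pi(\psi(v))=|a\psi(v_+)b|$; since $v_+$ is a (possibly empty) word in $\Aa^*$, applying the first identity with $v$ replaced by $v_+$ gives $\pi(\psi(v))=|a\psi(v_+)b|=\sum_{u\in b(ab)^*}\binom{bv_+b}{u}$, which is exactly the second claim.

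The only genuine subtlety — and what I expect to be the "obstacle," though it is really just a matter of being careful — is the convergence/finiteness of the sum and the precise meaning of "number of occurrences of the subwords $u$" in the statement of Theorem \ref{thm:CW}: the set $b(ab)^*$ is infinite, but for a fixed finite word $bvb$ only finitely many $u\in b(ab)^*$ are short enough to occur as subwords (namely those with $|u|\le |bvb|$), so the sum $\sum_{u\in b(ab)^*}\binom{bvb}{u}$ has only finitely many nonzero terms and is well-defined. One should also note that distinct $u\in b(ab)^*$ contribute disjointly, so "total number of occurrences of subwords from the family" unambiguously equals $\sum_u\binom{bvb}{u}$, matching the notation $\binom{\cdot}{\cdot}$ introduced in Section \ref{sec:two}. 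Once this bookkeeping is pinned down, the proof is a two-line invocation of Theorems \ref{thm:CW} and \ref{lem:SS} together with Proposition \ref{prop:+-}, with no further computation required.
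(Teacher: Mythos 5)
Your proof is correct and follows essentially the same route as the paper: both derive the first identity by combining Theorem \ref{thm:CW} with the identification $s(\langle bvb\rangle)=|a\psi(v)b|$ from Theorem \ref{lem:SS}, and both obtain the second identity from $\pi(\psi(v))=|a\psi(v_+)b|$ in Proposition \ref{prop:+-}. Your extra remarks on the finiteness of the sum and the meaning of ``number of occurrences'' are correct bookkeeping that the paper leaves implicit.
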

\begin{proof} For each $v$, one has by Proposition  \ref{lem:SS},  $s(2\nu(v)-1)= s(\langle bvb\rangle)= |a\psi(v)b|$, so the first equality follows from Theorem \ref{thm:CW}. The second equality is derived  from the second statement of Proposition \ref{prop:+-}. 
\end{proof}

\begin{proposition} For any $v\in {\cal A}^*$ the number of initial occurrences of the subwords $u \in b(ab)^*$ in $bvb$ is
given by $|a\psi(v)b|_a$.
\end{proposition}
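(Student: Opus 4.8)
The plan is to classify the occurrences of the words $u\in b(ab)^{*}$ as subwords of $bvb$ according to whether they are \emph{initial} (i.e.\ whether they use position~$1$) or not; by Theorem~\ref{thm:CW} and Proposition~\ref{prop:CW11} their total number is $\sum_{u\in b(ab)^{*}}\binom{bvb}{u}=|a\psi(v)b|$, so it is enough to determine one of the two partial counts, which I will do by means of Proposition~\ref{prop:+-}. The constant case is immediate: for $v=a^{h}$ the initial occurrences in $ba^{h}b$ are the single one of $u=b$ together with the $h$ occurrences of $u=bab$ (one for each $a$), for a total of $h+1=|a\psi(a^{h})b|_{a}$; for $v=b^{h}$ with $h\geq 1$ one has $bvb=b^{h+2}$ and only $u=b$ occurs, with one initial occurrence, matching $|a\psi(b^{h})b|_{a}=1$.

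Now assume $v$ is not constant and write $v=x^{h}y({}_{+}v)$ with $x\neq y$, $h\geq 1$, so that $x=v^{F}$; I would reuse the bookkeeping already carried out in the proof of Theorem~\ref{thm:CW}, applied with $v$ in the role of the directive word there (legitimate, since $\langle bvb\rangle$ is always odd). If $v^{F}=a$, then $bvb=ba^{h}b({}_{+}v)b$ and, since a $u\in b(ab)^{*}$ begins with $b$ while positions $2,\dots,h+1$ carry $a$'s, every non-initial occurrence of such a $u$ actually starts at position $\geq h+2$ and is therefore exactly an occurrence of $u$ in the suffix $b({}_{+}v)b$; hence the number of non-initial occurrences is $\sum_{u\in b(ab)^{*}}\binom{b({}_{+}v)b}{u}=|a\psi({}_{+}v)b|=|a\psi(v)b|_{{\bar v}^{F}}=|a\psi(v)b|_{b}$ by Propositions~\ref{prop:CW11} and~\ref{prop:+-}, and subtracting from $|a\psi(v)b|$ leaves $|a\psi(v)b|_{a}$ initial occurrences. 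If instead $v^{F}=b$, then $bvb=b^{h+1}a({}_{+}v)b$, and the argument for the case $x=b$ in the proof of Theorem~\ref{thm:CW} shows that the number of initial occurrences of the words of $b(ab)^{*}$ in $bvb$ equals the number of all their occurrences in $b({}_{+}v)b$, that is $|a\psi({}_{+}v)b|$, which by Proposition~\ref{prop:+-} is now $|a\psi(v)b|_{{\bar v}^{F}}=|a\psi(v)b|_{a}$. This establishes the statement.

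The only delicate point is the identification used when $v^{F}=b$: unlike the case $v^{F}=a$, this is not an occurrence-by-occurrence bijection for a fixed $u$, but only a bijection after summing over all $u\in b(ab)^{*}$, because using or not using the distinguished $a$ of $b^{h+1}a({}_{+}v)b$ changes the exponent $n$ in $u=b(ab)^{n}$; this is, however, precisely the computation already performed in the proof of Theorem~\ref{thm:CW}, so no new difficulty arises. A fully self-contained alternative would be a direct induction on $|v|$ using Proposition~\ref{prop:+-}, where the same position-$2$ bookkeeping would be the step requiring care.
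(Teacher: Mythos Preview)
Your proof is correct and follows essentially the same strategy as the paper's: split into the constant case and the non-constant case $v=x^{h}y({}_{+}v)$, and in each subcase reduce the count to $|a\psi({}_{+}v)b|$ or $|a\psi({}^{-}v)b|$ via Proposition~\ref{prop:CW11}, then invoke Proposition~\ref{prop:+-}.

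The one small difference is in the case $v^{F}=b$. The paper works uniformly with \emph{non-initial} occurrences throughout, observing that these are exactly the occurrences in $vb$; when $v^{F}=b$ one has $vb=b^{k}a({}_{+}v)b=b({}^{-}v)b$, so Proposition~\ref{prop:CW11} applies directly and gives $|a\psi({}^{-}v)b|$, which equals $|a\psi(v)b|_{b}$ by Proposition~\ref{prop:+-}. You instead switch to counting \emph{initial} occurrences in this case and appeal to the bijection from Case~2 of the proof of Theorem~\ref{thm:CW} to identify them with all occurrences in $b({}_{+}v)b$. Both routes are valid; the paper's is slightly more self-contained since it avoids importing the somewhat delicate ``summed-over-$u$'' bijection that you correctly flag in your final paragraph, while yours has the merit of making the link to Theorem~\ref{thm:CW} explicit.
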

\begin{proof} We shall prove equivalently  that the number of non-initial occurrences of the subwords $u \in b(ab)^*$ in $bvb$ is
given by $|a\psi(v)b|_b$. This number  equals   the number of all occurrences of  the subwords $u$ in  $vb$. 

The result is trivial if  $v $ is constant.  Let us then suppose that $v$ is not constant.
We can write $v = x^ky (_+v)$ with $k>0$ and $x,y\in {\cal A}$, $x\neq y$. We first suppose that $x=a$.  By
Proposition \ref{prop:CW11}
one has
$$ \sum_{u\in b(ab)^*}\binom
 {vb}{u} = \sum_{u\in b(ab)^*}\binom
 {a^kb(_+v)b}{u} = \sum_{u\in b(ab)^*}\binom
 {b(_+v)b}{u} = |a\psi(_+v)b|.$$
 Since by Proposition \ref{prop:+-} one has  $|a\psi(_+v)b|= |a\psi(v)b|_b $,  in this case the result is achieved.
 If $x=b$, then
 $$\sum_{u\in b(ab)^*}\binom
 {vb}{u} = \sum_{u\in b(ab)^*}\binom{
 b^{k}a(_+v)b}{u} = \sum_{u\in b(ab)^*}\binom
 {b b^{k-1}a(_+v)b}{u} = |a\psi(^-v)b|. $$
 By Proposition \ref{prop:+-}, one has  $|a\psi(^-v)b|=  |a\psi(v)b|- |a\psi(_+v)b|=  |a\psi(v)b|-  |a\psi(v)b|_a= |a\psi(v)b|_b$.
 From this the result follows.
 \end{proof}
 We have seen in Proposition \ref{prop:CW11}, as a consequence of the Calkin-Wilf theorem,  that the number of occurrences of words
$u\in b(ab)^*$ as subwords of $bwb$ is equal to $|a\psi(w)b|= |\psi(w)ba|$. The following theorem shows that distinguishing
between initial and non-initial occurrences and sorting them in a suitable way, one can construct the standard word
$\psi(w)ba$. This result can be regarded as a non-commutative version of the Calkin-Wilf theorem.

 \begin{theorem}\label{thm:occur}
Let $w\in\Aa^{*}$, 
and  consider the reversed  occurrences of words of the set $b(ab)^{*}$ as subwords in $bwb$. Sorting these in decreasing lexicographic order, and marking the reversed   initial occurrences with $a$ and the reversed  non-initial ones with $b$, yields the standard word
$\psi(w)ba$.
\end{theorem}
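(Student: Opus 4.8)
The plan is to argue by induction on $|w|$, closely mirroring the structure of the proof of Theorem~\ref{thm:CW}, but now keeping track of the \emph{reversed words} of the occurrences and their sorted order rather than merely their number. Recall from Proposition~\ref{prop:CW11} and Theorem~\ref{thm:CW} that the total count of occurrences of subwords $u\in b(ab)^{*}$ in $bwb$ is $|a\psi(w)b| = |\psi(w)ba|$, so the resulting marked sequence does have the right length; what must be shown is that, read off in decreasing lexicographic order of the reversed occurrence-words, the marks spell exactly $\psi(w)ba$. Here a ``reversed occurrence'' of $u=u_1\cdots u_m$ at positions $j_1<\cdots<j_m$ in $bwb$ is the word $j_m j_{m-1}\cdots j_1$ over $\{1,\dots,|w|+2\}$; sorting these in decreasing lexicographic order is the key organising device.

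First I would treat the base case $w = z^p$ with $z\in\Aa$: here $\psi(z^p)ba = z^p ba$, and one checks directly that the reversed occurrences of $b(ab)^{*}$-subwords in $bz^pb$, sorted decreasingly, produce exactly this word (the occurrences are essentially parametrised by which among the available $b$'s and the single $a$-run are chosen, and the sort turns out to respect the $z^p$-then-$ba$ pattern). For the inductive step, write $w = x^h y\,(_+w)$ with $x\neq y$, $h\ge 1$, and split into Case~1 ($x=a$) and Case~2 ($x=b$) exactly as in Theorem~\ref{thm:CW}. In each case the occurrences of $b(ab)^{*}$ in $bwb$ split into a family coming from $b(_+w)b$ and a family coming from $b(^-w)b$, via the one-to-one correspondences already established in that proof. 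The new point is to verify that the \emph{lexicographic order on reversed occurrences} interleaves these two families in the manner dictated by the factorisation of $\psi(w)ba$ coming from Proposition~\ref{prop:+-} and Corollary~\ref{cor:Lyn+-}: recall $\psi(w) = \psi(w^-)xy\,\psi(w_+)$, equivalently $\psi(w)ba$ decomposes through the standard factorisation of $a\psi(w)b$ into the Christoffel words $a\psi(w_+)b$ and $a\psi(w^-)b$ (order depending on $w^L$). So the induction hypothesis gives the marked sorted sequences for $w_+$ (or $^-w$) and for $w^-$ (or $^-w$), and the task is to show these concatenate correctly after translating positions and re-sorting.

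The main obstacle will be precisely this order-bookkeeping: one must check that every reversed occurrence in the ``$w_+$-family'' is lexicographically larger (or smaller, depending on the case) than every reversed occurrence in the ``$w^-$-family'', so that the two already-sorted blocks simply abut. Because the occurrences are reversed before sorting, the comparison is governed by the \emph{largest} index used, i.e.\ by behaviour near the right end $bwb$, and the shift $w\mapsto {}^-w$ changes indices by a constant while $w\mapsto w_+$ truncates on the right --- so I expect the relevant inequalities to follow from a careful but elementary case analysis on whether the symbol $2$ (the position of the first letter of $w$) appears in the occurrence, paralleling the ``symbol $2$ appears or not'' dichotomy in the proof of Theorem~\ref{thm:CW}. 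Once the block structure is verified, marking initial occurrences (those containing the largest index $|w|+2$ as $j_1$, i.e.\ ending the reversed word appropriately --- here one must be careful: ``initial'' means $j_1=1$, so in the reversed word it is the \emph{last} symbol that equals $1$) with $a$ and the rest with $b$ matches, by induction, the letters of $a\psi(w_+)b$, $a\psi(w^-)b$ and hence of $\psi(w)ba$; the boundary letters $b$ and $a$ at the junction and at the very end are checked by hand. I would also record Remark~\ref{remark:CW}, noting that forgetting the order recovers Theorem~\ref{thm:CW}.
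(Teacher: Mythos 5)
Your overall framing (induction on $|w|$, the base case, reuse of the occurrence counts from Theorem~\ref{thm:CW}) is reasonable, but the inductive step as you describe it cannot work. You propose to split the occurrences of $b(ab)^{*}$ in $bwb$ into the two families coming from $b({}_{+}w)b$ and $b({}^{-}w)b$, exactly as in the proof of Theorem~\ref{thm:CW}; but those two families are precisely the non-initial and the initial occurrences (or vice versa, depending on the first letter of $w$). If, as you claim, one family were entirely $\prec$-larger than the other after reversal, the marked word would lie in $a^{*}b^{*}\cup b^{*}a^{*}$ --- whereas the theorem itself asserts that it equals $\psi(w)ba$, which for non-constant $w$ has interleaved $a$'s and $b$'s. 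Concretely, for $w=ab$ the sorted reversed occurrences are $421\succ 4\succ 321\succ 3\succ 1$ with initial/non-initial pattern $ababa$: the two families interleave from the very start. The underlying mismatch is that the left-hand operations ${}_{+}w,{}^{-}w$ govern the initial-versus-non-initial dichotomy (the symbol $2$), while the lexicographic order of \emph{reversed} occurrences is governed by the largest positions, i.e.\ by the right end of $bwb$; you notice this tension yourself, but your proposed resolution points the wrong way. A second symptom is that you invoke the factorization $\psi(w)=\psi(w^{-})xy\,\psi(w_{+})$, which involves the right-hand operations $w^{-},w_{+}$, to organize a decomposition of occurrences built from the left-hand ones; the paper establishes no factorization of the word $\psi(w)$ in terms of $\psi({}^{-}w)$ and $\psi({}_{+}w)$, and Justin's formula gives $\psi(w)=\mu_{w_{1}}(\psi({}^{-}w))w_{1}$, which is not a concatenation of the required form.

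The paper's proof avoids this by running the induction in the other direction: assuming the statement for $w$ and all shorter words, it proves it for $wa$ and $wb$, writing $w=w'ba^{k}$ (resp.\ $w=w'ab^{k}$) so that $(wa)_{+}=w'$, $(wa)^{-}=w$, and \eqref{eq:psi+} yields $\psi(wa)ba=\psi(w')ba\,\psi(w)ba$. The occurrences in $bwab$ are then split according to whether they contain the position of the appended last $a$ (with an analogous three-class split for $wb$). Since a reversed occurrence begins with its largest position, every occurrence through that position is $\succ$-greater than every occurrence avoiding it, so these families genuinely form consecutive blocks, in order- and initiality-preserving bijection with the occurrences in $bw'b$ and in $bwb$ respectively. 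That right-end splitting is the missing idea in your proposal; without it the step where ``the two already-sorted blocks simply abut'' fails.
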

\begin{proof}
Let $w\in\Aa^{n}$ for some integer $n\geq 0$ and let ${\cal B}= \{1,2, \ldots, n+2\}$ the $(n+2)$-letter alphabet totally  ordered
by the natural integer order  $h<h+1$, $h=1, 2, \ldots, n+1$. This order can be extended to the lexicographic order  $\prec$ in ${\cal B}^*$. 
We say that $\psi(w)ba$ {\em describes} the reversed occurrences in $bwb$ of subwords in the set $b(ab)^{*}$ if it is generated  by the sequence of markers associated with the sequence of the previous occurrences sorted in  decreasing lexicographic order.  In what follows for simplicity we shall use the term  {\em occurrence}  instead of  {\em reversed occurrence  of words of the set $b(ab)^{*}$ as subwords in $bwb$}. Hence, an occurrence is initial if it ends with  $1$. 

If $w=a^{n}$, then clearly the desired sequence of occurrences in  decreasing lexicographic order
is $(n+2)(n+1)1 \succ (n+2)n 1 \succ  \cdots \succ (n+2)21 \succ (n+2) \succ 1$, which gives rise to the sequence of markers $a^{n}ba=\psi(a^n)ba$. If $w=b^{n}$, the result is trivial.

Let us now suppose, by induction, that the result holds for $w\in\Aa^{*}$ and
all shorter words, and prove it for both $wa$ and $wb$. In the case of $wa$, we can assume that $b$ occurs in $w$ and write
$w=w'ba^{k}$ for some $w'\in\Aa^{*}$ and $k\geq 0$. Since $(wa)_+= w'$ and $(wa)^-= w$, by (\ref{eq:psi+}) we have
\begin{equation}
\label{eq:w0}
\psi(wa)ba= \psi(w')ba\psi(w)ba=
\psi(w')ba\psi(w'ba^{k})ba . 
\end{equation}
Let $|bw'b|=h$, and observe that the occurrences in $bwab= bw'ba^{k}ab$
containing the position $h+k+1$  (i.e., the last $a$) are all greater (in lexicographic order) than the ones not containing it; moreover, such occurrences cannot contain the preceding $k$  letters $a$, and have to start with $(h+k+2)(h+k+1)$. Hence, we can identify any such occurrence $(h+k+2)(h+k+1)\alpha$, $\alpha \in {\cal B}^*$, with the occurrence $\alpha$ in $bw'b$, and identify any occurrence not containing $(h+k+1)$ as an occurrence in $bw'ba^{k}b$. These bijections preserve
the lexicographic order and the property of being initial or not.  By
induction, 
  the occurrences in $bw'b$ are described by $\psi(w')ba$ and the occurrences in $bwb$ by $\psi(w)ba$.  It follows by \eqref{eq:w0} that $\psi(wa)ba$ describes  the 
occurrences in $bwab$ of subwords in the set $b(ab)^{*}$.

Let us now consider $wb$; in this case we can assume that $a$ occurs in $w$, and write
$w=w'ab^{k}$ for some $k\geq 0$ and $w'\in\Aa^{*}$. Therefore, since  $(wb)_+= w'$ and $(wb)^- = w$, by (\ref{eq:psi+}) we have
\begin{equation}
\label{eq:w1}
\psi(wb)ba= \psi(w)ba\psi(w')ba=\psi(w'ab^{k})ba\psi(w')ba.
\end{equation}

Now, occurrences of words in $b(ab)^*$ as subwords of $bwbb=bw'abb^{k+1}$ can be divided in the following three classes, in decreasing lexicographic order:
\begin{enumerate}
\item occurrences containing a position greater than $h+1$, where $h=|bw'a|$,
\item occurrences containing positions $h+1$ and $h$ (the $ab$ right after $w'$),
\item occurrences containing position $h+1$ but not $h$.
\end{enumerate}
Members of the first class can never contain position $h+1$, so that they can naturally be identified with occurrences in $bw'ab^{k+1}$ (and so that the three classes are
disjoint). Members of the second class can also be identified with occurrences in
$bw'ab^{k+1}$ after discarding positions $h+1$ and $h$. Under such correspondences (that do not alter the lexicographic order nor the property of being initial), it is easy to see that together, the first two classes make up all occurrences of words in
$b(ab)^{*}$ as subwords of $bw'ab^{k+1}= bwb$, so that by induction hypothesis they are
described by $\psi(w)ba$.

Occurrences in the third class can obviously be seen as occurrences in $bw'b$, and are therefore described by $\psi(w')ba$; the assertion then follows by~\eqref{eq:w1}.
\end{proof}

\begin{example} {\em Let $w= abbaa$, so that $\psi(w)= ababaababaababa$. The occurrences of words  in
$b(ab)^{*}$ as subwords of $bwb= babbaab$ are: 

\noindent {\em Initial occurrences}: 1, 123, 12357, 12367, 124, 12457, 12467, 127, 157, 167. 

\noindent{\em Non-initial occurrences}:  3, 357, 367, 4, 457, 467, 7.

Sorting the reversed  occurrences in decreasing lexicographic order,  one has the standard word $\psi(w)ba$
as shown by the following diagram:

\begin{gather*}
\begin{array}{cccccccccccccccccccc}
 \text{\small $a$} && \text{\small $b$} && \text{\small $a$} && \text{\small $b$} && \text{\small $a$} && \text{\small $a$} && \text{\small $b$} && \text{\small $a$}\\
76421 &\!>\! & 764 &\!>\! & 76321 &\!>\! & 763 &\!>\! & 761 &\!>\! & 75421 &\!>\! & 754 &\!>\! & 75321
\end{array}
\\
\begin{array}{cccccccccccccccccccc}
& \text{\small $b$} && \text{\small $a$} && \text{\small $a$} && \text{\small $b$} && \text{\small $a$} && \text{\small $b$} && \text{\small $a$} && \text{\small $b$} && \text{\small $a$}\\
>\! & 753 &\!>\! & 751 &\!>\! & 721 &\!>\! & 7 &\!>\! & 421 &\!>\! & 4 &\!>\! &321 &\!>\! &3 &\!>\! & 1
\end{array}
\end{gather*}

}
\end{example}
\smallskip

\begin{remark}\label{remark:CW} {\em Theorem \ref{thm:occur} gives also a different proof of the Calkin-Wilf theorem. Indeed, from Theorem \ref{thm:occur} one derives that for any $w\in {\cal A}^*$, the total number of occurrences  of words in
$b(ab)^{*}$ as subwords of $bwb$ is equal to $|\psi(w)ba|= |a\psi(w)b|$.}
\end{remark}

\section{The Coons-Shallit theorem}\label{sec:eight}

In this section we shall prove a formula relating for each $w\in {\cal A}^*$ the length of the Christoffel word $a\psi(w)b$ with the  occurrences in $bwb$ of a certain kind of  factors whose number is weighted by the lengths of Christoffel words associated to suitable  directive words which are factors of $w$. The result is a consequence of the following interesting theorem on Stern's sequence due to Coons and Shallit \cite{CS}.

For any $n\geq 0$ and $w\in {\cal A}^*$ let $\alpha_w(n)$ simply denote the number
of occurrences of $w$ in the binary expansion of the integer $n$, i.e., $\alpha_w(n)= |[n]_2|_w$.

\begin{theorem} For any $n\geq 0$ and $w\in {\cal A}^*$, 
$$s(n) = \alpha_b(n) + \sum_{w\in b{\cal A}^*} s(\langle {\bar w}\rangle) \alpha_{wb}(n).$$
\end{theorem}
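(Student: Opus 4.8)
The plan is to deduce this identity from the Calkin--Wilf theorem (Theorem~\ref{thm:CW}), according to which $s(n)$ counts the occurrences of subwords belonging to $b(ab)^{*}$ in $[n]_{2}$. The idea is to classify each such occurrence by the shortest factor of $[n]_{2}$ containing it. Concretely, an occurrence of $u=b(ab)^{j}$, $j\geq 0$, at positions $p_{0}<p_{1}<\cdots<p_{2j}$ \emph{spans} the factor $F$ of $[n]_{2}$ stretching from position $p_{0}$ to position $p_{2j}$; since $u$ begins and ends with $b$, so does $F$. First I would note that the occurrences with $j=0$ are precisely the single occurrences of the letter $b$ in $[n]_{2}$, of which there are $\alpha_{b}(n)$; these account for the first summand.

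Next I would handle the occurrences with $j\geq 1$, grouping them by the occurrence of the factor $F$ they span. Such an $F$ has length $\geq 3$ and begins and ends with $b$, hence it can be written uniquely as $F=wb=b\sigma b$ with $w\in b\Aa^{*}$ and $\sigma\in\Aa^{*}$. Deleting the two extreme positions sets up a bijection between the $j\geq1$ occurrences spanning a fixed occurrence of $F$ and the occurrences of words from $a(ba)^{*}$ as subwords of $\sigma$ (the ``interior'' letters of $b(ab)^{j}$ read $a(ba)^{j-1}$). Complementing $\sigma$ converts these into occurrences of words from $b(ab)^{*}$ in $\bar\sigma$; since an occurrence of a word from $b(ab)^{*}$ starts with $b$ it never meets the leading $a$'s of $\bar\sigma$, so it is an occurrence in $[\langle\bar\sigma\rangle]_{2}$, and another application of Theorem~\ref{thm:CW} shows their number is $s(\langle\bar\sigma\rangle)=s(\langle\overline{b\sigma}\rangle)=s(\langle\bar w\rangle)$. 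Summing over all factor occurrences, i.e.\ summing $\alpha_{wb}(n)\,s(\langle\bar w\rangle)$ over $w\in b\Aa^{*}$, then yields the second summand, completing the proof.

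I expect the only real work to be the bookkeeping: verifying that the interior of a span is exactly an $a(ba)^{*}$-word, that the passage to $[\langle\bar\sigma\rangle]_{2}$ in the second use of Calkin--Wilf is legitimate, and that the degenerate cases are consistent (the case $j=0$; spans of length $1$ or $2$, the latter contributing the harmless term $s(\langle\bar b\rangle)=s(0)=0$; $n$ a power of $2$; $n=0$). As an alternative I would keep in mind a direct induction on $n$ via $s(2n)=s(n)$ and $s(2n+1)=s(n)+s(n+1)$: appending a $0$ or a $1$ to $[n]_{2}$ shows that the right-hand side obeys the same two recurrences, provided one establishes the auxiliary identity $s(n+1)=1+\sum_{u}s(\langle\bar u\rangle)$, the sum being over the suffixes $u$ of $[n]_{2}$ beginning with $b$; this auxiliary identity can be derived from Theorem~\ref{lem:SS}, Lemma~\ref{periods}, and Proposition~\ref{prop:periods}. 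The grouping argument above, however, seems cleaner and more in keeping with the methods of the paper.
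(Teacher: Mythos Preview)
The paper does not actually prove this statement; it is quoted as the Coons--Shallit theorem~\cite{CS} and used as a black box in the proof of Theorem~\ref{thm:cs2}. So there is no ``paper's own proof'' to compare against.

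Your main proposal is nonetheless correct and in fact supplies a proof internal to the paper, deriving the Coons--Shallit identity from the Calkin--Wilf theorem (Theorem~\ref{thm:CW}) alone. The span decomposition works exactly as you describe: each occurrence of $b(ab)^{j}$ with $j\geq 1$ in $[n]_{2}$ determines a unique factor occurrence $F=b\sigma b=wb$ with $w=b\sigma\in b\Aa^{*}$; stripping the two extreme positions gives a bijection with subword occurrences of $a(ba)^{j-1}$ in $\sigma$, and complementation followed by a second application of Theorem~\ref{thm:CW} yields the weight $s(\langle\bar\sigma\rangle)=s(\langle a\bar\sigma\rangle)=s(\langle\bar w\rangle)$. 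The edge cases you flag (spans of length $1$ and $2$, $\bar\sigma\in a^{*}$, $n=0$ or a power of $2$) are all handled by $s(0)=0$ and by the observation that leading $a$'s in $\bar\sigma$ are irrelevant to subword occurrences beginning with $b$. One small point worth making explicit in the write-up: for the bijection to be well defined you use that the first and last positions of the subword occurrence are forced to coincide with the endpoints of the spanning factor occurrence, so that only the interior positions vary.

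This argument is arguably an improvement on the paper's exposition, since it removes the external dependence on~\cite{CS} and shows that Theorem~\ref{thm:cs2} is already a consequence of Theorem~\ref{thm:CW}. Your alternative inductive route via $s(2n)=s(n)$, $s(2n+1)=s(n)+s(n+1)$ would also work and is closer in spirit to the original Coons--Shallit argument, but the span-grouping proof is shorter and, as you say, more in keeping with the methods of Section~\ref{sec:seven}.
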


Let us now define the two following sets of words  $\Gamma_1= \{ u\in b{\cal A}^*b \mid |u|_a=1\}$ and
 $\Gamma_2= \{ u\in b{\cal A}^*b \mid |u|_a \geq 2\}$. Moreover, to each word $u\in \Gamma_2$ we can associate
 the unique word ${\hat u}$ such that $u \in b^+a{\hat u}ab^+$, i.e., ${\hat u}$ is the unique factor of $u$ between the first and the last occurrence of $a$ in $u$. The following holds:
 
 \begin{theorem}\label{thm:cs2} For any $w\in {\cal A}^*$ one has
$$|a\psi(w)b|= |bwb|_b + \sum_{u\in \Gamma_1} |bwb|_u + \sum_{u\in \Gamma_2} |a\psi({\hat u})b| |bwb|_u .$$
 \end{theorem}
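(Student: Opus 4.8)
The plan is to derive Theorem~\ref{thm:cs2} from the Coons--Shallit theorem by taking $n = \langle bwb\rangle$ and carefully reinterpreting each term of the sum. First I would set $n = \langle bwb\rangle$, so that by Theorem~\ref{lem:SS} the left side $s(n)$ equals $|a\psi(w)b|$, and $[n]_2 = bwb$. The Coons--Shallit formula then reads $|a\psi(w)b| = \alpha_b(n) + \sum_{u \in b\mathcal A^*} s(\langle \bar u\rangle)\,\alpha_{ub}(n)$, where $\alpha_b(n) = |bwb|_b$ and $\alpha_{ub}(n) = |bwb|_{ub}$. The key is to match the indexing set: in Theorem~\ref{thm:cs2} the sums run over factors $u$ that both \emph{start and end} with $b$, whereas in Coons--Shallit they run over $u \in b\mathcal A^*$ with the factor being $ub$. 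So I would reindex by writing each Coons--Shallit factor $ub$ (with $u \in b\mathcal A^*$) as a single word $v = ub \in b\mathcal A^* b$, noting $\langle \bar u\rangle$ in terms of $v$: since $v = ub$, we have $u = v^-$ (deleting the last letter), hence $\bar u = \overline{v^-} = (\bar v)^{-}$... actually $\overline{v^-}$, and I need $s(\langle \overline{v^-}\rangle)$.

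Next I would split the reindexed sum over $v \in b\mathcal A^* b$ according to the number of occurrences of $a$ in $v$. If $|v|_a = 0$, i.e. $v = b^j$ with $j \ge 2$, then $v^- = b^{j-1}$, $\overline{v^-} = a^{j-1}$, and $s(\langle a^{j-1}\rangle) = s(0) = 0$, so these terms vanish and contribute nothing — this is why $\Gamma_1$ and $\Gamma_2$ together (words with $|v|_a \ge 1$) suffice. If $|v|_a = 1$, write $v = b^i a b^j$ with $i \ge 1$, $j \ge 1$; then $v^- = b^i a b^{j-1}$ and $\overline{v^-} = a^i b a^{j-1}$, whose standard interpretation has $s(\langle a^i b a^{j-1}\rangle) = s(\langle b a^{j-1}\rangle) = s(\langle b \rangle) = s(1) = 1$ (dropping leading $a$'s and using $s(2m) = s(m)$ to drop trailing $a$'s, i.e. $\langle ba^{j-1}\rangle = 2^{j-1}$, $s(2^{j-1}) = s(1)$). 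So each $v \in \Gamma_1$ contributes exactly $|bwb|_v$, matching the $\sum_{u\in\Gamma_1}|bwb|_u$ term. If $|v|_a \ge 2$, write $v \in b^+ a\,\hat v\, a b^+$ as in the statement; then $v^- \in b^+ a\,\hat v\,a b^*$, and $\overline{v^-} \in a^+ b\,\overline{\hat v}\,b a^*$. Stripping leading $a$'s and trailing $a$'s (the latter harmless since $s(2m)=s(m)$), $s(\langle \overline{v^-}\rangle) = s(\langle b\,\overline{\hat v}\,b\rangle)$; by Proposition~\ref{prop:basicp} (P6) and Theorem~\ref{lem:SS}, $s(\langle b\,\overline{\hat v}\,b\rangle) = |a\psi(\overline{\hat v})b| = |\overline{a\psi(\hat v)b}| = |a\psi(\hat v)b|$ (complementation preserves length). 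This is exactly the weight $|a\psi(\hat v)b|$ in the $\Gamma_2$ sum.

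Assembling the three cases, the Coons--Shallit sum becomes $|bwb|_b + \sum_{u\in\Gamma_1}|bwb|_u + \sum_{u\in\Gamma_2}|a\psi(\hat u)b|\,|bwb|_u$, which is the claimed identity. I expect the main obstacle to be the bookkeeping in the reindexing step: one must be careful that the map $u \mapsto ub$ from $\{u \in b\mathcal A^*\}$ onto $b\mathcal A^* b$ is a bijection (it is), that $\alpha_{ub}(\langle bwb\rangle)$ genuinely equals $|bwb|_{ub}$ as a factor count (true since $[n]_2 = bwb$ has no leading-zero ambiguity because $w$ is flanked by $b$'s), and — the most delicate point — that the manipulations $s(\langle a^i x a^j\rangle) = s(\langle x\rangle)$ are justified: dropping \emph{leading} $a$'s changes the represented integer but not as a string-of-digits issue, rather one must observe $\langle a^i x\rangle = \langle x\rangle$ directly since leading zeros don't change the value, and dropping \emph{trailing} $a$'s is the repeated use of $s(2m) = s(m)$. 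Once these reductions are stated cleanly the rest is a routine case split.
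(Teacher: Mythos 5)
Your proposal is correct and takes essentially the same route as the paper's proof: apply the Coons--Shallit theorem at $n=\langle bwb\rangle$, reindex the sum from $u\in b{\cal A}^*$ (counting the factor $ub$) to $v=ub\in b{\cal A}^*b$ using $s(\langle \bar u\rangle)=s(2\langle\bar u\rangle)=s(\langle\overline{ub}\rangle)$, and split $b{\cal A}^*b$ into $b^*$ (weight $s(0)=0$), $\Gamma_1$ (weight $1$ since $\langle\bar u\rangle$ is a power of $2$), and $\Gamma_2$ (weight $s(\langle b\overline{\hat u}b\rangle)=|a\psi(\hat u)b|$ by Theorem~\ref{lem:SS} and property P6). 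Your explicit stripping of leading and trailing $a$'s from $\overline{v^-}$ is just a slightly more verbose version of the paper's one-line use of $s(\langle va\rangle)=s(\langle v\rangle)$, so there is nothing substantive to add.
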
 
 \begin{proof}
By Proposition~\ref{lem:SS}, the Coons-Shallit  theorem implies that for any $w\in\Aa^{*}$ we have
\begin{equation}
\label{eq:cs}
|a\psi(w)b|= s(\langle bwb\rangle)= |bwb|_{b}+\sum_{u\in b\Aa^{*}}s(\langle\bar u\rangle)|bwb|_{ub}\,.
\end{equation}
As $s(\langle v\rangle)=s(2\langle v\rangle)=s(\langle va\rangle)$ for any
$v\in\Aa^{*}$, the last sum in~\eqref{eq:cs} can be replaced by
$$\sum_{u\in b\Aa^{*}b}s(\langle\bar u\rangle)|bwb|_{u}. $$
Now, the set $b\Aa^{*}b$ is
clearly a disjoint union of $b^{*}$, $\Gamma_{1}$, and $\Gamma_{2}$. Since $s(0)=0$, we
only need to calculate that sum on $\Gamma_{1}$ and $\Gamma_{2}$.

For any $u\in\Gamma_{1}$, $\langle\bar u\rangle$ is a power of 2, so that
$s(\langle\bar u\rangle)=1$ and
$$\sum_{u\in \Gamma_1}s(\langle\bar u\rangle)|bwb|_{u}=   \sum_{u\in \Gamma_1} |bwb|_u.$$
If $u\in\Gamma_{2}$, then by the properties of $s$ we have
\[s(\langle\bar u\rangle)=s(\langle\overline{a\hat ua}\rangle)=
s(\langle b\bar{\hat u}b\rangle)=|a\psi(\hat u)b|\,,\]
where the last equality comes from Theorem~\ref{lem:SS} and the fact that
$|\psi(\bar v)|=|\psi(v)|$ for any $v\in\Aa^{*}$ (cf.~item P6 of Proposition \ref{prop:basicp}). Therefore, the assertion follows
from~\eqref{eq:cs}.
\end{proof}

 \begin{example}{\em Let  $w= ababa$, so $\psi(w) = abaababaabaababaaba$ and $|a\psi(w)b|$ $=$ $21$. In $bwb= bababab$
 there is only one factor namely $bab$  beginning and terminating with $b$  and having only one occurrence of the letter $a$. One has $|bwb|_{bab}=3$.
 There are two factors $u$  in $bwb$   beginning and terminating with $b$ such that  $|u|_b\geq 2$. The first is 
 $u_1= babab$ and occurs two times in $bwb$ and the second $u_2= bababab$ occurring only once in $bwb$.  Moreover, ${\hat u}_1= b$ and ${\hat u}_2= bab$. Since $|bwb|_b= 4$, $|a\psi(b)b|=3$, and $|a\psi(bab)b|=8$, one obtains by Theorem \ref{thm:cs2}, $|a\psi(w)b| = 4+3+6+8=21$.}
 \end{example}
 
\section{Length distribution of Christoffel words}\label{sec:nine}
 We recall  that a proper Christoffel word $w$ is of {\em order} $k$, $k\geq 0$,
if  $w= a\psi(v)b$ with $v\in {\cal A}^k$. In this section we are interested in the distribution
of the lengths of Christoffel words of order $k$.

By Theorem \ref{lem:SS} one has that
\begin{equation}\label{eq:abpsi}
 \{|a\psi(v)b| \mid v\in {\cal A}^k\} = \{s(2n-1) \mid 2^k+1 \leq n \leq 2^{k+1} \}.
 \end{equation}

 \begin{lemma} For each $k\geq 0$, $$\sum_{v\in {\cal A}^k} |a\psi(v)b| = 2 \cdot 3^k .$$
 \end{lemma}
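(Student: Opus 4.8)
The plan is to prove the identity by induction on $k$, using the recursive structure of the sum $\sum_{v\in\Aa^k}|a\psi(v)b|$ under extending a directive word by one letter. The key tool is Corollary~\ref{lemma:ineq} together with Proposition~\ref{prop:+-}, or more directly the relation $|a\psi(v)b| = p_a(v)+p_b(v)$ coming from~\eqref{eq:cent1}, combined with Lemma~\ref{periods}. Indeed, Lemma~\ref{periods} tells us exactly how the pair $(p_a(v),p_b(v))$ transforms when we append a letter: appending $a$ sends $(p_a,p_b)\mapsto(p_a,p_a+p_b)$ and appending $b$ sends $(p_a,p_b)\mapsto(p_a+p_b,p_b)$. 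Hence $|a\psi(va)b|+|a\psi(vb)b| = (p_a+p_a+p_b)+(p_a+p_b+p_b) = 3(p_a(v)+p_b(v)) = 3|a\psi(v)b|$.

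With this observation the induction is immediate. For the base case $k=0$ we have $\Aa^0=\{\varepsilon\}$ and $|a\psi(\varepsilon)b| = |ab| = 2 = 2\cdot 3^0$. For the inductive step, assuming $\sum_{v\in\Aa^k}|a\psi(v)b| = 2\cdot 3^k$, we partition $\Aa^{k+1}$ according to the last letter: every word in $\Aa^{k+1}$ is uniquely $va$ or $vb$ for some $v\in\Aa^k$. Therefore
$$\sum_{w\in\Aa^{k+1}}|a\psi(w)b| = \sum_{v\in\Aa^k}\bigl(|a\psi(va)b|+|a\psi(vb)b|\bigr) = \sum_{v\in\Aa^k} 3|a\psi(v)b| = 3\cdot 2\cdot 3^k = 2\cdot 3^{k+1},$$
which completes the induction.

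I expect there to be essentially no obstacle here: the only point requiring a little care is justifying the per-word identity $|a\psi(va)b|+|a\psi(vb)b| = 3|a\psi(v)b|$, and this follows cleanly from~\eqref{eq:cent1} (which gives $|a\psi(v)b| = |\psi(v)|+2 = p_a(v)+p_b(v)$) and Lemma~\ref{periods} (which gives the transformation rule for $p_a,p_b$ under appending a letter). An alternative, purely numerical route is available through~\eqref{eq:abpsi} and the Stern-sequence recurrence: $\sum_{v\in\Aa^k}|a\psi(v)b| = \sum_{n=2^k+1}^{2^{k+1}} s(2n-1)$, and one can check that this sum triples when $k$ increases by one using $s(2m+1) = s(m)+s(m+1)$ and $s(2m)=s(m)$; but the word-combinatorial argument via Lemma~\ref{periods} is shorter and more transparent, so that is the one I would present.
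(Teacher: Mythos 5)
Your proof is correct, but it takes a genuinely different route from the paper's. The paper does not argue by induction on words at all: it uses Theorem~\ref{lem:SS} (via~\eqref{eq:abpsi}) to rewrite the sum as $\sum_{n=2^k+1}^{2^{k+1}} s(2n-1)$, splits this as $\sum s(n) + \sum s(n-1)$ using the Stern recurrence, and then invokes the classical identity $\sum_{n=2^k+1}^{2^{k+1}} s(n) = 3^k$ (cited from Lehmer), together with $s(2^k)=s(2^{k+1})=1$ to shift the second sum. This is essentially the ``alternative numerical route'' you sketch and discard at the end of your proposal. Your argument instead stays entirely inside the word-combinatorial framework: the per-word tripling identity $|a\psi(va)b|+|a\psi(vb)b|=3\,|a\psi(v)b|$, which follows from $|a\psi(v)b|=p_a(v)+p_b(v)$ (a consequence of~\eqref{eq:cent1}) and the transformation rule of Lemma~\ref{periods}, immediately gives the result by induction on $k$. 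What you gain is self-containedness --- no appeal to an external Stern-sequence identity is needed, and indeed your tripling identity is precisely the combinatorial content underlying the classical proof that the Stern values over a dyadic block sum to $3^k$. What the paper's version buys is consistency with its overall program of translating between Christoffel-word statements and Stern-sequence statements, which is the theme of that section. Both proofs are complete and correct.
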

 \begin{proof} From (\ref{eq:abpsi}) one has 
 $$\sum_{v\in {\cal A}^k} |a\psi(v)b| = \sum_{n= 2^k+1}^{2^{k+1}} s(2n-1)=\sum_{n= 2^k+1}^{2^{k+1}}s(n)+\sum_{n= 2^k+1}^{2^{k+1}}s(n-1).$$
 As is well known (see, for instance, \cite{Leh}), $\sum_{n= 2^k+1}^{2^{k+1}}s(n)= 3^k$. Moreover, since $s(2^k)= s(2^{k+1})= s(1)=1$, one has $$\sum_{n= 2^k+1}^{2^{k+1}}s(n-1)= \sum_{n= 2^k+1}^{2^{k+1}}s(n)= 3^k.$$ From this the result follows.
 \end{proof}
 
 Let us observe that from the preceding lemma one has that the average length of the Christoffel words of order $k$ is  $ 2(3/ 4)^k$.

We say that a word $v\in {\cal A}^k$ is   {\em alternating} if  for $x,y\in {\cal A}$ and $x\neq y$, $v = (xy)^{\frac{k}{2}}$ if $k$ is even and $v=  (xy)^{\lfloor\frac{k}{2}\rfloor} x$ if $k$ is odd, i.e., any letter in $v$ is immediately followed by its complementary.

The following lemma, as regards the upper bound,  was proved in \cite{FiDE} as an extremal property of the Fibonacci word. A different proof is obtained from (\ref{eq:abpsi}) as a property of Stern's sequence (see, for instance, \cite{North}). As regards  the lower bound the proof  is trivial.

\begin{lemma}\label{lemma:limit} For all $v\in {\cal A}^k$ one has
$$k+2 \leq  |a\psi(v)b| \leq F_{k+1},$$
where the lower bound is reached if and only if $v $ is constant  and the upper
bound is reached if and only if  $v$ is alternating.
\end{lemma}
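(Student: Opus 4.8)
The plan is to prove the two bounds separately, using the recursive structure of $\psi$ captured by Proposition~\ref{prop:+-} together with a straightforward induction on $k$. The lower bound is immediate: by Proposition~\ref{prop:periods}, $|a\psi(v)b| = 2 + |\psi(v)| = 2 + \sum_{i=1}^k \pi(\psi(v_1\cdots v_i)) \geq 2 + k$, since each minimal period is at least $1$; and equality forces every $\pi(\psi(v_1\cdots v_i)) = 1$, i.e.\ every prefix of $v$ is constant, which means $v$ itself is constant. (One checks directly that $|a\psi(x^k)b| = k+2$.)

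For the upper bound I would argue by induction on $k$, the cases $k=0,1$ being trivial ($F_1 = 2$, $F_2 = 3$). For $k \geq 2$ and $v \in \Aa^k$ non-constant, write $x = v^L$ and apply Proposition~\ref{prop:+-}:
\[
|a\psi(v)b| = |a\psi(v^-)b| + |a\psi(v_+)b|.
\]
Here $v^- \in \Aa^{k-1}$, so by induction $|a\psi(v^-)b| \leq F_k$. For the second term, $v_+$ is a \emph{proper} prefix of $v^-$, hence $|v_+| \leq k-2$, giving $|a\psi(v_+)b| \leq F_{k-1}$ by induction. Adding, $|a\psi(v)b| \leq F_k + F_{k-1} = F_{k+1}$. (If $v$ is constant we already saw $|a\psi(v)b| = k+2 \leq F_{k+1}$ for $k \geq 2$, with strict inequality for $k \geq 3$.)

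The delicate part is the equality characterization. From the inequality chain, $|a\psi(v)b| = F_{k+1}$ forces simultaneously $|a\psi(v^-)b| = F_k$ \emph{and} $|a\psi(v_+)b| = F_{k-1}$. The second equality, combined with the constraint $|v_+| \leq k-2$ and the fact (from the lower bound direction, applied to lengths shorter than $k-1$) that $|a\psi(u)b| \leq F_{|u|+1} < F_{k-1}$ whenever $|u| < k-2$, forces $|v_+| = k-2$ exactly — that is, $v^- = v_+ \cdot \bar{x} \cdot$ (nothing), so the last letter of $v^-$ is $\bar x$, i.e.\ $v_{k-1} \neq v_k$. Meanwhile $|a\psi(v^-)b| = F_k$ lets the induction hypothesis apply to $v^-$, concluding $v^-$ is alternating; together with $v_{k-1} \neq v_k$ this makes $v$ alternating. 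Conversely, if $v$ is alternating then $v^-$ is alternating and $v_+$ is the alternating word of length $k-2$, so the two summands hit $F_k$ and $F_{k-1}$ and the maximum $F_{k+1}$ is attained.

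The main obstacle I anticipate is making the "$|v_+| = k-2$ is forced" step fully rigorous: one must be careful that $|a\psi(u)b|$ is \emph{strictly} increasing enough in $|u|$ to rule out $|v_+| < k-2$ contributing $F_{k-1}$, which uses precisely the (easy) general bound $|a\psi(u)b| \leq F_{|u|+1}$ together with strict monotonicity $F_{k-1} > F_{k-2} \geq |a\psi(u)b|$ for $|u| \leq k-3$. An alternative, perhaps cleaner, route for the whole upper bound is to cite~\eqref{eq:abpsi} and invoke the known maximality of Stern's sequence at the Fibonacci-indexed positions (as the paper notes, \cite{North}), but I would prefer the self-contained inductive argument above since it also yields the equality case transparently.
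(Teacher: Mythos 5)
Your proof is correct. It is worth noting that the paper itself gives no argument for this lemma: it attributes the upper bound to \cite{FiDE} (as an extremal property of the Fibonacci word), points to an alternative derivation via (\ref{eq:abpsi}) from the known Fibonacci-maximality of Stern's sequence \cite{North}, and dismisses the lower bound as trivial. Your self-contained induction is therefore a genuine alternative rather than a reproduction. The engine you choose, namely the decomposition $|a\psi(v)b|=|a\psi(v^-)b|+|a\psi(v_+)b|$ from Proposition~\ref{prop:+-} together with $|v_+|\leq k-2$ (because $v_+$ is a proper prefix of $v^-$), is exactly the Fibonacci recursion in disguise, and its main payoff over the cited routes is that the equality analysis falls out transparently: attaining $F_{k+1}$ forces $|v_+|=k-2$ (so $v_{k-1}\neq v_k$) and forces $v^-$ to be extremal, hence alternating by induction. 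The lower-bound argument via Proposition~\ref{prop:periods} and $\pi\geq 1$ is clean, and the equality case only needs the single observation $\pi(\psi(v))=1\Rightarrow\psi(v)$ constant $\Rightarrow v$ constant (by injectivity of $\psi$ and $\psi(x^k)=x^k$). One small imprecision: you assert strict inequality $k+2<F_{k+1}$ for constant words only when $k\geq 3$, but you need it already at $k=2$ (where it does hold, $4<5$) to exclude constant words from the equality case there; alternatively the $k=2$ equality case is settled directly since the non-constant words of length $2$ are precisely the alternating ones. This does not affect the validity of the argument.
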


For each $k$ let $u_k$ be the alternating word of length $k$ beginning with the letter $a$. One has that
$$\langle bu_{k-1}b \rangle = \frac{2^{k+2}+(-1)^{k+1}}{3}, \ \ \langle b{\bar u_{k-1}}b \rangle = \frac{5\cdot 2^{k}+(-1)^k}{3}.$$
Thus by Theorem \ref{lem:SS}   and the preceding lemma, one has  (see, for instance,~\cite[Theorem 2.1]{North})
$$ F_k = |a\psi(u_{k-1})b| =  |a\psi({\bar u}_{k-1})b|= s\left (\frac{2^{k+2}+(-1)^{k+1}}{3}\right )= s \left( \frac{5\cdot 2^{k}+(-1)^k}{3} \right).$$

  In the following for each word $v\in {\cal A}^*$ we let $[v]$ denote the set  $[v]= \{v, v^{\sim}, {\bar v}, {\bar v}^{\sim} \}$.  From Proposition \ref{prop:basicp} all Christoffel words
  $a\psi(z)b$ with a directive word $z\in [v]$ have the same length.

\begin{proposition}\label{prop:noconst} If $v\in {\cal A}^k$ is not constant, then
$$|a\psi(v)b| \geq 2k+1,$$
where the lower bound is reached if and only if $v\in [ab^{k-1}]$.
\end{proposition}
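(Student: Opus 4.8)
The plan is to reduce the problem to a statement about lengths of Christoffel words whose directive words differ only in a controlled way, and then to apply the recursive length formula of Proposition~\ref{prop:+-} repeatedly. First I would observe that it suffices to prove the inequality for one representative of each class $[v]$, since by Proposition~\ref{prop:basicp} (items P5 and P6) all directive words in $[v]$ give Christoffel words of the same length; in particular the class $[ab^{k-1}]$ contains $ab^{k-1}$, $b^{k-1}a$, $ba^{k-1}$, and $a^{k-1}b$, and a direct computation (as in the proof of Corollary~\ref{lemma:ineq}) gives $|a\psi(ab^{k-1})b| = |a\psi(a^{k-1}b)b| = 2k+1$. So the content is the inequality $|a\psi(v)b|\geq 2k+1$ for every non-constant $v\in{\cal A}^k$, with equality characterized.

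The induction would be on $k$. The base case $k=2$ is immediate: the only non-constant words are $ab$ and $ba$, both giving $|a\psi(v)b| = 5 = 2\cdot 2 + 1$, and $[ab^{1}] = [ab] = \{ab, ba\}$ covers them. For the inductive step, take non-constant $v\in{\cal A}^k$ with $k\geq 3$, write $v = v^- x$ with $x = v^L$, and split into cases according to whether $v^-$ is constant. If $v^-$ is constant, then $v\in[a^{k-1}b]$ (up to the symmetries in $[\cdot]$), and one computes $|a\psi(v)b| = 2k+1$ directly, landing exactly on the bound. If $v^-$ is not constant, apply Proposition~\ref{prop:+-}: $|a\psi(v)b| = |a\psi(v^-)b| + |a\psi(v_+)b|$. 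Here $v^-$ is a non-constant word of length $k-1$, so by induction $|a\psi(v^-)b|\geq 2(k-1)+1 = 2k-1$, while $|a\psi(v_+)b| = \pi(\psi(v))\geq 2$ always (the minimal period of a non-constant word is at least $2$). Adding gives $|a\psi(v)b|\geq 2k+1$, as desired.

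For the equality characterization in the non-constant-$v^-$ case, equality forces simultaneously $|a\psi(v^-)b| = 2k-1$ and $\pi(\psi(v)) = 2$. The second condition, via Corollary~\ref{cor:Lyn+-} (which gives $\pi(\psi(v)) = p_{v^L}(v) = |a\psi(v_+)b|$ and hence $|v_+| = 0$, so $v_+ = \varepsilon$), means that $v$ begins with $\bar x \bar x$ has... more carefully: $\pi(\psi(v)) = 2$ together with Proposition~\ref{prop:+-} means $v_+$ is empty or a single letter, and unwinding $v_+ = \varepsilon$ forces $v$ to start with the letter $\bar x$ only once, i.e.\ $v$ has the form $\bar x x^{k-1}$; but that makes $v^-=\bar x x^{k-2}$, whose class is $[ax^{\,k-2}\bar x]$, a constant-after-removal situation, contradicting our case assumption unless $k-2 \le 1$. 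So in fact when $v^-$ is non-constant and $k\ge 4$, the inequality is strict, and the unique equality cases are exactly those with $v^-$ constant, i.e.\ $v\in[ab^{k-1}]$ (for $k=3$ one checks the small cases by hand and confirms $[ab^{2}]$ is the equality class). The main obstacle I anticipate is precisely this bookkeeping in the equality analysis: one must be careful that the four symmetric forms in $[ab^{k-1}]$ and in $[a^{k-1}b]$ actually coincide as a single class (they do: $(ab^{k-1})^{\sim} = b^{k-1}a$ and $\overline{ab^{k-1}} = ba^{k-1}$, while $a^{k-1}b$ and $ab^{k-1}$ are related by reversal-then-complement), so that "the bound is reached iff $v\in[ab^{k-1}]$" is a clean single condition; and one must handle the low values $k=2,3$ separately since the recursion only gains traction once $v^-$ can be non-constant. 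Everything else is a routine application of the already-established Proposition~\ref{prop:+-} and Corollary~\ref{cor:Lyn+-}.
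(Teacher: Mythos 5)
Your overall strategy --- induction on $k$, splitting on whether $v^-$ is constant, and using Proposition~\ref{prop:+-} to write $|a\psi(v)b| = |a\psi(v^-)b| + |a\psi(v_+)b| \geq (2k-1)+2$ --- is the same as the paper's, and the inequality part of your argument is correct.

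The equality characterization, however, contains a genuine error. In the branch where $v^-$ is non-constant you correctly deduce that equality forces $v_+=\varepsilon$, hence $v=\bar x x^{k-1}$ with $x=v^L$, i.e.\ $v\in\{ab^{k-1},\,ba^{k-1}\}$. But you then assert that this ``contradicts the case assumption,'' and conclude that for $k\geq 4$ the inequality is strict whenever $v^-$ is non-constant. That is false: for $v=ab^{k-1}$ one has $v^-=ab^{k-2}$, which is \emph{not} constant for $k\geq 3$, and $\psi(ab^{k-1})=(ab)^{k-1}a$ has length $2k-1$, so $|a\psi(ab^{k-1})b|=2k+1$ and equality \emph{is} attained in this branch. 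There is no contradiction anywhere: $v^-=ab^{k-2}$ lies in $[ab^{k-2}]$, which is exactly the equality class one level down, consistent with $|a\psi(v^-)b|=2k-1$. Note also that your closing identification ``the unique equality cases are exactly those with $v^-$ constant, i.e.\ $v\in[ab^{k-1}]$'' conflates two different sets: only $a^{k-1}b$ and $b^{k-1}a$ have constant $v^-$, while the other two members $ab^{k-1}$ and $ba^{k-1}$ of the class do not. The correct conclusion (as in the paper) is that the non-constant-$v^-$ branch contributes the equality cases $ab^{k-1}$ and $ba^{k-1}$ (since $v_+=\varepsilon$ and $v^-\in[ab^{k-2}]$ are simultaneously realized precisely by these words), and the constant-$v^-$ branch contributes $a^{k-1}b$ and $b^{k-1}a$; together these four words form $[ab^{k-1}]$. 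The repair is short, but as written your argument proves a false intermediate claim and so does not establish the ``only if'' direction.
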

\begin{proof} The proof is by induction on $k$. The result is trivial if $k=2$ and $k=3$. Let  $v\in {\cal A}^k$, $k>3$,  be non-constant.
We first suppose that $v^-$ is not constant.  By   Proposition \ref{prop:+-}, 
 as 
 $|v^-|= k-1$ and $|v_+|\geq 0$, by induction  one has
$$ |a\psi(v)b| = |a\psi(v_+)b|+ |a\psi(v^-)b| \geq  2+2(k-1)+1 = 2k+1.$$ 
Suppose now that $ |a\psi(v)b| = 2k+1$. From the preceding equation one has $2k+1 \geq  |a\psi(v_+)b|+ 2(k-1)+1$, so that $|a\psi(v_+)b|\leq 2$ that implies $v_+= \varepsilon$ and $|a\psi(v^-)b|= 2(k-1)+1$. By induction $v^- \in [ab^{k-2}]$. Since $v_+= \varepsilon$,  one derives that either
$v= ab^{k-1}$ or $v = ba^{k-1}$. 

Let us now suppose that $v^-$ is constant, i.e., $v^-= x^{k-1}$ with $x\in {\cal A}$. One has $v= x^{k-1}y$ with $y= \bar x$ and
$|a\psi(v)b| = |a x^{k-1}yx^{k-1}b| = 2k+1$. It follows that in all cases  the lower bound is reached if and only if $v\in [ab^{k-1}]$.
\end{proof}

 Let us now introduce for each $k\geq 3$ the word $v_k$ as follows: 
  
      $$v_k= 
        \begin{cases}
         ab^2(ab)^{\frac{k-3}{2}},  & \mbox{ if $k$ is odd}; \\
          ab^2(ab)^{\lfloor\frac{k-3}{2}\rfloor}a, &  \mbox{ if $k$ is even}.
          \end{cases} $$ 
  
  Note that each letter of $v_k$ but the second one,  is immediately followed by its complementary. One has that $v_{k+1} = v_ka$ if $k$ is odd and $v_{k+1} = v_kb$ if $k$ is even.
  \begin{lemma}\label{upp:1} For each $k\geq 3$,
  $$ |a\psi(v_k)b| = F_{k+1} - F_{k-4}.$$
  \end{lemma}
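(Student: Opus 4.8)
The plan is to proceed by induction on $k$, using the recursive structure of $v_k$ together with the length formula in Proposition~\ref{prop:+-} and the extremal property of alternating words from Lemma~\ref{lemma:limit}. First I would establish the base case: for $k=3$ one has $v_3 = ab^2$, which is constant-free, and a direct computation gives $\psi(ab^2) = abbabba\cdot\ldots$ — more carefully, $\psi(ab) = aba$, $\psi(abb) = (abab)^{(+)} = ababa$, so $|a\psi(v_3)b| = 7 = F_4 - F_{-1} = 8 - 1$; one checks $F_4 = 8$ and $F_{-1}=1$ in the paper's indexing ($F_{-1}=F_0=1$, $F_1=2$, $F_2=3$, $F_3=5$, $F_4=8$). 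For $k=4$, $v_4 = ab^2a$ and a similar short computation confirms $|a\psi(v_4)b| = F_5 - F_0 = 13 - 1 = 12$.

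For the inductive step I would exploit that $v_{k+1} = v_k a$ when $k$ is odd and $v_{k+1} = v_k b$ when $k$ is even, and in both cases the appended letter is the complement of the last letter of $v_k$ (since $v_k$ ends in $b$ when $k$ is odd and in $a$ when $k$ is even). Hence $(v_{k+1})^- = v_k$ and — this is the key structural observation — $(v_{k+1})_+$ is precisely the alternating word $u_{k-1}$ of length $k-1$, namely the longest prefix of $v_k$ followed by the complement of $v_k$'s last letter; concretely, stripping the anomalous $b^2$ down to a single $b$ turns $v_k$'s relevant prefix into an alternating word. By Proposition~\ref{prop:+-}, $|a\psi(v_{k+1})b| = |a\psi(v_k)b| + |a\psi((v_{k+1})_+)b|$. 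Using Lemma~\ref{lemma:limit} (the alternating case) we have $|a\psi(u_{k-1})b| = F_k$, and by induction $|a\psi(v_k)b| = F_{k+1} - F_{k-4}$, so
\begin{equation*}
|a\psi(v_{k+1})b| = F_{k+1} - F_{k-4} + F_k = F_{k+2} - F_{k-4}.
\end{equation*}
Here I would need $F_{k-4} = F_{(k+1)-5}$, matching the claimed formula $F_{(k+1)+1} - F_{(k+1)-4}$ after reindexing — wait, the target is $F_{k+2} - F_{k-3}$, so in fact the inductive computation must yield that; I would double-check by instead carrying the induction with the sharper identity that $(v_{k+1})_+$ contributes exactly $F_k$ and that the claimed difference telescopes, i.e. verifying $F_{k+2} - F_{k-3} = (F_{k+1} - F_{k-4}) + (F_k - F_{k-4} + F_{k-4})$; the cleanest route is to prove the equivalent pair of identities $F_{k+1} - F_{k-4} = F_k + F_{k-1} + F_{k-3}$ or similar, by repeated application of $F_{n+1} = F_n + F_{n-1}$.

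The main obstacle I anticipate is correctly identifying $(v_{k+1})_+$ and confirming it is the alternating word of the right length in \emph{both} parity cases; the $b^2$ block at position $2$ of $v_k$ is the only place where the "immediately followed by complement" property fails, so one must argue carefully that $(v_{k+1})_+$ — the longest prefix of $v_{k+1}$ immediately followed by the complement of its last letter — coincides with the prefix $ab$ of $v_{k+1}$ extended... actually no: the last letter of $v_{k+1}$ is the complement of the last letter of $v_k$, and $v_k$'s last letter is itself the complement of the second-to-last, so the relevant prefix is $v_k$ minus its last letter but with the $b^2$ unchanged — I would need to recheck that this equals $u_{k-1}$ rather than a near-alternating word, possibly adjusting the statement to use Lemma~\ref{lemma:limit} or an explicit value. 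An alternative, perhaps safer, approach avoiding this delicacy entirely: apply Theorem~\ref{lem:SS} to translate into Stern's sequence and use the identity $\langle b v_k b\rangle$ in closed form (geometric-type sums as for $u_k$ in the excerpt), then verify $s$ of that integer equals $F_{k+1}-F_{k-4}$ via the defining recurrence $s(2n)=s(n)$, $s(2n+1)=s(n)+s(n+1)$ — this reduces everything to a finite symbolic manipulation of binary expansions and Fibonacci recurrences.
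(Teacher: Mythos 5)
Your base cases and the overall inductive framework coincide with the paper's, but the crucial structural identification in the inductive step is wrong, and you yourself observe that the resulting arithmetic does not close. The longest prefix $(v_{k+1})_+$ of $v_{k+1}$ immediately followed by the complement of its last letter is \emph{not} the alternating word $u_{k-1}$: it is $v_{k-1}$, the almost alternating word of length $k-1$ (the anomalous $b^2$ block is untouched by the truncation, so it persists). For instance, with $k=5$ one has $v_6=abbaba$ and $(v_6)_+=abba=v_4$, whereas $u_4=abab$. Consequently the second summand furnished by Proposition~\ref{prop:+-} is not $F_k$ but $|a\psi(v_{k-1})b|=F_k-F_{k-5}$, supplied by the induction hypothesis at level $k-1$ (this is exactly why two base cases are needed). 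With that correction the computation closes:
\[
|a\psi(v_{k+1})b|=\bigl(F_{k+1}-F_{k-4}\bigr)+\bigl(F_k-F_{k-5}\bigr)=F_{k+2}-F_{k-3},
\]
which is the claimed value at $k+1$.

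Your version instead yields $F_{k+2}-F_{k-4}$, too large by $F_{k-5}$, and the ``telescoping'' identity you propose as a patch, namely $F_{k+2}-F_{k-3}=(F_{k+1}-F_{k-4})+F_k$, is false (it would force $F_{k-3}=F_{k-4}$). So the argument as written does not go through; the missing idea is precisely the identification $(v_{k+1})_+=v_{k-1}$, which must be checked in both parity cases (in each case $v_{k+1}$ ends in $xy$ with $y=\bar x$, so $(v_{k+1})_+$ is $v_{k+1}$ with its last two letters removed, i.e.\ $v_k^-=v_{k-1}$). Your alternative route via Theorem~\ref{lem:SS} and the closed form of $\langle bv_kb\rangle$ is plausible in principle but is only sketched, and it would require comparable bookkeeping to carry out.
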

  \begin{proof} The proof is by induction on the value of $k$. The result is true for $k=3$ and $k=4$. Indeed,
  $|a\psi(ab^2)b| = 7 = F_4-F_{-1}$ and $|a\psi(ab^2a)b|= 12= F_5-F_0$. Let us take $k>4$.  One has
  $\psi(v_{k+1})= \psi(v_kx)$ with $x=a$ if $k$ is odd and $x=b$ if $k$ is even. In both cases one has
  $(v_{k+1})^- = v_k$ and $(v_{k+1})_+ = v_{k-1}$, so by Proposition \ref{prop:+-} and using the inductive hypothesis
$$ |a\psi(v_{k+1})b| = |a\psi(v_k)b| + |a\psi(v_{k-1})b| = F_{k+1}- F_{k-4} + F_k -F_{k-5} = F_{k+2}-F_{k-3}, $$
  which proves the assertion.
  \end{proof}

We say that a word $v\in {\cal A}^k$ is {\em almost alternating} if $v\in  [v_k]$.
  
  \begin{theorem}\label{prop:noalt} Let $k\geq 3$. If $v\in {\cal A}^k$ is not alternating, then 
  $$ |a\psi(v)b| \leq F_{k+1} - F_{k-4},$$
  where the upper bound is reached if and only if  $v$ is almost alternating.
  \end{theorem}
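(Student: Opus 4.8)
The plan is to argue by induction on $k$, using the recursive decomposition $|a\psi(v)b| = |a\psi(v_+)b| + |a\psi(v^-)b|$ from Proposition~\ref{prop:+-} together with the upper bounds of Lemma~\ref{lemma:limit} ($|a\psi(u)b|\le F_{m+1}$ for $u\in\Aa^{m}$) and the value $|a\psi(v_k)b| = F_{k+1}-F_{k-4}$ from Lemma~\ref{upp:1}. The base cases $k=3,4$ (and perhaps $k=5,6$, to get the recursion started cleanly) are handled by direct inspection: for each non-alternating $v$ of that length one checks $|a\psi(v)b|\le F_{k+1}-F_{k-4}$, with equality exactly for $v\in[v_k]$; since $[v]$-classes have equal Christoffel length by Proposition~\ref{prop:basicp}, only a handful of representatives need to be examined.

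For the inductive step, take $v\in\Aa^{k}$, $k$ large, non-alternating, and split according to whether $v^-$ is alternating. If $v^-$ is \emph{not} alternating, then $|a\psi(v^-)b|\le F_{k}-F_{k-5}$ by the inductive hypothesis and $|a\psi(v_+)b|\le F_{|v_+|+1}\le F_{k-1}$ (since $v_+$ is a proper prefix of $v^-$, so $|v_+|\le k-2$); adding gives $|a\psi(v)b|\le F_{k}-F_{k-5}+F_{k-1} = F_{k+1}-F_{k-5}< F_{k+1}-F_{k-4}$, so in this subcase the bound holds with room to spare and equality cannot occur — meaning any extremal $v$ must have $v^-$ alternating. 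If $v^-$ \emph{is} alternating, then $v^-\in\{u_{k-1},\bar u_{k-1}\}$, i.e.\ $v$ is obtained from an alternating word by appending one letter; since $v$ itself is not alternating, that appended letter repeats the last letter of $v^-$, which (after possibly replacing $v$ by a word in $[v]$) forces $v$ to have the shape $ab^{2}(ab\cdots)$ or a closely related pattern. Here $v^- = u_{k-1}$ gives $|a\psi(v^-)b| = F_{k}$ (the extremal alternating value), and one must control $|a\psi(v_+)b|$: since $v$ ends in a repeated letter, $v_+$ is the prefix of $v^-$ ending just before that repetition, which is itself an alternating word whose length I need to pin down, yielding $|a\psi(v_+)b| = F_{k-4}$ — wait, more carefully, $v_+$ has length $k-3$ or so, and its $\psi$-Christoffel length is the corresponding Fibonacci number; assembling these gives exactly $F_{k+1}-F_{k-4}$, with equality tracked back through Lemma~\ref{upp:1} to $v\in[v_k]$.

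The main obstacle is the bookkeeping in the subcase where $v^-$ is alternating: one must carefully identify, up to the equivalence $[v]$, exactly which one-letter extensions of an alternating word are themselves non-alternating, compute $v_+$ and its length in each case, and verify both the inequality and the equality characterization. This requires being precise about the parity of $k$ (the definition of $v_k$ branches on parity, and $v_{k+1}=v_ka$ or $v_kb$ accordingly), and about the fact that appending to $u_{k-1}$ in the ``wrong'' spot versus appending at the end can give $[v]$-equivalent words. A clean way to organize this is: first prove the weaker statement that any extremal $v$ must satisfy ``$v^-$ alternating'' (done above), then among words with $v^-$ alternating, enumerate the non-alternating ones up to $[v]$, observe there is essentially one family, identify it with $[v_k]$ using $v_{k+1}=v_kx$, and read off the value from Lemma~\ref{upp:1}. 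The Fibonacci identity $F_{k+1}-F_{k-5}<F_{k+1}-F_{k-4}$ and $F_{k}+F_{k-4}=\cdots$ checks are routine and need not be spelled out in detail.
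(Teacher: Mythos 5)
There is a genuine gap in the first subcase of your inductive step, and it propagates through the rest of the argument. You split on whether $v^-$ is alternating, and when it is not you combine $|a\psi(v^-)b|\le F_{k}-F_{k-5}$ (inductive hypothesis) with the crude bound $|a\psi(v_+)b|\le F_{k-1}$ coming from $|v_+|\le k-2$, obtaining $F_{k+1}-F_{k-5}$. But since $F_{k-5}<F_{k-4}$ one has $F_{k+1}-F_{k-5}>F_{k+1}-F_{k-4}$: your displayed inequality is reversed, and the target bound is \emph{not} established in this subcase. The error is not merely arithmetic bookkeeping, because the conclusion you extract from it --- that any extremal $v$ must have $v^-$ alternating --- is false. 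For instance $v_5=ab^2ab$ is extremal, with $|a\psi(v_5)b|=19=F_6-F_1$, yet $v_5^-=ab^2a=v_4$ is not alternating; in fact for every $k\ge 5$ the extremal word $v_k$ satisfies $(v_k)^-=v_{k-1}$, which is almost alternating but not alternating. Consequently the second half of your argument, which only enumerates one-letter extensions of alternating words, misses the extremal family entirely and cannot deliver the equality characterization.

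The paper's proof avoids this trap by splitting on the last two letters of $v=wxy$ rather than on whether $v^-$ is alternating. If $y=x$, then $v_+$ is a \emph{proper} prefix of $w$, so $|v_+|\le k-3$ and $|a\psi(v)b|\le F_k+F_{k-2}=F_{k+1}-F_{k-3}<F_{k+1}-F_{k-4}$, closing that case strictly. If $y=\bar x$, then $v^-=wx$ and $v_+=w$, and the inductive hypothesis is applied to \emph{both} pieces: $wx$ is not alternating (else $v$ would be), giving $|a\psi(wx)b|\le F_k-F_{k-5}$, and if $w$ is also not alternating then $|a\psi(w)b|\le F_{k-1}-F_{k-6}$, so the sum is exactly $F_{k+1}-F_{k-4}$; the case where $w$ is alternating is the one that is identified explicitly with $[v_k]$ via Lemma~\ref{upp:1}. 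To repair your argument you would need to adopt a case split of this kind, or otherwise obtain a bound of order $F_{k-2}$ (not $F_{k-1}$) on $|a\psi(v_+)b|$ in the subcase where $v^-$ is not alternating, which in turn requires applying induction to $v_+$ as well and tracking when both summands can be simultaneously extremal.
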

  \begin{proof} The proof is by induction on the integer $k$. If $k=3$, then if $ v\neq xyx$ with $\{x, y\}= \{a, b\}$ then
  $|a\psi(v)b| \leq 7 = F_4- F_{-1}$ and the upper bound is reached if and only if $v \in \{ab^2, b^2a, ba^2, a^2b\}$.
  If $k=4$ and $v$ is not alternating, then $|a\psi(v)b| \leq 12= F_5-F_0$ and the maximal value is reached if
  and only if  $v\in \{ab^2a, ba^2b\}$.
  
  Let us now consider the word $vxy$ with $x,y\in {\cal A}$ and $|vxy|= k >4$ and first prove that
  if $vxy$ is not alternating, then  $|a\psi(vxy)b| \leq F_{k+1} - F_{k-4}$. Let us first suppose that $y=x$. 
  
  If $v=x^{k-2}$,
  then $|a\psi(vxx)b| = |ax^kb|= k+2 < F_{k+1}-F_{k-4}$ and we are done.  If $v\neq x^{k-2}$, then $vxx$ is not constant, so that  by Proposition \ref{prop:+-}
  one has
  $$|a\psi(vxx)b| = |a\psi(vx)b| + |a\psi(v')b|,$$
  where $v'= (vxx)_+$ is a prefix of $v$ of length  $<  k-2$. By Lemma \ref{lemma:limit} one has $|a\psi(vx)b| \leq F_{k}$ and
  $|a\psi(v')b|\leq F_{k-2}$. Thus, since $k>4$, one has 
  \begin{equation} \label{eq:xx}
   |a\psi(vxx)b| \leq F_k + F_{k-2} = F_{k+1}-F_{k-3} <  F_{k+1}-F_{k-4}.
  \end{equation}
  
  Let us now suppose $x\neq y$. By Proposition \ref{prop:+-} one has
  \begin{equation}\label{eq:alter}
    |a\psi(vxy)b| = |a\psi(vx)b| + |a\psi(v)b|.
    \end{equation}
    Since $vxy$ is not alternating, so will be $vx$. By induction $|a\psi(vx)b| \leq F_k -F_{k-5}$. 
    
    If $v$ is not alternating,
    then, by induction,  $|a\psi(v)b| \leq F_{k-1}- F_{k-6}$. Hence, $|a\psi(vxy)b|\leq F_k -F_{k-5}+ F_{k-1}- F_{k-6} = F_{k+1}-F_{k-4}$
    and we are done.
    
  If  $v$ is alternating, then as $vx$ is not alternating, the only possibility is $v= (yx)^{\frac{k-2}{2}}$ if $k$ is even
    and $v= (xy)^{\lfloor \frac{k-2}{2}\rfloor}x$ if $k$ is odd. Hence, if $k$ is even, $vxy= y(xy)^{\frac{k-4}{2}}x^2y$ and
    if $k$ is odd, $vxy= (xy)^{\lfloor \frac{k-2}{2}\rfloor}x^2y$.  In both the cases  $vxy \in [v_k]$.
    By Lemma \ref{upp:1}, $|a\psi(vxy)b| = F_{k+1}-F_{k-4}$. Thus the first assertion is proved.
    
    In view of Lemma \ref{upp:1} it remains to prove that if $|a\psi(vxy)b| = F_{k+1}-F_{k-4}$, then $vxy \in [v_k]$. In this case, in view of (\ref{eq:xx}), necessarily
    $x\neq y$. If $v$ is alternating, as we have previously seen, $|a\psi(vxy)b| $ reaches its maximal value and $vxy \in [v_k]$. If $v$ is not alternating one has to require that both $|a\psi(vx)b|$ and $ |a\psi(v)b|$ reach their maximal values. By induction this occurs
    if and only if $vx\in [v_{k-1}]$ and $v\in [v_{k-2}]$. If $k$ is odd, the only possibility is $v= xy^2(xy)^{\frac{k-5}{2}}$, so that
    $vxy \in [v_k]$. If $k$ is even, then necessarily  $v= yx^2(yx)^{\lfloor \frac{k-5}{2} \rfloor}y$, so that also in this case $vxy \in [v_k]$.
  \end{proof}
  
  One easily verifies that for each $k \geq 3$, $\langle bv_kb \rangle = \frac{17\cdot 2^{k-1} + (-1)^{k+1}}{3}$, $\langle b{\bar v}_kb \rangle = \frac{19\cdot 2^{k-1} + (-1)^{k}}{3}$, $\langle bv^{\sim}_kb \rangle = \frac{7+ 2^{k}(9 + (-1)^{k+1})}{3}$, and $\langle b{\bar v}^{\sim}_kb \rangle = \frac{-7+ 2^{k}(9 + (-1)^{k})}{3}$. 
  
   By  the preceding theorem
  and Lemma \ref{lemma:limit}, one derives some identities on Stern's sequence. For instance,  for any $k\geq 3$ 
  
  $$ s\left ( \frac{17 \cdot 2^{k-1}+ (-1)^{k+1}}{3}\right ) =  s\left ( \frac{2^{k+3}+ (-1)^{k+2}}{3}\right )-  s\left ( \frac{2^{k-2}+ (-1)^{k-3}}{3}\right ).$$
  \begin{remark}
 {\em  Let us observe that setting  $R_k= \frac{17\cdot 2^{k-1} + (-1)^{k+1}}{3}$ for each $k\geq 1$, one has $R_1=6$ and $R_{k+1}= 2R_k +(-1)^k$. Similarly, if one defines inductively the three sequences $(S_k)_{k>0}$, $(T_k)_{k>0}$, and $(U_k)_{k>0}$ respectively as
  $S_1=6, T_1= 9, U_1=3$ and for $k\geq 1$
  $$ S_{k+1}= 2S_{k}+ (-1)^{k+1}, \ T_{k+1}= T_k + (3+(-1)^{k})2^k, \ \ U_{k+1}= U_k + (3+(-1)^{k+1})2^k ,$$
  one has that for $k\geq 3$, $\langle b{\bar v}_kb \rangle = S_k, \langle bv^{\sim}_kb \rangle = T_k,$ and $\langle b{\bar v}^{\sim}_kb \rangle=U_k$.}
  \end{remark}

 For any $k\geq 0$ and $n\geq 0$ let  $C_k(n)$ denote the number of Christoffel words of length $n$ and order $k$.  By  \cite[Lemma 5]{DM}, and since the
palindromization map is injective, one has
\begin{equation}\label{eq:cicappa}
\sum_{k\geq 0} C_k(n) = \phi(n)  \ \mbox{and} \ \  \sum_{n \geq 0} C_k(n) = 2^k,
\end{equation}
where $\phi$ is the Euler  totient  function. By  Lemma \ref{lemma:limit} one has that  $C_k(n)=0$ for  $n<k+2$ and for $n > F_{k+1}$.
Moreover, from Proposition \ref{prop:noconst} and  Theorem~\ref{prop:noalt}  one has that
$C_k(n)= 0$ for $k+2<n < 2k+1$ and for  $F_{k+1}- F_{k-4} <n < F_{k+1}$.

For each $k\geq 0$ we introduce the set of the {\em missing lengths} of order $k$
$$ ML_k = \{ n \mid  k+2 \leq n \leq F_{k+1} \  \mbox{and} \  C_k(n) = 0\}. $$
The first values of $\card(ML_k)$, $0\leq k \leq 20$, are reported below
$$0, 0, 1, 2, 5, 11, 18, 29, 51, 74, 119, 195, 323, 498, 828, 1361, 2289,3801, 6305, 10560. $$

\noindent By  Proposition \ref{prop:noconst} and Theorem \ref{prop:noalt}  one has that for $k\geq 3$
$$\card (ML_k) \geq F_{k-4} + k-3.$$
Since for large $k$, one has $F_{k-4} \approx  \frac{g^{k-2}}{\sqrt 5}$, where $g$ is the
 golden number  $g = \frac{1+ \sqrt 5}{2}= 1.618\cdots$, it follows
that the lower bound to $\card (ML_k)$ is exponentially increasing with $k$.
Moreover, one easily derives that  $$\liminf_{k\rightarrow \infty} \frac{\card (ML_k)}{F_{k+1}} \geq  \frac{1}{g^5}.$$

For each $k\geq 0$  let  $L_k$ denote the set of all  lengths  of  Christoffel words of order $k$. 

 The following lemma shows  that for  $k \geq 2$ there exist  Christoffel words of order $k$ having 
lengths which are consecutive integers. Some examples are given in the following lemma.
\begin{lemma} For $k\geq 2$ one has $ 3k-2, 3k-1 \in L_k$ and for $k\geq 3$, one has $5k-8, 5k-7 \in L_k$.
\end{lemma}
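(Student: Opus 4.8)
I would work entirely with the pair of periods. Recall that by \eqref{eq:cent1} one has $|a\psi(v)b|=|\psi(v)|+2=p_a(v)+p_b(v)$, that $p_a(\varepsilon)=p_b(\varepsilon)=1$ by \eqref{eq:muvi}, and that Lemma~\ref{periods} describes how $(p_a,p_b)$ changes when a letter is appended on the right: appending $x$ leaves $p_x$ unchanged and replaces $p_{\bar x}$ by $p_a+p_b$. Iterating this, a one-line induction on $\ell$ gives, for every $w\in{\cal A}^*$ and $\ell\geq 0$,
$$p_a(wa^{\ell})=p_a(w),\quad p_b(wa^{\ell})=p_b(w)+\ell\,p_a(w),$$
and symmetrically $p_a(wb^{\ell})=p_a(w)+\ell\,p_b(w)$, $p_b(wb^{\ell})=p_b(w)$. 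This is the only tool needed; the rest is exhibiting four suitable directive words of length $k$ and evaluating $(p_a,p_b)$ on them.

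\textbf{The four words.} For the length $3k-2$ ($k\geq 2$) I would take $v=a^{k-2}b^2$: starting from $(p_a,p_b)(\varepsilon)=(1,1)$ one gets $(p_a,p_b)(a^{k-2})=(1,k-1)$, and appending $b^2$ yields $(p_a,p_b)(v)=(1+2(k-1),\,k-1)=(2k-1,k-1)$, so $|a\psi(v)b|=3k-2$. For $3k-1$ take $v=a^{k-2}ba$: from $(p_a,p_b)(a^{k-2}b)=(k,k-1)$, appending $a$ gives $(k,2k-1)$, hence length $3k-1$. For $5k-8$ ($k\geq 3$) take $v=ab^2a^{k-3}$: one computes $(p_a,p_b)(ab^2)=(5,2)$, then $(p_a,p_b)(v)=(5,\,2+5(k-3))=(5,5k-13)$, so the length is $5k-8$. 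Finally for $5k-7$ take $v=baba^{k-3}$: here $(p_a,p_b)(bab)=(5,3)$, then $(p_a,p_b)(v)=(5,5k-12)$, giving length $5k-7$. In all four cases $|v|=k$, so the stated values lie in $L_k$.

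\textbf{Obstacle and sanity checks.} There is no real difficulty beyond guessing these directive words; the only routine part is the small base computations of $(p_a,p_b)$ on $a^{k-2}b$, $ab^2$, $bab$ sketched above, and checking that the exponents $k-2$, $k-3$ are nonnegative exactly in the ranges $k\geq 2$, $k\geq 3$ (the boundary words being $b^2,ba$ for $k=2$ and $ab^2,bab$ for $k=3$, which do give lengths $4,5,7,8$). As an independent check one can read the values off the Stern sequence via $|a\psi(v)b|=s(\langle bvb\rangle)$ (Theorem~\ref{lem:SS}); for $k=4$ the four words are $aabb,aaba,abba,baba$ with $\langle bvb\rangle=39,37,45,53$ and $s=10,11,12,13=3k-2,3k-1,5k-8,5k-7$.
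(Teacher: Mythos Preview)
Your proof is correct and follows essentially the same approach as the paper: exhibit four explicit directive words of length $k$ and verify the resulting Christoffel lengths. The paper uses $a^{2}b^{k-2}$, $aba^{k-2}$, $ab^{2}a^{k-3}$, $abab^{k-3}$, which differ from your choices only by reversal or complementation (hence give the same lengths by items P5--P6 of Proposition~\ref{prop:basicp}); your period-pair computation via Lemma~\ref{periods} simply makes explicit the ``one easily verifies'' of the paper.
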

\begin{proof} One easily verifies that for $k\geq 2$, one has
$|a\psi(a^2b^{k-2})b| = 3k-2$ and $|a\psi(aba^{k-2})b|= 3k-1$.  For $k\geq 3$, one has $|a\psi(ab^2a^{k-3})b|= 5k-8$ and $|a\psi(abab^{k-3})b|= 5k-7$. 
\end{proof}

For each $k\geq 1$, we set
$$ M_k = \max \{C_k(n) \mid n\geq 0\}.$$
The first values of $M_k$, $0< k \leq 22$ and the values $n_k$ for which $C_k(n_k)= M_k$ 
 are reported in the following table:

\vspace{10 mm}

\begin{center}

 \begin{tabular}{l|l|ll||l|ll|ll}
 $ k $     &  $M_k $ & $n_k$ & &  $k$ & $M_k$ & &$n_k$ & \\ \hline
1 & 2& 3 &  &12 &36 & & 199, 283\\
2 &  2 & 4,5 & &13 & 48 & & 449 \\
3 & 4& 7 & & 14 & 64 & & 433\\
4 & 4& 9, 11& & 15 & 72 & & 839\\
5 & 4& 11,13,14,17,18,19& &16 & 102 & & 1433\\
6 & 8 & 23 & &17& 124 & & 1997\\
7 &12& 41& & 18 & 160 & & 1987\\
8 & 12 & 43 & & 19 & 212 & & 3361\\
 9 &16& 71,73,83 & & 20 &256 & & 5557\\
 10 & 24& 113 & & 21& 332& & 8689\\
 11 &28 &  227& & 22 & 444 && 8507
\end{tabular}

\end{center}

\vspace{10 mm}

\begin{proposition}  $\lim_{k\rightarrow \infty} M_k = \infty. $
\end{proposition}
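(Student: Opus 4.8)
The plan is to establish the stronger statement that $M_k$ grows exponentially, by a simple pigeonhole count. First I would recall from~(\ref{eq:cicappa}) that $\sum_{n\geq 0}C_k(n)=2^k$; equivalently, since the palindromization map is injective, there are exactly $2^k$ distinct Christoffel words of order $k$, namely the words $a\psi(v)b$ with $v\in{\cal A}^k$. By Lemma~\ref{lemma:limit}, each such word has length in the set $\{k+2,k+3,\ldots,F_{k+1}\}$, which contains exactly $F_{k+1}-k-1$ integers. Consequently $C_k(n)\neq 0$ for at most $F_{k+1}-k-1$ values of $n$, and these nonzero values sum to $2^k$; hence, by the pigeonhole principle,
$$M_k=\max_{n\geq 0}C_k(n)\ \geq\ \frac{2^k}{F_{k+1}-k-1}\ \geq\ \frac{2^k}{F_{k+1}}.$$

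Next I would bound $F_{k+1}$ by a power of the golden number $g=\frac{1+\sqrt 5}{2}$, which is strictly less than $2$. A one-line induction on $n$, using $F_{n+1}=F_n+F_{n-1}$ together with $g^2=g+1$ and the base cases $F_{-1}=F_0=1\leq g$, shows that $F_n\leq g^{n+2}$ for all $n\geq -1$ (one could instead quote the classical asymptotics $F_n\sim g^{n+2}/\sqrt 5$). Plugging this in gives
$$M_k\ \geq\ \frac{2^k}{g^{k+3}}\ =\ g^{-3}\left(\frac{2}{g}\right)^{k},$$
and since $2/g=\sqrt 5-1>1$ the right-hand side increases exponentially with $k$; in particular $\lim_{k\to\infty}M_k=\infty$.

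I do not expect any real obstacle here: the whole argument is the pigeonhole estimate above. The only point needing (minimal) attention is the comparison of $F_{k+1}$ with a power of base $<2$, since this is precisely what makes the pigeonhole bound diverge rather than remain bounded. I would also remark that the sharper restrictions on lengths coming from Proposition~\ref{prop:noconst} and Theorem~\ref{prop:noalt} (which force all of $ML_k$ to be missing) only improve the constant in the exponential lower bound and not its rate, so they are not needed for the limit statement; mentioning them would nonetheless make the quantitative claim ``$M_k$ has an exponentially increasing lower bound'' more transparent.
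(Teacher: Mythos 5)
Your argument is correct and is essentially identical to the paper's: both use $2^k=\sum_n C_k(n)\leq M_k F_{k+1}$ together with $F_{k+1}<g^{k+3}$ to get $M_k\geq g^{-3}(2/g)^k\to\infty$. The only cosmetic difference is that you count the admissible lengths slightly more carefully ($F_{k+1}-k-1$ instead of $F_{k+1}$) and prove the Fibonacci bound by induction rather than citing Binet's formula.
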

\begin{proof} By Lemma \ref{lemma:limit} and (\ref{eq:cicappa}) one has
$$ 2^k =  \sum_{n \geq 0} C_k(n) = \sum_{n \geq k+2}^{F_{k+1}} C_k(n) \leq M_k F_{k+1}.$$
By Binet's formula of Fibonacci numbers one has that  $F_{k+1} < g^{k+3}$. Hence, $M_k \geq  \frac{1}{g^3} ( \frac{2}{g})^k$.
From this the result follows.
\end{proof}
From the proof of previous proposition one has that  $M_k$ has a lower bound which is exponentially increasing with $k$, whose values are much less than those of $M_k$ given in the table above. It would be interesting to find tight lower and upper bounds for $M_k$ and possibly a formula to compute its values and also the values of the lengths of Christoffel words for which $C_k(n)$ is equal to $M_k$. Moreover, from the table one has that for all $0<k<22$, $M_k$ is non-decreasing with $k$ and  $M_{k+1} \leq M_k+ M_{k-1}$. We conjecture that this is true
for all  $k$.

 \small

\end{document}